\theoremstyle{definition}
\newtheorem{corollary}{Corollary}
\newtheorem{definition}{Definition}
\newtheorem{lemma}{Lemma}
\newtheorem{proposition}{Proposition}
\newcommand\citeapos[1]{\citeauthor{#1}'s (\citeyear{#1})}
\DeclareMathOperator\erf{erf}
\DeclareFontFamily{U}{mathx}{\hyphenchar\font45}
\DeclareFontShape{U}{mathx}{m}{n}{
      <5> <6> <7> <8> <9> <10>
      <10.95> <12> <14.4> <17.28> <20.74> <24.88>
      mathx10
      }{}
\DeclareSymbolFont{mathx}{U}{mathx}{m}{n}
\DeclareMathAccent{\widecheck}{0}{mathx}{"71}
\newcommand{\RR}{\mathbb{R}}
\newcommand{\NN}{\mathbb{N}}
\newcommand{\EE}{\mathbb{E}}
\newcommand{\dd}{\mathrm{d}}
\newcommand{\ee}{\mathrm{e}}
\newcommand{\Action}{\mathcal{A}}
\newcommand{\vl}{\,\vert\,}
\newcommand{\prob}{\mathrm{Pr}}
\newcommand{\invest}{\mathcal{E}}
\newcommand{\notinvest}{\mathcal{N}}
\newcommand{\one}{\mathbb{1}}
\newcommand{\thetazerostar}{\theta_L^*}
\newcommand{\thetazerohat}{\widehat{\theta}_L}
\newcommand{\xhhat}{\widehat{x}_h}
\newcommand{\xistar}{x_{\invest}^*}
\newcommand{\xnstar}{x_{\notinvest}^*}
\newcommand{\thetalow}{\underline{\theta}_L}
\newcommand{\thetaup}{\overline{\theta}_L}
\newcommand{\xilow}{\underline{x}_\invest}
\newcommand{\xiup}{\overline{x}_\invest}
\newcommand{\xnlow}{\underline{x}_\notinvest}
\newcommand{\xnup}{\overline{x}_\notinvest}
\newcommand{\xhlow}{\underline{x}_h}
\newcommand{\xhup}{\overline{x}_h}
\newcommand{\br}{\mathrm{BR}}
\newcommand{\xlow}{\underline{x}_L}
\newcommand{\xup}{\overline{x}_L}
\begin{document}

\title{It's Not Always the Leader's Fault: How Informed Followers Can Undermine Efficient Leadership\thanks{~An earlier draft of this paper was titled ``The Signaling Role of Leaders in Global Games". We owe special thanks to Alessandro Pavan for his support and guidance throughout the writing process. We also wish to thank George-Marios Angeletos, Sandeep Baliga, Georgy Egorov, Stephen Morris, Wojciech Olszewski, Harry Pei, Alvaro Sandroni, Marciano Siniscalchi, Zhen Zhou as well as conference participants at the 34th Stony Brook International Conference on Game Theory, the C.R.E.T.E. 2022 conference and various seminar audiences at Northwestern University for valuable comments and suggestions.}} 

\author{Panagiotis Kyriazis\thanks{~Corresponding author. Department of Economics, Northwestern University. Email: pkyriazis@u.northwestern.edu.}
\and Edmund Y. Lou\thanks{~Department of Economics, Northwestern University. Email: edmund.lou@u.northwestern.edu.}}
\date{\today}

\maketitle
\thispagestyle{empty}

\begin{abstract}
Coordination facilitation and 
efficient decision-making are two essential components of successful leadership. In this paper, we take an informational approach and investigate how followers' information impacts coordination and efficient leadership in a model featuring a leader and a team of followers. We show that efficiency is achieved as the unique rationalizable outcome of the game when followers possess sufficiently imprecise information. In contrast, if followers have accurate information, the leader may fail to coordinate them toward the desired outcome or even take an inefficient action herself. We discuss the implications of the results for the role of leaders in the context of financial fragility and crises. \\

\noindent Keywords: Global games, leadership, coordination.

\noindent JEL Classification: C72, C73, D83.
\end{abstract}

\newpage
\pagenumbering{arabic}

\section{Introduction}

Coordination problems are pervasive in a wide range of
socio-economic phenomena, including but not limited to corporate culture, bank runs, currency attacks and political change. 
In many of these scenarios, a leader is responsible for taking the correct action and incentivizing the followers to do the same.
This task is often challenging due to the inherent uncertainty of certain economic environments and the positive spillover effect of coordination that typically benefits everyone involved more than each actor separately. 
Leaders can potentially succeed in their role by \textit{leading by example}. Exerting costly effort or taking a risky action can signal to the followers the high return potential of a given course of action. 
However, this may not be enough to incentivize the followers to take the correct action due to the dispersed information about the economic environment that creates strategic uncertainty about what each follower will eventually do. This strategic uncertainty that agents face concerning the actions and beliefs of others may lead to undesirable outcomes. It is, therefore, natural to ask under what conditions coordination can be achieved and what are the specific characteristics of the leader and the followers that facilitate or undermine it. This paper explores this question from an informational perspective and argues that precise information held by followers can potentially undermine a leader's ability to coordinate the followers on a mutually desirable course of action.

Our framework focuses on a leader who is perfectly informed about the state of nature, and a team of followers who have access to private information. The leader leads by example---she makes the first move, which the followers observe.
Then, they simultaneously make their choices, incorporating the information that becomes available due to the leader's action. Our model, thus, features signaling on the part of the leader. It is important to note that the leader is not special in any other way besides acting first and having an informational advantage. We do not consider issues of leader's credibility or special characteristics that might make her a better or worse leader. We take this route in order to isolate the effect that is solely due to information. Our main finding is that if the followers' information is precise enough, efficient leadership may not be attainable. More importantly, the leader might act inefficiently herself due to fear of miscoordination by the followers. Conversely, if the followers have sufficiently imprecise information, efficient coordination can be sustained as the unique rationalizable outcome.

Formally, our game is a two-stage game with binary, irreversible actions that features strategic complementarities both within and across stages. We use $\Delta$-rationalizability as our solution concept, which extends extensive-form rationalizability \`{a} la \cite{pearce_1984} to games with incomplete information \citep{battigalli_siniscalchi_2003}.  We fully characterize the rationalizable set for the leader and followers, that is, the set of strategies that are rationalizable for certain types of each player. Our main results identify necessary and sufficient conditions about the noise of followers' information that are responsible for uniqueness or multiplicity of rationalizable play. When rationalizable play is unique, the leader succeeds in inducing the followers to take the correct action. When this is not the case, followers might play against the leader which in turn may make the leader choose an undesirable action despite being perfectly informed about the state. We note that our results would also obtain under alternative solution concepts such as interim sequential rationalizability \citep{penta_2012} or interim correlated rationalizability applied to the normal form of the game \citep{dekel_et_al_2007, chen_2012}. 

Our main result is driven by the tension between two opposing effects: the signaling effect of the leader's choice on the followers and the miscoordination effect that arises from the followers' dispersed information.
The leader's action reveals some information to the followers and can influence their decisions.
However, each follower is uncertain about which leader's type chose the observed action and/or about the types and actions of other followers, which leads to the miscoordination effect.
Knowing this, the leader may choose the undesired action in the first place. The signaling effect is captured by the truncation of the conditional distribution of the state that followers deem possible after observing the leader's choice. The miscoordination effect is captured by the conditional higher-order beliefs a follower holds about the state and the beliefs of other followers. More formally, this effect is captured by the \textit{conditional rank belief function}, a generalization of the rank belief function introduced in \cite{morris_et_al_2016} and \cite{morris_yildiz_2019}. This function yields the probability that a follower assigns to the event that another follower has received a signal about the state less than his own, which, in our environment, depends on the behavior of the leader. Specifically, we show that the conditions that ensure unique rationalizable play essentially amount to a bound on the derivative of the conditional rank belief function that guarantees that the signaling effect dominates the miscoordination effect.

Dominance of the signaling effect over the miscoordination effect means that each follower wants to imitate the leader irrespective of his signal and his beliefs about the actions of other followers. A necessary condition for this to be the case is that the leader is well-informed. If this is not true, then a follower may worry about potential miscoordination. The tension comes because both these forces depend on the leader's behavior but in opposite ways: more ``aggressive'' (and, thus, more efficient in the case where the state is positive) behavior by the leader weakens the signaling effect and strengthens the miscoordination effect by increasing the strategic uncertainty that followers face. This effect undermines coordination on the correct course of action on the part of followers. The only way it can be overcome, so that the efficient action of the leader and followers is supported as the unique rationalizable behavior, is for the noise in the followers' information to be sufficiently high. If this is not the case, the miscoordination effect dominates: even if it is common knowledge which action is the efficient one given the state, followers may still choose not to take this action. That inefficient actions for the followers are rationalizable makes the choice of the leader not to take the efficient action in the first place rationalizable as well. In the language of financial economics, this inefficient outcome may prevail solely due to panic, that is, solely due to the fear of miscoordination and not due to fundamentals being weak. 

We extend our results to a setting where the leader is not perfectly informed about the state but has access to private information like the followers. The novel effect is that this uncertainty “noises up” the learning induced by the action of the leader: whereas in the benchmark model this knowledge leads to a truncation in the support of posterior beliefs about the state, now posterior beliefs retain full support over the entire real line. This extension, which from an applied point of view is clearly more natural, establishes the generality of our result because it restores two-sided dominance, a property that is absent in our main model but is key to the global games literature, to which this paper broadly belongs. 

While our model is very stylized, it plays the role of a metaphor that can be applied to a variety of situations. As we stated, it features strategic complementarities and does not allow for free riding, something that is usually central in organizational settings or public good provision games. However, by excluding this issue, we can analyze more diverse phenomena, in which there might not be a leader, in the standard use of the term, but a player whose action is visible to smaller players. What is important for our results, is the visibility of this player's action that grants them the leadership role. The leader, thus, can be a nation that initiates an environmentally friendly policy in the hope that more countries will follow or the manager of a firm that wishes to induce her employees to exert costly effort when her only instrument is her own choice to work or shirk. It can also be a prominent investor contemplating whether to attack a currency or not or whether to roll over debt or not. Or, the leader can be a vanguard in revolution. By attacking the regime first, she can inspire other citizens to do so. As we point out, the (application-specific) ideal outcome from the point of view of the leader is supported as the unique rationalizable outcome only when the conditions we identify are satisfied. As an example, we discuss the implications of the forces involved in our model for financial stability and panic-driven crises. Specifically, we compare the predictions of our model to results in the finance literature according to which the leader can achieve her desired outcome. This need not be the case in our framework. In particular, when followers' information is precise, followers may coordinate against the leader, resulting in a negative payoff for her. This makes the presence of the leader less influential, in the sense that she may not take the action that (de)stabilizes the economy even if the fundamentals perfectly justify doing so.
Of course, our results are not to be interpreted as suggesting that followers should always be ``kept in the dark.'' While this facilitates coordination and efficient leadership in our setting, there are many situations in which the leader might want to learn from the followers.

\subsection{Related Literature}

Our paper contributes to two strands of literature. Primarily, it adds to the literature on global games \citep{carlsson_van_damme_1993, morris_shin_2003} by characterizing rationalizable outcomes when one of the players moves first and her action is observable by the others. \cite{corsetti_et_al_2004} is the most closely related paper as they investigate a large player's role in currency attacks. Although the settings are different, our results complement theirs in identifying the precise conditions under which efficient coordination is the unique rationalizable outcome. We discuss in detail how our paper relates to \cite{corsetti_et_al_2004} in Section 6. 

\cite{loeper_et_al_2014} study the role of experts whose actions can influence agents' behavior and show that the outcome is biased towards experts' interests. This feature is also present in our paper when there is unique rationalizable play. The difference between the two models is that in \cite{loeper_et_al_2014}, the experts' preferences differ from those of the agents, whereas in our model, the leader and followers share the same objective. Moreover, experts' actions in \cite{loeper_et_al_2014} have no spillover effect and do not affect directly the outcome of the game. Finally, while the main question of interest in \cite{loeper_et_al_2014} is how experts influence agents' actions, we ask how information on the part of followers affects coordination and the choice of the leader.
In \cite{angeletos_et_al_2006} and \cite{angeletos_pavan_2013} there is a perfectly informed policymaker whose action is observable by the players before the coordination stage so the game also features signaling. In their framework, this leads to equilibrium multiplicity. The difference between the two papers is that the leader and the followers have perfectly aligned incentives in our model, while in \cite{angeletos_et_al_2006} and \cite{angeletos_pavan_2013} there is a conflict of interest over a subset of the state space. \cite{basak_zhou_2020} study a regime change game where agents choose sequentially and there is a principal who can dynamically disclose information to dissuade them from attacking. The disclosure policies studied also remove the two-sided dominance property. However,  in our model, the leader and followers share the exact same goal and, moreover, we do not have disclosure of information but signaling by the leader. \cite{angeletos_et_al_2007} extend the standard static benchmark of a regime change game to a dynamic setting in which players may attack a regime multiple times and where learning features a prominent role. They derive a multiplicity result similar to ours, due to the role of the truncation of conditional beliefs that is central to both papers. \cite{huang_2017} incorporates the policy maker's reputation into the model of \cite{angeletos_et_al_2007} and shows that whether an equilibrium with attacks exists or not depends on the speed of learning by the agents of the policy maker's type. However, in both these papers, agents' behavior in the continuation does not affect how they act when the game begins, while, in our setting, followers' behavior in the sub-game obviously affects the decision of the leader at the beginning of the game.

Moreover, our paper is related to the economics of leadership literature pioneered by \cite{hermalin_1998}.
Our work is most closely related to  \cite{komai_et_al_2007} and \cite{komai_stegeman_2010}.
The former examines the informational role of the leader in a moral hazard in teams model, while the latter extends this setting by exploring leader selection and investment in information.
 Our results validate those in \cite{komai_et_al_2007} and \cite{komai_stegeman_2010} in that we also derive efficient outcomes when the leader has a sufficient informational advantage and reveals, through signaling, part of her information to followers. 
However, our contribution lies in the converse direction: whenever the leader does not have a sufficient informational advantage, then the leader herself may act inefficiently. In addition, as our extension shows, the result does not necessarily rely on who is better informed but rather on whether the signaling effect of the leader's action dominates the miscoordination effect inherent to the followers' problem due to dispersed information. Other papers in the economics of leadership literature that relate to our work include \cite{bolton_et_al_2013} and \cite{dewan_myatt_2008}. 
These papers explore in a beauty contest framework, the qualities a leader must possess to successfully determine an organization's mission in changing environments while ensuring coordination among followers. Our paper differs with respect to \cite{bolton_et_al_2013} and \cite{dewan_myatt_2008} in that, in our model, the leader neither chooses the signals that followers observe as in \cite{dewan_myatt_2008} nor has the final say about which action will be taken as in \cite{bolton_et_al_2013}. Her role is simpler: she just chooses an action that the followers observe. By simplifying the environment in this way, we are able to isolate and analyze the effect of the information that followers possess on leadership in a clearer manner: the leader may choose the inefficient action and potentially trap herself and the followers in the wrong course of action, a feature absent in \cite{bolton_et_al_2013} and \cite{dewan_myatt_2008}.

On the methodological side, our model is an extensive-form game with incomplete information that features strategic complementarities within and across stages. It, therefore, shares elements with \cite{echenique_2004} and \cite{van_zandt_vives_2007}. Moreover, the paper is related to \cite{morris_et_al_2016} and \cite{morris_yildiz_2019}. Even though we ask different questions, we also use properties of rank belief functions to analyze the behavior of followers after observing the leader's action.

\subsection{Organization of the Paper}

The remainder of the paper is organized as follows: Section 2 introduces the main model. In Section 3, we present and discuss the main results. Section 4 extends the results of Section 3 to an alternative information structure for the leader. In Section 5, we explore the implications of the results for the leader's role in financial stability and crises. Section 6 discusses the choice of the solution concept. Finally, Section 7 concludes. All proofs are in Appendix A.

\medskip

\section{The Model} \label{sect_model}

Consider the following two-stage game. There is a \textit{leader} (she), $L$, and a team of $n \geq 2$ \textit{followers} (he). Each player $i \in N \equiv \{L, 1, \dots, n\}$ has to decide whether to take an action ($a_i = \invest$) or not ($a_i = \notinvest$). This action can be interpreted as exerting costly effort, investing into a new project, or attacking a regime or currency. The cost of taking the action is $c > 0$. 
Not taking the action is a safe option (e.g., shirking or staying with the existing technology) with benefits normalized to zero. Let $\Action_i = \{\invest, \notinvest\}$ be the set of actions for player $i \in N$. To fix ideas, we will henceforth refer to the action as the exertion of effort.

The leader moves first in stage 1. The followers, $j \in F = \{1, \dots, n\}$, make decisions in stage 2 after observing the leader's action. Our game thus constitutes a multi-stage game with observable actions.
We assume that the leader's action is irreversible; that is, she can neither ``exit'' nor ``delay'', which may be understood as the commitment made by the leader or the consequence of a high reputation cost of exit or delay.\footnote{~We borrow the terms from \cite{kovac_steiner_2013}.} Thus, our leader has maximum credibility.

We consider the situation where coordination exhibits positive externalities; that is, the more people choose to exert effort the more benefit everybody accrues. In other words, players'  decisions are strategic complements. Let $\tilde{b}(\theta, a_{-i})$ be an additively separable benefit function for player $i$ when action $a_i = \invest$ is taken (otherwise it is zero). The parameter $\theta \in \Theta = \RR$ is a payoff-relevant state (``fundamentals'') that affects the gross return. The state $\theta$ can be though of as a parametrization of the environment in which an organization operates, the strength of a regime, or the solvency of a bank.  Let $a_{-i} = (a_k)_{k \in N; \,k \neq i}$ be the vector of other players' actions that generates positive \emph{spillover benefits} when at least one other player chooses to exert effort. Thus, player $i$'s payoff is given by $u_i = \tilde{b}(\theta, a_{-i}) - c$. 
We further assume that $\tilde{b}$ is strictly increasing in both $\theta$ and $a_{-i}$ and 
is symmetric in $a_{-i}$. Let $\one(\cdot)$ be the indicator function and $A_{-i} = n^{-1}\sum_{k \neq i,~k \in N} \one(a_k = \invest)$ be the proportion of other players choosing to exert effort. Then we could write player $i$'s payoff as\footnote{~Suppose that $\tilde{b}(\theta, a_{-i}) = v(\theta) + \sum_{k \neq i}w(a_k)$, where $v$ is an increasing function and $w$ is such that $w(0) = 0$ and $w(d_k) = \omega > 0$. Then player $i$'s payoff is $\tilde{u}(\theta, a_{-i}) = \tilde{b}(\theta, a_{-i}) - c = v(\theta) + \omega \sum_{k \neq i}\one(a_k = \invest) - c$. A monotone transformation, $u = \alpha \tilde{u} + \beta$, gives that $u(\theta, A_{-i}) = \alpha v(\theta) + \beta + A_{-i} - 1$ by letting $\alpha = 1/(\omega n)$ and $\beta = 1 - \alpha c$. We may then assume, without loss of generality, that $\alpha v + \beta = \mathrm{id}$, the identity map from $\RR$ to $\RR$.}
\begin{equation*}
    u_i = u(\theta, A_{-i}) = \theta + A_{-i} - 1.
\end{equation*}
This payoff function is familiar in the global games literature, for example, see \cite{morris_shin_2003} and \cite{morris_yildiz_2019}. Figure \ref{fig:example} illustrates an example with two followers ($n = 2$).

\begin{figure}[!htp]
    \centering
    \begin{tikzpicture}[scale=0.365]
        \tikzstyle{solid node}=[circle,draw,inner sep=1.2];
        \tikzstyle{hollow node}=[circle,draw, inner sep=1.2];
        \tikzstyle{level 1}=[level distance=20mm,sibling distance=210mm]
        \tikzstyle{level 2}=[level distance=15mm,sibling distance=25mm]
        \renewcommand{\gamestretch}{1.2}
        \node(0)[hollow node]{}
        child{node{}
        edge from parent node[above left]{$a_L = \invest$}
        }
        child{node{}
        edge from parent node[above right]{$a_L = \notinvest$}
        };
        
        \node[circle, fill=black, inner sep=1.2, label=above:{\text{Leader}}]at(0){};
        \node[below, xshift=0, yshift=10]at(0-1){
        \arrayrulewidth.75pt
        \begin{game}{2}{2}
        & $\invest$   & $\notinvest$\\
        $\invest$ & $\theta~,  ~\theta~, ~\theta$ & $\theta-\frac{1}{2}~, ~\theta-\frac{1}{2}~, ~0$\\
        $\notinvest$ & $\theta-\frac{1}{2}~, ~0~, ~\theta-\frac{1}{2}$ & $\theta-1~, ~0~, ~0$
        \end{game}
        };
        \node[below, xshift=-5, yshift=10]at(0-2){
        \arrayrulewidth.75pt
        \begin{game}{2}{2}
        & $\invest$   & $\notinvest$\\
        $\invest$ & $0~, ~\theta-\frac{1}{2}~, ~\theta-\frac{1}{2}$ & $0~, ~\theta-1~, ~0$\\
        $\notinvest$ & $0~, ~0~, ~\theta-1 $ & $0~, ~0~, ~0$
        \end{game}
        };
    \end{tikzpicture}
    \caption{A two-follower example, with the leader's payoff listed first, the row follower's second, and the column follower's third.}
    \label{fig:example}
\end{figure}

Assume that the leader observes the realization of $\theta$ (henceforth her type). Followers, on the other hand,  have \textit{fundamental uncertainty} about $\theta$. The initial common prior is an improper uniform distribution over the real line.\footnote{~Later we discuss why this assumption, though simplifying, actually strengthens our results.} Each follower $j \in F$ receives a signal $x_j = \theta + \sigma_F \varepsilon_j$, where $\sigma_F > 0$ measures the quality of private information and $\varepsilon_j$ is an idiosyncratic standard Gaussian noise that is independent of $\theta$ and independently and identically distributed (IID) across all followers. In Appendix C we show that our results hold for a set of noise distributions with positive densities, continuously differentiable, symmetric (around zero), and log-concave over $\RR$. We refer to $x_j$ as follower $j$'s type and let $X_j = \RR$ be the corresponding type space.

From a global game perspective, the assumption that the leader knows $\theta$ is common knowledge is critical. In particular, under this assumption, the subgames do not have two-sided dominance regions, but rather there is only one-sided dominance. In Section 4, we investigate an alternative information structure where the leader also observes a noisy private signal about $\theta$, which brings back the standard two-sided dominance. The current model can be understood as the limiting case when the noise of the leader's information approaches zero while keeping $\sigma_F$ fixed. The spirit of the main result is similar in both cases.

Note that, under complete information, it is straightforward to see that the model admits multiple subgame perfect equilibria when $\theta \in (0, 1/n)$. In the two-follower case, for example, we have two subgame perfect equilibria---$(\invest, \invest\notinvest, \invest\notinvest)$ and$(\notinvest, \notinvest\notinvest, \notinvest\notinvest)$---with the former being fully efficient.\footnote{~By $\invest\notinvest$ we mean a follower exerts effort when $a_L = \invest$ and does not exert effort when $a_L = \notinvest$.}

\medskip

\section{Analysis and Main Results} \label{sect_signaling_game}

In this section, we present the main results of our analysis. First, we define our solution concept, $\Delta$-\textit{rationalizability}, and proceed to derive the sets of rationalizable type-strategy profiles for the leader and the followers. Then, we identify a necessary 
and sufficient condition under which the model exhibits unique rationalizable behavior. Finally, we discuss our results and the analysis's implications for efficiency.

\medskip

\subsection{Rationalizable Behavior}
Our solution concept is $\Delta$-rationalizability of \cite{battigalli_siniscalchi_2003}, which extends \citeapos{pearce_1984} notion of extensive-form rationalizability to games with incomplete information.
The ``$\Delta$'' in $\Delta$-rationalizability indicates a specific set of restrictions on beliefs that are required to be satisfied at each round of the iterative procedure. In our case, it is the signal structure commonly known to all players. We will show that in general, the set of action-type pairs that are $\Delta$-rationalizable constitute an interval both for the leader and the followers unless the followers' information is sufficiently noisy. This result hinges on the possibility of ``knowledge traps'': more precise information on the followers' part induces multiplicity of  $\Delta$-rationalizable type-strategy profiles which can lead to serious inefficiencies.


Before providing a formal definition of the procedure, we introduce the following notation. Recall that $\Action_L=\{\invest,\notinvest\}$ is the action set of the leader. To simplify notation, we also consider it to be the set of possible (non-terminal) histories of the game. We, therefore, let $a_L \in \Action_L$ denote the action chosen by the leader and $h \in \Action_L$ the corresponding history.
A strategy for follower $j \in F$ is a mapping, $s_j: \Action_L \to \Action_j$, that maps history $h$ into action $s_j(h)$.
Let $S_j$ be the set of strategies for follower $j$. The sets of all possible types of the leader and follower $j$ are $\Theta$ and $X_j$, respectively. We call $(\theta_L , a_L )\in \Theta \times \Action_L$ a type-strategy pair for the leader. Likewise, $(x_j , s_j)\in X_j \times S_j$ is a type-strategy pair for follower $j$.
Players' interim beliefs are conditional probabilities, derived from the Bayes' rule, about the type-strategy pairs of their opponents. Specifically, an interim belief of leader $\theta$ is $\mu_L (\cdot \vl \theta) \in \Delta \left(X \times S \right)$, 
where $X \times S = \prod_{j\in F} X_j \times S_j$
with a generic element $(x, s) = (x_j, s_j)_{j \in F}$, 
and the interim belief for follower $j$ given type $x_j$ and history $h$ is $\mu_j (\cdot \vl x_j,h) \in \Delta \left(\Theta \times \prod_{k \neq j} (X_{k}\times S_{k})\right)$.

For leader $\theta$, exerting effort, $a_L = \invest$, is the best response with respect to a belief $\mu_L(x, s \vl \theta)$
if
\[
\int_{(x, s)} u(\theta, A_{-L}(s))\dd \mu_L(x, s \vl \theta) > 0,
\]
where $A_{-L}(s(\invest)) = \sum_{j \in F} \one\left(s_j(\invest) = \invest \right)$.\footnote{~We assume, without loss of generality, that players break the tie by choosing not to exert effort.}
Similarly, for type $x_j$ of follower $j$ under history $h$, action $s_j(h) = \invest$ is the best response to a belief $\mu_j(\theta, x_{-j}, s_{-j} \vl x_j, h)$ if
\[
\int_{(\theta, x_{-j}, s_{-j})} u(\theta, A_{-j}(h, s_{-j}))\dd \mu_j(\theta, x_{-j}, s_{-j} \vl x_j, h) > 0,
\]
where $A_{-j}(h, s_{-j}(h)) = \chi_\invest  + \sum_{k \neq j,~k \in F} \one\left(s_k(h) = \invest \right)$ and $\chi_\invest = \one\left(h = \invest \right)$.\footnote{~Likewise, $\chi_\notinvest = \one(h = \notinvest)$.} The notion of $\Delta$-rationalizability is defined as follows.

\begin{definition}[$\Delta$-rationalizability]
Consider the following procedure.\\
(Round 0) Let $R_L^0 = \Theta \times \Action_L$ and $R_{F, \,j}^0 = X_j \times S_j$ for each $j \in F$. \\
(Round $k \geq 1$) Let $R_F^m = \prod_{j \in F} R_{F, \,j}^m$ and $R_{F, -j}^m = \prod_{\ell \neq j} R_{F, \ell}^m, m \in \{0\} \cup \NN$. Then
\begin{enumerate}
    \item[(i)] $(\theta, a_L) \in R_L^k$ if and only if $(\theta, a_L) \in R_L^{k-1}$ and there exists a belief $\mu_L(\cdot \vl \theta) \in \Delta(R_F^0)$ such that $\mu_L(R_F^{k-1} \vl \theta) = 1$ and $a_L$ is a best response with respect to $\mu_L(\cdot \vl \theta)$.
    
    \item[(ii)] For every follower $j \in F$, $(x_j, s_j) \in R_j^{k-1}$ if and only if $(x_j, s_j) \in R_j^{k-1}$ and for each history $h$ there exists a belief $\mu_j(\cdot \vl x_j, h) \in \Delta(R_L^0 \times R_{F, \,-j}^0)$ such that $\mu_j(R_L^k \times R_{F, \,-j}^{k-1} \vl x_j, h) = 1$ and $s_j(h)$ is a best response with respect to $\mu_j(\cdot \vl x_j, h)$.
\end{enumerate}
Finally, let $R_L^\infty = \bigcap_{k=0}^\infty R_L^k$ and $R_{F, \,j}^\infty = \bigcap_{k=0}^\infty R_{F, \,j}^k$. Then an action $a_L$ is $\Delta$-rationalizable for type $\theta$ of the leader if $(\theta, a_L) \in R_L^\infty$. Analogously, a strategy $s_j$ is $\Delta$-rationalizable for type $x_j$ of follower $j$ if $(x_j, s_j) \in R_{F, \,j}^\infty$.
\end{definition}

\subsubsection*{Follower Problem}
Consider type $x$ of follower $j \in F$. 
Suppose that he believes that the leader uses a monotone strategy with threshold $z \in \RR$; that is, $a_L = \invest$ for all $\theta > z$.
Therefore, type $x$'s interim belief about $\theta$ has a truncated Gaussian distribution with density
\begin{equation} \label{interim_belief}
    \psi^h(\theta; x, z) = \begin{cases}
    \frac{\frac{1}{\sigma_F}\phi\left(\frac{\theta - x}{\sigma_F}\right)}{1 - \Phi\left(\frac{z - x}{\sigma_F}\right)}\one(\theta > z) & \text{if $h = \invest$} \\
    ~ & ~ \\
    \frac{\frac{1}{\sigma_F}\phi\left(\frac{\theta - x}{\sigma_F}\right)}{\Phi\left(\frac{z - x}{\sigma_F}\right)}\one(\theta \leq z) & \text{if $h = \notinvest$}
    \end{cases},
\end{equation}
where $\phi(\cdot)$ and $\Phi(\cdot)$ denote the standard Gaussian density function (PDF) and cumulative distribution function (CDF), respectively.
Let $\Psi^h(\cdot; \,x, z)$ be the corresponding CDF under history $h$. 
We denote by $\lambda(x) = \phi(x)/\Phi(x)$ the \textit{reversed hazard rate}. Then 
type $x$'s expectation of $\theta$ can be written as
\begin{equation} \label{eq_expectation_of_theta}
    \EE_{\theta \sim \Psi^h(\cdot; \,x, z)}[\theta] = \begin{cases}
   x + \sigma_F \lambda \left(\frac{x - z}{\sigma_F}\right)  & \text{if $h = \invest$} \\
   ~ & ~ \\
   x - \sigma_F \lambda \left( \frac{z - x}{\sigma_F} \right) & \text{if $h = \notinvest$}
   \end{cases},
\end{equation}
which has the following properties.

\begin{lemma} \label{lemma_truncated_expectation}
The interim expectations $\EE_{\theta \sim \Psi^h(\cdot; \,x, z)}[\theta]$ are strictly increasing in $x$ and $z$. Moreover,
\begin{equation*}
    \lim_{x \to -\infty} \EE_{\theta \sim \Psi^h(\cdot; \,x, z)}[\theta] = \begin{cases}
    z & \text{if $h = \invest$} \\
    - \infty & \text{if $h = \notinvest$}
    \end{cases},
\end{equation*}
and
\begin{equation*}
    \lim_{x \to \infty} \EE_{\theta \sim \Psi^h(\cdot ; \,x, z)}[\theta] = \begin{cases}
    \infty & \text{if $h = \invest$} \\
    z & \text{if $h = \notinvest$}
    \end{cases}.
\end{equation*}
\end{lemma}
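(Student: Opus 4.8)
The plan is to reduce every claim to two analytic facts about the reversed hazard rate $\lambda(t) = \phi(t)/\Phi(t)$: a two-sided derivative bound $\lambda'(t) \in (-1, 0)$ for all $t \in \RR$, and the tail behavior $\lambda(t) \to 0$ as $t \to +\infty$ together with $\lambda(t) + t \to 0$ as $t \to -\infty$. Indeed, differentiating the closed forms in \eqref{eq_expectation_of_theta} by the chain rule, the four partial derivatives collapse to expressions in $\lambda'$ evaluated at $\tfrac{x-z}{\sigma_F}$ (for $h=\invest$) or $\tfrac{z-x}{\sigma_F}$ (for $h=\notinvest$): in both histories the $x$-derivatives equal $1 + \lambda'(\cdot)$ and the $z$-derivatives equal $-\lambda'(\cdot)$. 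So monotonicity follows immediately once the bound on $\lambda'$ is in hand.

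To obtain $\lambda'(t) \in (-1,0)$ I would first record the identity $\lambda'(t) = -\lambda(t)\,(t + \lambda(t))$, which follows from $\phi'(t) = -t\phi(t)$. The cleanest route to the bound is probabilistic: note that $\EE_{\theta \sim \Psi^h(\cdot; x, z)}[\theta]$ is the mean of the Gaussian $N(x,\sigma_F^2)$ conditioned on the truncation event, so by the standard location-family (score) identity its derivative in $x$ equals $\mathrm{Var}_{\Psi^h}[\theta]/\sigma_F^2$. A nondegenerate truncation of a Gaussian has variance strictly between $0$ and $\sigma_F^2$, so this ratio lies in $(0,1)$; comparing with $1 + \lambda'(\cdot)$ yields $\lambda'(t) \in (-1, 0)$ for every $t$ (as $\tfrac{x-z}{\sigma_F}$ and $\tfrac{z-x}{\sigma_F}$ sweep out all of $\RR$). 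Equivalently, one argues directly: $\lambda(t) > 0$ together with $-\lambda(t) = \EE[Z \mid Z \le t] < t$ gives $t + \lambda(t) > 0$ and hence $\lambda'(t) < 0$, while $1 + \lambda'(t) = \mathrm{Var}[Z \mid Z \le t] > 0$ gives $\lambda'(t) > -1$. Feeding $\lambda' \in (-1,0)$ into the partial derivatives shows each is strictly positive, establishing strict monotonicity in both $x$ and $z$.

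For the limits I would isolate the two asymptotics of $\lambda$. The right tail is immediate: as $t \to +\infty$, $\phi(t) \to 0$ and $\Phi(t) \to 1$, so $\lambda(t) \to 0$. The left tail is the delicate one: from the Gaussian Mills-ratio bounds one gets $-t < \lambda(t) < -t + \tfrac{1}{-t}$ for $t < 0$, so $0 < \lambda(t) + t < \tfrac{1}{-t} \to 0$, i.e.\ $\lambda(t) + t \to 0$ as $t \to -\infty$. With these, each limit is a substitution. For $h = \invest$, writing $\sigma_F\lambda\!\big(\tfrac{x-z}{\sigma_F}\big) = \sigma_F\big(\lambda(t) + \tfrac{x-z}{\sigma_F}\big) - (x-z)$ gives $\EE[\theta] = z + \sigma_F\big(\lambda(t)+t\big) \to z$ as $x \to -\infty$, while as $x \to +\infty$ the bounded term $\sigma_F\lambda(\cdot) \to 0$ and $\EE[\theta] \to +\infty$. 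The case $h = \notinvest$ is symmetric: $\EE[\theta] \to -\infty$ as $x \to -\infty$ (the $\lambda$-term vanishes) and $\EE[\theta] \to z$ as $x \to +\infty$ (by the same rewriting of the left-tail term).

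The main obstacle is thus concentrated entirely in the properties of $\lambda$: establishing the two-sided derivative bound $\lambda' \in (-1,0)$ — most transparently through the conditional-variance identity — and controlling the left-tail rate $\lambda(t)+t \to 0$, which is precisely what is needed for the finite limits to equal exactly $z$ rather than merely to stay bounded. Everything else is routine chain-rule differentiation and substitution.
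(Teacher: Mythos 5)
Your proposal is correct and follows essentially the same route as the paper: rewrite the truncated means as $z + \sigma_F\bigl(t + \lambda(t)\bigr)$-type expressions in the reversed hazard rate, deduce strict monotonicity from $\lambda'(t) \in (-1,0)$, and obtain the finite limits from $\lambda(t) \to 0$ as $t \to \infty$ and $t + \lambda(t) \to 0$ as $t \to -\infty$. The only difference is that you prove the two auxiliary facts yourself—the derivative bound via the conditional-variance identity $1 + \lambda'(t) = \mathrm{Var}[Z \mid Z \le t]$ and the left-tail rate via Mills-ratio bounds—where the paper cites \cite{sampford_1953} for the former and uses L'H\^{o}pital's rule for the latter; both substitutions are valid and, if anything, make the argument more self-contained.
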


Now suppose further that follower $j$ believes that other followers are using monotone strategies with threshold $x_h$ under history $h$; that is, for any 
follower $\ell \neq j$ with type $x$, 
$s_\ell(h) = \invest$ if and only if $x > x_h$. This implies that, at a given state $\theta$, the probability that follower $j$ assigns to
$k$ other followers investing equals
$\left[1 - \Phi\left((x_h - \theta)/\sigma_F\right)\right]^k$, $k \in \{0, 1, \dots, n-1\}$. Therefore
follower $j$'s expected proportion of other players investing at state $\theta$ is
\begin{align*}
    A_{-j}(\theta) & = \sum_{k=0}^{n-1}  \binom{n-1}{k} \frac{k }{n} \left[1 - \Phi\left(\frac{x_h - \theta}{\sigma_F}\right)\right]^k \Phi\left(\frac{x_h - \theta}{\sigma_F}\right)^{n-1-k} +\frac{\chi_{\invest}}{n}\\
    & = \frac{n-1}{n}\left[\left(1 - \Phi\left(\frac{x_h - \theta}{\sigma_F}\right)\right) \right] + \frac{\chi_\invest}{n}.
\end{align*}
The second equality follows from
the binomial identity $\sum_{k=0}^{n-1} \binom{n-1}{k} k (1-q)^k q^{n-1-k} = (n-1)(1-q)$. Since the leader's action is observable, follower $j$ has certainty about receiving the network benefit $\chi_\invest/n$.
Thus, we may write the payoff to choosing $a_j =\invest$ for type $x$, under history $h$, as
\begin{equation} \label{follower_payoff}
   \pi_F^h(x; z, x_h) = \EE_{\theta \sim \Psi^h(\cdot; \,x, z)} \left[ \theta - \frac{n-1}{n} \Phi\left(\frac{x_h -\theta}{\sigma_F}\right)   \right] - \frac{\chi_\notinvest}{n}.
\end{equation}
We then have the next lemma.

\begin{lemma} \label{lemma_x_payoff}
The follower payoffs $\pi_F^h(x; z, x_h)$ are strictly increasing in $x$ and $z$, and are strictly decreasing in $x_h$. Moreover,
\begin{equation*}
    \lim_{x \to -\infty} \pi_F^h(x; z, x_h) = \begin{cases}
z - \frac{n-1}{n}\Phi\left( \frac{x_\invest - z}{\sigma_F} \right) & \text{if $h = \invest$} \\
-\infty & \text{if $h = \notinvest$}
\end{cases}
\end{equation*}
and
\begin{equation*}
    \lim_{x \to \infty} \pi_F^h(x; z, x_h) = \begin{cases}
\infty & \text{if $h = \invest$} \\
z - \frac{1}{n} + \frac{n-1}{n}\Phi\left( \frac{x_\notinvest - z}{\sigma_F} \right) & \text{if $h = \notinvest$}
\end{cases}.
\end{equation*}
\end{lemma}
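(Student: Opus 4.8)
The plan is to write the payoff as $\pi_F^h(x; z, x_h) = \EE_{\theta \sim \Psi^h(\cdot;\,x, z)}[g(\theta)] - \chi_\notinvest/n$, where $g(\theta) = \theta - \frac{n-1}{n}\Phi\left(\frac{x_h - \theta}{\sigma_F}\right)$, and to exploit two structural facts. First, $g$ is strictly increasing in $\theta$, since the identity term is increasing and $-\Phi\left(\frac{x_h-\theta}{\sigma_F}\right)$ is strictly increasing in $\theta$ (as $\Phi$ is strictly increasing and $n \geq 2$ makes the coefficient positive). Second, the truncated-Gaussian posterior $\Psi^h(\cdot;\,x,z)$ shifts monotonically with $x$ and $z$. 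The additive constant $\chi_\notinvest/n$ depends on none of the three arguments, so it plays a role only in the limits.

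For the monotonicity in $x$ and $z$, I would show that $\Psi^h(\cdot;\,x,z)$ is strictly increasing in the first-order stochastic dominance (FOSD) order in each of $x$ and $z$, and then combine this with the strictly increasing integrand $g$. Monotonicity in $x$ follows from the monotone likelihood ratio property of the Gaussian location family: for $x_1 < x_2$ the ratio $\phi\left(\frac{\theta - x_2}{\sigma_F}\right)/\phi\left(\frac{\theta - x_1}{\sigma_F}\right)$ is strictly increasing in $\theta$, while the truncation normalization is independent of $\theta$, so on the common support MLR and hence strict FOSD hold. Monotonicity in $z$ I would obtain by differentiating the truncated CDF directly; for $h = \invest$ one finds $\partial_z \Psi^\invest(\theta; x, z) = \phi\left(\frac{z-x}{\sigma_F}\right)\left[\Phi\left(\frac{\theta - x}{\sigma_F}\right) - 1\right] / \left[\sigma_F\left(1 - \Phi\left(\frac{z-x}{\sigma_F}\right)\right)^2\right] < 0$ for interior $\theta$, and the analogous computation for $h = \notinvest$ gives $\partial_z \Psi^\notinvest < 0$, i.e.\ strict FOSD in $z$. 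Strict FOSD together with the strictly increasing $g$ yields strict monotonicity of $\pi_F^h$ in $x$ and $z$. The dependence on $x_h$ is handled separately and needs no stochastic-order argument: $x_h$ enters only through the pointwise strictly decreasing map $x_h \mapsto -\frac{n-1}{n}\Phi\left(\frac{x_h - \theta}{\sigma_F}\right)$, and since the posterior does not depend on $x_h$, integrating against the fixed measure preserves strict monotonicity.

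For the limits I would separate the $\EE[\theta]$ part, delivered directly by Lemma \ref{lemma_truncated_expectation}, from the $\EE\left[\Phi\left(\frac{x_h - \theta}{\sigma_F}\right)\right]$ part, which is uniformly bounded in $[0,1]$. In the two divergent cases ($h = \invest$ with $x \to \infty$, and $h = \notinvest$ with $x \to -\infty$) the $\EE[\theta]$ term sends $\pi_F^h$ to $+\infty$ and $-\infty$ respectively, while the bounded $\Phi$-term and the constant are negligible. In the two convergent cases ($h = \invest$ with $x \to -\infty$, and $h = \notinvest$ with $x \to \infty$) the underlying mean runs off to the side opposite the truncation, so the posterior concentrates at the truncation boundary, i.e.\ $\Psi^h(\cdot; x, z) \Rightarrow \delta_z$ weakly. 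Since $\theta \mapsto \Phi\left(\frac{x_h - \theta}{\sigma_F}\right)$ is bounded and continuous, the $\Phi$-term converges to $\Phi\left(\frac{x_h - z}{\sigma_F}\right)$, and $\EE[\theta] \to z$ by Lemma \ref{lemma_truncated_expectation}; assembling these, with careful bookkeeping of the constant $-\chi_\notinvest/n$ (which is $0$ for $h = \invest$ and $-1/n$ for $h = \notinvest$) and of the reflection identity $\Phi(a) = 1 - \Phi(-a)$, gives the finite limits.

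The step I expect to be the main obstacle is the boundary concentration in the convergent cases: I must justify that conditioning $N(x, \sigma_F^2)$ on the increasingly improbable event $\{\theta > z\}$ as $x \to -\infty$ (and symmetrically $\{\theta \leq z\}$ as $x \to +\infty$) forces weak convergence to $\delta_z$ rather than letting mass drift. I would make this precise via the change of variables $t = \theta - z$, showing that the normalized conditional density collapses to an exponential law with rate $|x - z|/\sigma_F^2 \to \infty$, so that all mass concentrates at $t = 0$; bounded convergence then legitimizes passing the limit through $\Phi$, while the unbounded identity term is controlled separately through Lemma \ref{lemma_truncated_expectation}. Everything else is routine monotonicity and stochastic-dominance bookkeeping.
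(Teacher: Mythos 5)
Your proposal is correct, and the monotonicity half is essentially the paper's own argument: FOSD in $x$ via the monotone likelihood ratio property of the Gaussian location family (the paper phrases this through log-concavity of $\phi$ and P\'{o}lya frequency of order 2), FOSD in $z$ via the explicit truncated CDF (the paper writes $\Psi^\invest(\theta; x, z) = 1 - \Phi\left(\frac{x-\theta}{\sigma_F}\right)/\Phi\left(\frac{x-z}{\sigma_F}\right)$ and observes it is decreasing in $z$, which matches your derivative computation after the reflection $1 - \Phi\left(\frac{z-x}{\sigma_F}\right) = \Phi\left(\frac{x-z}{\sigma_F}\right)$), and pointwise monotonicity in $x_h$ with the posterior held fixed. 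Where you genuinely diverge is the convergent-limit cases. The paper evaluates $\EE_{\theta \sim \Psi^h}\left[\Phi\left(\frac{x_h - \theta}{\sigma_F}\right)\right]$ directly: a change of variable $\eta = (x-t)/\sigma_F$ turns it into a ratio of two quantities both vanishing as $x \to -\infty$ (resp. $x \to \infty$), and L'H\^{o}pital's rule gives the limit $\Phi\left(\frac{x_h - z}{\sigma_F}\right)$. You instead prove weak convergence of the truncated posterior to $\delta_z$ via the exponential-collapse estimate (the normalized density near the boundary behaves like an exponential with rate $|x-z|/\sigma_F^2 \to \infty$) and then invoke bounded convergence for the continuous bounded test function $\theta \mapsto \Phi\left(\frac{x_h - \theta}{\sigma_F}\right)$, with $\EE[\theta] \to z$ supplied by Lemma \ref{lemma_truncated_expectation} exactly as in the paper. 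Both routes are sound; yours is more conceptual and portable --- the concentration-at-the-boundary argument would transfer with no change to the log-concave noise extension of Appendix C, where the paper must redo the L'H\^{o}pital computation, whereas the paper's calculation is more self-contained and avoids having to formalize weak convergence. You correctly identify the concentration step as the one needing care, and your sketch of it (change of variables $t = \theta - z$, collapse of the conditional law, separate control of the unbounded identity term through Lemma \ref{lemma_truncated_expectation}) closes it.
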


\medskip
\subsubsection*{Leader Problem}
Suppose that, under history $h = \invest$, followers use monotone strategies with threshold $x_\invest \in \RR$; that is, $s_j(\invest) = \invest$ for $x_j > x_\invest$.\footnote{~Since followers are \emph{ex ante} identical, assuming a common threshold is without loss.} If the leader chooses $a_L=\invest$, then the expected aggregate action is given by
\begin{equation*}
    A_{-L}(\theta) = 1 - \Phi\left(\frac{x_\invest - \theta}{\sigma_F}\right).
\end{equation*}
Note that the behavior of followers matters to the leader only when $a_L = \invest$; otherwise, she obtains a payoff of zero by taking the safe action $a_L = \notinvest$. The payoff to choosing $a_L=\invest$ for type $\theta$ is therefore
\begin{equation} \label{leader_payoff}
    \pi_L (\theta; x_\invest)= \theta - \Phi\left(\frac{x_\invest - \theta}{\sigma_F}\right).
\end{equation}
It is immediate to see that $\pi_L(\theta; x_\invest)$ is strictly increasing in $\theta$ and crosses zero only once from below. Thus, the leader's best response to $x_\invest$ is the unique solution to $\pi_L(\theta; x_\invest) = 0$.

\subsection{Main Results} \label{main_results}

We first provide an intuitive explanation of how $\Delta$-rationalizability proceeds. Before the procedure starts, all players deem all type-strategy pairs possible. Let $\thetalow^0 = \xhlow^ 0 = -\infty$ and $\thetaup^0 = \xhup^0 = \infty$ for each history $h$. We call the former \textit{lower dominance bounds} and the latter \textit{upper dominance bounds}.
Note that the payoff to the leader, given $\xilow^0$ and $\xiup^0$, satisfies the standard two-sided ``limit dominance'' property of global games \citep{morris_shin_2003}, with the \textit{dominance regions} being $(-\infty, 0)$ and $(1, \infty)$. That is, exerting no effort ($a_L = \notinvest$) is dominant for all types $\theta < 0$, and exerting effort ($a_L = \invest$) is dominant for all types $\theta > 1$.
This implies that
the leader will eliminate, in Round 1, all type-action pairs $(\theta, \invest)$ with $\theta < \thetalow^1 = 0$ and $(\theta, \notinvest)$ with $\theta > \thetaup^1 = 1$. 

By knowing the leader's dominance bounds $\thetalow^1$ and $\thetaup^1$, each follower can infer from $h = \invest$ that this decision cannot be made by a type $\theta < \thetalow^1$. 
This, in turn, 
determines each follower's dominance regions. For type $x$ of a follower, Lemma \ref{lemma_x_payoff} implies that
the worst-case payoff equals 
\begin{equation*}
   \pi_F^\invest(x; \thetalow^1, \xiup^0) =   \EE_{\theta \sim \Psi^\invest(\cdot; \,x, \thetalow^1)} [\theta] - \frac{n-1}{n}.
\end{equation*}
Let $\xiup^1$ be the unique solution to $\EE_{\theta \sim \Psi^\invest(\cdot; \,\xiup^1, 0)} [\theta] = \frac{n-1}{n}$. Exerting no effort is never a best response for $x > \xiup^1$ because the worst-case payoff is strictly increasing in $x$. But since the best-case payoff is positive for all $x$:
\begin{equation*}
   \pi_F^\invest(x; \thetaup^1, \xilow ^0) = \EE_{\theta \sim \Psi^\invest(\cdot; \,x, \thetaup^1)}[\theta] > 0,
\end{equation*}
the subgame under $h = \invest$ violates 
the two-sided limit dominance property because
exerting effort is not strictly dominated for any type $x$.\footnote{~See \cite{baliga_sjostrom_2004} and \cite{bueno_de_mesquita_2010} for applications with one-sided limit dominance but different signal structures.} 
This implies that the lower dominance bound yields $\xilow^1 = -\infty$. 

Now, under history $h = \notinvest$, followers know that it must be leader $\theta \leq \thetaup^1$ that has chosen not to exert effort. The subgame exhibits no upper dominance region because the worst-case payoff to any follower type $x$
\begin{equation*}
    \pi_F^\notinvest(x; \thetalow^1, \xiup ^0) = \EE_{\theta \sim \Psi^\notinvest(\cdot; \,x, \thetalow^1)}[\theta] - 1 < 0
\end{equation*}
is negative. Thus, $\xnup^1 = \infty$. The lower dominance bound is given by the unique solution $\xnlow^1$ to 
\begin{equation*}
    \pi_F^\notinvest(\xnlow^1; \thetaup^1, \xnlow^0) = \EE_{\theta \sim \Psi^\notinvest(\cdot; \,\xnlow^1, \thetaup^1)}[\theta] - \frac{1}{n} = 0. 
\end{equation*}
In sum, each follower $j$ will delete type-strategy pairs $(x, s_j)$ such that (i) $x > \xiup^1$ and $s_j(\invest) = \notinvest$, and (ii) $x < \xnlow^1$ and $s_j(\notinvest) = \invest$.

In Round 2, $\thetalow^2$, $\xilow^2$, and $\xnup^2$ are given analogously.  The leader's upper dominance bound, $\thetaup^2$, is the unique solution to 
\begin{equation*}
    \pi_L(\thetaup^2; \xiup^1) = \thetaup^2 - \Phi\bigg( \frac{ \xiup^1 - \thetaup^2 }{\sigma_F} \bigg) = 0.
\end{equation*}
Moreover, Lemma \ref{lemma_x_payoff} implies that followers' upper dominance bound under history $h = \invest$ is the unique value of $\xiup^2$ that solves 
\begin{equation*}
    \pi_F^\invest(\xiup^2; \thetalow^2, \xiup^1) = \EE_{\theta \sim \Psi^\invest(\cdot; \, \xiup^2, \thetalow^2)} \left[\theta - \frac{n-1}{n}\Phi\left(\frac{\xiup^1 - \xiup^2}{\sigma_F}\right)\right ] = 0,
\end{equation*}
and the lower dominance bound under history $h = \notinvest$ is given by the unique solution $\xnlow^2$ to
\begin{equation*}
        \pi_F^\notinvest(\xnlow^2; \thetaup^2, \xnlow^1) = \EE_{\theta \sim \Psi^\invest(\cdot; \, \xnlow^2, \thetaup^2)} \left[\theta - \frac{n-1}{n}\Phi\left(\frac{\xnlow^1 - \xnlow^2}{\sigma_F}\right)\right ] - \frac{1}{n} = 0.
\end{equation*}

A similar argument goes for all Rounds $k > 2$.
The iteration procedure ultimately yields six sequences. We summarize their properties in the following lemma.

\begin{lemma} \label{lemma_rat_seq}
The sequences are such that:\\
(a) $(\thetalow^k)_{k=0}^\infty$ is such that $\thetalow^k = \thetalow = 0$ for all $k \geq 1$; \\
(b) $(\thetaup^k)_{k=0}^\infty$ is strictly decreasing and bounded below; \\
(c) $(\xilow^k)_{k=0}^\infty$ is such that $\xilow^k = \xilow = -\infty$ for all $k \geq 0$; \\
(d) $(\xiup^k)_{k=0}^\infty$ is strictly decreasing; \\
(e) $(\xnlow^k)_{k=0}^\infty$ is strictly increasing; \\
(f) $(\xnup^k)_{k=0}^\infty$ is such that $\xnup^k = \xnup = \infty$ for all $k \geq 0$.
\end{lemma}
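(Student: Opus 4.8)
The plan is to run a single induction on the round index $k$ that proves all six claims simultaneously, which is what resolves the apparent circularity among the recursions. The base case $k=1$ is exactly the round-$1$ computation displayed above. The structural observation driving the induction is that, within each round, the dependence is \emph{triangular}: by the definition of the procedure the leader's bounds $\thetalow^k,\thetaup^k$ are pinned down by the round-$(k-1)$ follower bounds, and only afterward are the follower bounds $\xilow^k,\xiup^k,\xnlow^k,\xnup^k$ pinned down using the freshly computed $\thetalow^k,\thetaup^k$ together with the round-$(k-1)$ follower bounds. Each nontrivial bound is the unique zero of a continuous function that is strictly monotone in the relevant type (by Lemma \ref{lemma_truncated_expectation}, Lemma \ref{lemma_x_payoff}, and the monotonicity of $\pi_L(\theta; x_\invest)$) and changes sign by the limiting behavior recorded in those lemmas; this yields existence, uniqueness, and, via the implicit function theorem, the comparative statics used below.

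First I would dispatch the three degenerate sequences (a), (c), (f). For (c), under $h=\invest$ the most optimistic belief for a follower sets the leader's truncation point at $z=\thetaup^k$ and the rivals' threshold at $x_h=\xilow^{k-1}=-\infty$; the penalty term then vanishes and the best-case payoff equals $\EE_{\theta \sim \Psi^\invest(\cdot;\,x,\thetaup^k)}[\theta]$, which by Lemma \ref{lemma_truncated_expectation} exceeds $\thetaup^k>0$ for every $x$, so effort is never strictly dominated and $\xilow^k=-\infty$. Symmetrically, for (f), under $h=\notinvest$ the most pessimistic belief sets $z=\thetalow^k=0$ and $x_h=\xnup^{k-1}=\infty$, making the worst-case payoff $\EE_{\theta \sim \Psi^\notinvest(\cdot;\,x,0)}[\theta]-1<0$ for every $x$, so $\xnup^k=\infty$. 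Finally, for (a), with $\xilow^{k-1}=-\infty$ the leader's best case is $\pi_L(\theta;-\infty)=\theta$, whose zero is $0$, giving $\thetalow^k=0$. Each of these steps consumes the round-$k$ value $\thetaup^k$ (for (c)) or $\thetalow^k$ (for (a), (f)), which is precisely why they must be interleaved with the monotone sequences rather than proved in isolation.

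Next I would treat (d) and (b). The sequence $\xiup^k$ is self-contained: it solves $\pi_F^\invest(\xiup^k;0,\xiup^{k-1})=0$, so writing the induced best-response map $y\mapsto\br_\invest(y)$ (the indifferent type when rivals use threshold $y$ and $z=0$), Lemma \ref{lemma_x_payoff} makes $\br_\invest$ strictly increasing. Starting from $\xiup^0=\infty$ and checking that $\xiup^1=\br_\invest(\infty)$ is finite (the limits in Lemma \ref{lemma_x_payoff} bracket a sign change), monotone iteration yields $\xiup^k<\xiup^{k-1}$ for all $k$, which is (d). For (b), $\thetaup^k$ solves $\pi_L(\thetaup^k;\xiup^{k-1})=0$; since $\pi_L$ is increasing in $\theta$ and decreasing in $x_\invest$, the solution is strictly increasing in $\xiup^{k-1}$, and because $\xiup^{k-1}$ strictly decreases so does $\thetaup^k$. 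Boundedness below is immediate from $\thetaup^k=\Phi\big((\xiup^{k-1}-\thetaup^k)/\sigma_F\big)\in(0,1)$.

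The last and most delicate step is (e), because $\xnlow^k$ is driven by two moving inputs: it solves $\pi_F^\notinvest(\xnlow^k;\thetaup^k,\xnlow^{k-1})=0$, i.e.\ it equals $\br_\notinvest(\thetaup^k,\xnlow^{k-1})$ for a map that is decreasing in its first argument and increasing in its second (again by Lemma \ref{lemma_x_payoff}). Here I would argue inductively that both forces push the threshold up: the already-established decrease of $\thetaup^k$ raises $\br_\notinvest$ through the first argument, and the inductive hypothesis $\xnlow^{k-1}>\xnlow^{k-2}$ raises it through the second. Chaining the two monotonicities through the intermediate value $\br_\notinvest(\thetaup^{k-1},\xnlow^{k-1})$ gives $\xnlow^k=\br_\notinvest(\thetaup^k,\xnlow^{k-1})>\br_\notinvest(\thetaup^{k-1},\xnlow^{k-2})=\xnlow^{k-1}$, completing the induction. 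I expect the main obstacle to be the bookkeeping of the coupling rather than any single computation: verifying that the ``worst/best case'' beliefs really do correspond to the extreme admissible values of $z$ and $x_h$ (which is exactly what the monotonicity of $\pi_F^h$ in $z$ and $x_h$ from Lemma \ref{lemma_x_payoff} licenses), and confirming that the round ordering makes every quantity invoked at round $k$ already available.
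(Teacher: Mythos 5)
Your proposal is correct and takes essentially the same route as the paper's own proof: the same best-response recursion built on the monotonicity and limit facts of Lemmas \ref{lemma_truncated_expectation} and \ref{lemma_x_payoff}, with (a), (c), (f) pinned at their degenerate values, $(\xiup^k)$ handled as a self-contained monotone iteration at $z=\thetalow=0$ that drives $(\thetaup^k)$ through the increasing map $\br_L$, and the identical two-step monotonicity chain $\xnlow^{k}=\br_F^\notinvest(\thetaup^{k},\xnlow^{k-1})>\br_F^\notinvest(\thetaup^{k-1},\xnlow^{k-1})>\br_F^\notinvest(\thetaup^{k-1},\xnlow^{k-2})=\xnlow^{k-1}$ for part (e). The only cosmetic differences are that you run one simultaneous induction where the paper separates (a)--(d), (e), and (f), and that you cite the implicit function theorem where strict monotonicity plus the intermediate value theorem is all that is actually used.
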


By the monotone convergence theorem, $\thetaup^k$ converges to $\thetaup$ as $k \to \infty$. Moreover, $\thetaup$ is the unique solution to
\begin{equation} \label{leader_upper}
    \pi_L(\thetaup; \xiup) = 0,
\end{equation}
where $\xiup = \lim_{k \to \infty} \xiup^k$, and hence $\thetaup < 1$.
If $\xiup > - \infty$, it solves 
\begin{equation} \label{follower_invest_upper}
    \pi_F^\invest(\xiup; \thetalow, \xiup) = 0;
\end{equation}
otherwise $\xiup = -\infty$. Similarly, let $\xnlow = \lim_{k \to \infty}\xnlow^k$, and $\xnlow$ solves
\begin{equation} \label{follower_notinvest_lower}
        \pi_F^\notinvest(\xnlow; \thetaup, \xnlow) = 0
\end{equation}
if a solution exists. Otherwise $\xnlow^k$ diverges to $\xnlow = \infty$.
We now state the main result of the paper.

\begin{proposition} \label{prop_unique_rat}
The $\Delta$-rationalizable sets are $R_L^\infty = R_L^0 \setminus \overline{R}_L^\infty$ and $R_{F, \,j}^\infty = R_{F, \,j}^0 \setminus \overline{R}_{F, \,j}^\infty$, where
\begin{equation*}
     \overline{R}_L^{\infty} = \left\{(\theta, a_L) \vl~ a_L = \invest ~\text{if}~ \theta \leq 0 ~\text{and}~ a_L = \notinvest ~\text{if}~ \theta > \thetaup \right\}
\end{equation*}
and
\begin{equation*}
    \overline{R}_{F, \,j}^\infty = \left\{(x_j, s_j) \vl ~ s_j(\invest) = \notinvest ~\text{if}~x_j > \xiup ~\text{and}~s_j(\notinvest) = \invest ~\text{if}~ x < \xnlow \right\}.
\end{equation*}
Moreover, there exists a unique $\widehat{\sigma}_F$ such that there is a unique $\Delta$-rationalizable strategy profile with 
$(\thetaup, \xiup, \xnlow) = (0, -\infty, \infty)$ if and only if $\sigma_F > \widehat{\sigma}_F$.

\end{proposition}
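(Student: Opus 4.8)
The plan is to separate the two assertions. The set characterization follows almost entirely from Lemma \ref{lemma_rat_seq} together with a limiting argument; the real work is the threshold characterization, which I reduce to a one-dimensional fixed-point equation for $\xiup$ and then analyze via Gaussian identities. For the first assertion, I would invoke Lemma \ref{lemma_rat_seq}: the six boundary sequences are monotone, so the monotone convergence theorem yields $\thetalow=0$, $\xilow=-\infty$, $\xnup=+\infty$, and convergent limits $\thetaup=\lim\thetaup^k$, $\xiup=\lim\xiup^k$, $\xnlow=\lim\xnlow^k$. Passing to the limit in the round-$k$ indifference conditions and using continuity of $\pi_L$, $\pi_F^\invest$, $\pi_F^\notinvest$ in their arguments (Lemmas \ref{lemma_truncated_expectation}--\ref{lemma_x_payoff}), these limits solve \eqref{leader_upper}, \eqref{follower_invest_upper}, \eqref{follower_notinvest_lower}, with $\xiup=-\infty$ or $\xnlow=+\infty$ whenever no finite solution exists. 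Since each $R^k$ is exactly the complement of the round-$k$ dominance intervals, $R^\infty=\bigcap_k R^k$ is the complement of the limiting intervals, i.e.\ $R_L^0\setminus\overline{R}_L^\infty$ and $R_{F,j}^0\setminus\overline{R}_{F,j}^\infty$; strict monotonicity of the best responses supplies the converse inclusion, exhibiting for every type strictly inside the surviving region a belief, concentrated on surviving opponent pairs at the limiting thresholds, under which the prescribed action is a strict best response.

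For the threshold, the key observation is that \eqref{follower_invest_upper} involves only $\thetalow=0$, so the equation for $\xiup$ decouples from $\thetaup$ and $\xnlow$. Substituting $\theta=\xiup+\sigma_F w$ and writing $a=-\xiup/\sigma_F$ for the truncation point, the posterior on $w$ is a standard normal truncated to $w>a$. Two Gaussian identities then collapse the equation: $\EE[w\mid w>a]=\phi(a)/(1-\Phi(a))$, and, crucially, $\EE[\Phi(-w)\mid w>a]=\tfrac12(1-\Phi(a))$, which holds because $(1-\Phi(w))\phi(w)$ has antiderivative $-\tfrac12(1-\Phi(w))^2$ and hence integrates to $\tfrac12(1-\Phi(a))^2$ on $(a,\infty)$. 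The fixed-point condition reduces to $\sigma_F\,\Gamma(a)=\frac{n-1}{2n}$, where $\Gamma(a)=\big[\phi(a)-a(1-\Phi(a))\big]/(1-\Phi(a))^2$, and a finite $\xiup$ exists precisely when this admits a finite root.

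The heart of the argument is the shape of $\Gamma$. I would show $\Gamma$ is continuous and strictly positive (the inverse Mills ratio $\phi(a)/(1-\Phi(a))=\EE[Z\mid Z>a]$ exceeds $a$) and that $\Gamma(a)\to+\infty$ at both ends: $\Gamma(a)\sim 1/\phi(a)$ as $a\to+\infty$ by the Mills-ratio expansion, and $\Gamma(a)\sim -a$ as $a\to-\infty$. Hence $\Gamma$ attains a strictly positive minimum $\Gamma_{\min}$ and, being continuous with range $[\Gamma_{\min},\infty)$, the equation has a finite root iff $\frac{n-1}{2n\sigma_F}\ge\Gamma_{\min}$. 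Setting $\widehat{\sigma}_F=\frac{n-1}{2n\,\Gamma_{\min}}$, a finite root exists iff $\sigma_F\le\widehat{\sigma}_F$; since $T\colon\xiup^{k-1}\mapsto\xiup^k$ is monotone increasing and the sequence $\xiup^k$ is decreasing, it converges to the largest finite root when $\sigma_F\le\widehat{\sigma}_F$ and diverges to $-\infty$ when $\sigma_F>\widehat{\sigma}_F$.

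It remains to propagate this dichotomy. When $\sigma_F>\widehat{\sigma}_F$ and $\xiup=-\infty$, \eqref{leader_upper} gives $\thetaup=\Phi(-\infty)=0$; with $\thetaup=0$ the analogous substitution $\theta=\xnlow+\sigma_F w$ (truncated to $w\le c$) in \eqref{follower_notinvest_lower}, using $\EE[\Phi(-w)\mid w\le c]=1-\tfrac12\Phi(c)$ and $c+\lambda(c)=c-\EE[Z\mid Z<c]>0$, turns the equation into $0=\sigma_F(c+\lambda(c))+\tfrac{n-1}{n}(1-\tfrac12\Phi(c))+\tfrac1n$, a sum of strictly positive terms; hence no finite root exists and $\xnlow=+\infty$, delivering the profile $(0,-\infty,\infty)$ and a unique rationalizable action for every type. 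Conversely, if $\sigma_F\le\widehat{\sigma}_F$ then $\xiup$ is finite, so \eqref{leader_upper} forces $\thetaup=\Phi((\xiup-\thetaup)/\sigma_F)>0$, leaving the leader an interval $(0,\thetaup]$ of types with two surviving actions and hence multiple rationalizable profiles. This gives the equivalence and pins down $\widehat{\sigma}_F$ uniquely. I expect the main obstacle to be the rigorous control of $\Gamma$ near $\pm\infty$ (via the Mills-ratio asymptotics and log-concavity of the Gaussian) needed to guarantee that the minimum is attained and that the dichotomy is governed by the single value $\Gamma_{\min}$; a secondary point is confirming that the monotone iteration selects the largest finite root rather than overshooting to $-\infty$.
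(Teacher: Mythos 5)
Your proposal is correct and follows the paper's overall architecture: everything reduces to the single fixed-point equation $\pi_F^\invest(x;0,x)=0$ (which decouples because $\thetalow=0$ from Round~1 on), and the dichotomy then propagates exactly as in the paper --- $\xiup=-\infty$ forces $\thetaup=0$ via $\pi_L(\thetaup;\xiup)=0$, which in turn makes $\pi_F^\notinvest(x;0,x)<0$ for all $x$ so that $\xnlow=\infty$, while a finite $\xiup$ forces $\thetaup=\Phi\bigl((\xiup-\thetaup)/\sigma_F\bigr)>0$ and hence multiplicity. Your truncated-Gaussian identity $\EE[\Phi(-w)\mid w>a]=\tfrac12(1-\Phi(a))$ is precisely the paper's conditional rank belief $R^\invest(x;0)=\tfrac12\Phi(x/\sigma_F)$, and your ``sum of strictly positive terms'' computation for the $\notinvest$ branch is the paper's negativity argument in different clothing. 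Where you genuinely depart is the central existence/non-existence analysis. The paper keeps the $x$-parametrization, $x=\rho(x,\sigma_F)=\frac{n-1}{2n}\Phi(x/\sigma_F)-\sigma_F\lambda(x/\sigma_F)$, and argues in two cases split at $\bar{\sigma}_F=(n-1)/(8n\phi(0))$: for $\sigma_F\le\bar{\sigma}_F$ a slope bound $\partial\rho(0,\sigma_F)/\partial x>1$ forces at least two crossings, and for $\sigma_F>\bar{\sigma}_F$ strict concavity of $\rho$ in $x$ on $x>0$ together with $\partial\rho/\partial\sigma_F<0$ produces a tangency defining $\widehat{\sigma}_F$, with a separate argument ruling out roots at $x\le0$. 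Your rescaling $a=-x/\sigma_F$ factors $\sigma_F$ out multiplicatively, turning the same equation into $\sigma_F\,\Gamma(a)=\frac{n-1}{2n}$ with $\Gamma$ independent of $\sigma_F$, so existence of a root reduces to whether $\frac{n-1}{2n\sigma_F}$ lies in the range $[\Gamma_{\min},\infty)$; this makes monotonicity of the dichotomy in $\sigma_F$, uniqueness of the cutoff, and the closed form $\widehat{\sigma}_F=\frac{n-1}{2n\,\Gamma_{\min}}$ immediate, with no concavity, tangency, or case split needed --- a cleaner route. What the paper's parametrization buys instead are qualitative by-products it reuses later (e.g., the location $\xiup>0$ invoked in the proof of Proposition~2, and the ``at least two solutions'' structure), which you would recover from $\Gamma$ only with mild extra work. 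Your two flagged worries are lighter than you fear: coercivity of $\Gamma$ (your asymptotics $\Gamma(a)\sim -a$ and $\Gamma(a)\sim 1/\phi(a)$ are both correct) plus continuity and strict positivity already give an attained minimum, and convergence of the decreasing iteration to the \emph{largest} root rather than overshooting is the standard induction: if $x^*$ is any root, $\xiup^0=\infty\ge x^*$ and monotonicity of the best-response map give $\xiup^k\ge x^*$ for all $k$, so the limit is a fixed point dominating every fixed point, while absence of any root makes divergence to $-\infty$ the only option.
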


In words, Proposition \ref{prop_unique_rat} conveys the message that if the leader has a sufficient informational advantage (i.e.,
$\sigma_F > \widehat{\sigma}_F$), then  
the unique $\Delta$-rationalizable strategy profile features leader type $\theta$ choosing $a_L = \invest$ when $\theta > 0$ and $a_L = \notinvest$ otherwise, and all follower types imitating the leader's action. This leads to a fully efficient outcome. However, 
when followers have relatively precise information (i.e.,
$\sigma_F \leq \widehat{\sigma}_F$), both actions become rationalizable for leader types $\theta \in (0,\thetaup]$ and for follower types in $(-\infty,\xiup]$ if the leader chooses to exert effort and in $(\xnlow, \infty)$ if the leader chooses to exert no effort. Thus, the leader does not necessarily choose the efficient action ($a_L = \invest$) when $\theta \in (0, \thetaup]$ for fear that followers might coordinate against her and choose the inefficient action.

Moreover, whenever we obtain unique rationalizable behavior under history $h=\invest$, we also do so under history $h=\notinvest$. This is because followers understand that the state is negative which implies that $\pi_F^{\notinvest} (\xnlow;\thetaup ,\xnlow)=0$ has no solution. If, however, we get multiplicity of rationalizable profiles under history $h=\invest$, we may or may not get multiplicity under history $h=\notinvest$. This will depend on whether $\pi_F^{\notinvest} (\xnlow;\thetaup ,\xnlow)=0$ has solutions or not for the particular value of $\sigma_F < \widehat{\sigma}_F$ considered. It should also be noted that when the necessary and sufficient condition is not satisfied, then the values of $(\thetaup, \xiup, \xnlow)$ depend on the value of $\sigma_F$ and, thus, the game features noise-dependent selection. We plot the values of $\thetaup$ in Figure 2. Figure \ref{fig_n_vs_sigma_f_hat} illustrates how the value of $\widehat{\sigma}_F$ changes with the number of followers.

\begin{figure}[htbp]
    \centering
    \includegraphics[scale=0.85]{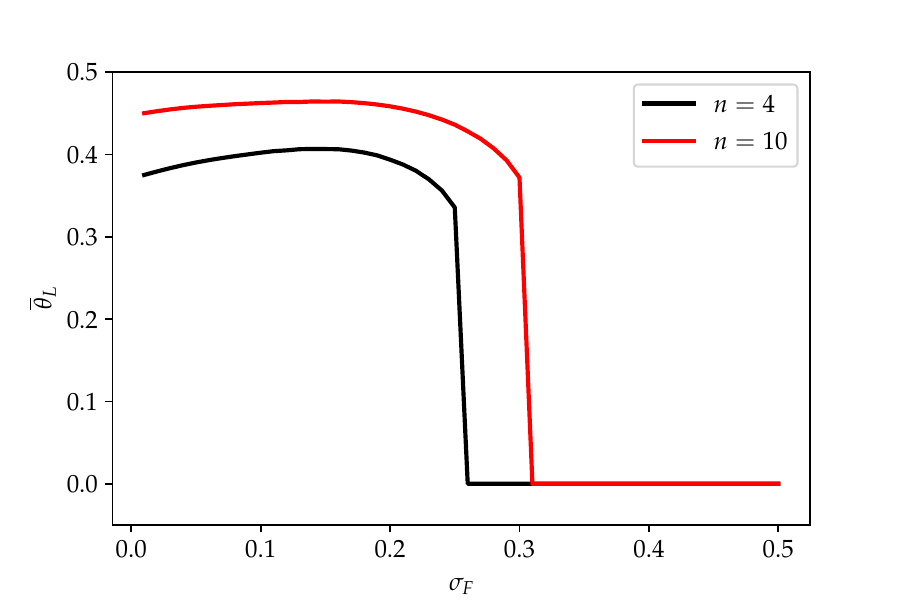}
    \caption{The value of $\thetaup$ as $\sigma_F$ changes.}
    \label{theta_up_vs_sigma_f_hat}
\end{figure}

The next result shows what happens in the limit as $\sigma_F$ tends to zero.  
\begin{proposition} \label{prop_limit}
    In the limit as $\sigma_F \to 0$, the dominance bounds $\thetaup \to (n-1)/(2n)$, $\xiup \to (n-1)/(2n)$, and $\xnlow \to \infty$.
\end{proposition}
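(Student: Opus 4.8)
The plan is to compute the three limits by analyzing the fixed-point equations \eqref{leader_upper}, \eqref{follower_invest_upper}, and \eqref{follower_notinvest_lower} that characterize $(\thetaup, \xiup, \xnlow)$ as $\sigma_F \to 0$. Since Proposition \ref{prop_unique_rat} tells us that for small $\sigma_F$ (specifically $\sigma_F < \widehat{\sigma}_F$) we are in the multiplicity regime where these bounds are interior, I would first establish that for small enough noise the relevant solutions exist and are finite, so that the limiting behavior is governed by these equations rather than by the degenerate branch $(\thetaup, \xiup, \xnlow) = (0, -\infty, \infty)$.

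First I would handle the two coupled equations for $\thetaup$ and $\xiup$ jointly, since \eqref{leader_upper} reads $\thetaup = \Phi\big((\xiup - \thetaup)/\sigma_F\big)$ and \eqref{follower_invest_upper} involves $\pi_F^\invest(\xiup; 0, \xiup) = 0$, recalling $\thetalow = 0$. The key observation is that as $\sigma_F \to 0$ the truncated Gaussian conditional distribution $\Psi^\invest(\cdot; x, z)$ concentrates, so by Lemma \ref{lemma_truncated_expectation} the interim expectation $\EE_{\theta \sim \Psi^\invest(\cdot; x, 0)}[\theta]$ collapses to $\max\{x, 0\}$ in the limit, and the coupling term $\Phi\big((\xiup - \theta)/\sigma_F\big)$ inside $\pi_F^\invest$ degenerates to a step function. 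I would posit the ansatz that $\xiup$ and $\thetaup$ converge to a common interior limit $\ell \in (0,1)$ — which is natural because \eqref{leader_upper} forces $\thetaup - \Phi\big((\xiup-\thetaup)/\sigma_F\big) = 0$ and a finite gap $\xiup - \thetaup$ of order $\sigma_F$ is what keeps $\Phi(\cdot)$ strictly between $0$ and $1$. Substituting this common-limit ansatz into the limiting form of \eqref{follower_invest_upper} and using the reversed hazard rate asymptotics from \eqref{eq_expectation_of_theta} should pin down $\ell = (n-1)/(2n)$, with the factor $(n-1)/n$ coming directly from the spillover coefficient in $\pi_F^\invest$ and the $1/2$ coming from the symmetric Gaussian evaluated at the coincident threshold, i.e. $\Phi(0) = 1/2$.

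For $\xnlow$, equation \eqref{follower_notinvest_lower} is $\pi_F^\notinvest(\xnlow; \thetaup, \xnlow) = 0$, and here I expect to show divergence. Intuitively, under $h = \notinvest$ the follower infers $\theta \le \thetaup \to (n-1)/(2n) < 1$, and since the worst-case expected state is bounded above while the cost term $\frac{1}{n}$ and the negative interim expectation (via the $-\sigma_F \lambda$ correction) push the payoff down, exerting effort requires arbitrarily high private signals to be justified; formally I would argue that the solution $\xnlow$ to $\EE_{\theta \sim \Psi^\notinvest(\cdot;\, \xnlow, \thetaup)}[\theta] = \frac{1}{n} + \frac{n-1}{n}\Phi\big(\text{(coincident-threshold term)}\big)$ is forced to $+\infty$ because, by the second limit in Lemma \ref{lemma_x_payoff} combined with $\thetaup < 1$, no finite signal can raise the truncated-below expectation enough once the noise vanishes and the truncation at $\thetaup$ caps the conditional mean.

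The main obstacle will be making the coupled-limit argument for $\thetaup$ and $\xiup$ rigorous rather than heuristic: the term $\Phi\big((\xiup - \thetaup)/\sigma_F\big)$ is a $\tfrac00$-type indeterminacy as $\sigma_F \to 0$ because both the numerator $\xiup - \thetaup$ and $\sigma_F$ vanish, so I must control the \emph{rate} at which $\xiup - \thetaup \to 0$ relative to $\sigma_F$ and verify the ratio converges to the value forcing $\Phi(\cdot) = (n-1)/(2n)$ — not merely that each bound converges separately. Concretely, I would introduce the rescaled variable $t = (\xiup - \thetaup)/\sigma_F$, rewrite the two equations as a system in $(\thetaup, t)$, apply the dominated-convergence or uniform-convergence machinery to pass to the limit inside the expectation using the log-concavity and tail bounds on $\phi, \Phi$ invoked earlier in the paper, and then solve the resulting limiting algebraic system. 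Establishing that $t$ stays bounded (neither collapsing to $0$ nor diverging) is the delicate step on which the clean value $(n-1)/(2n)$ ultimately rests.
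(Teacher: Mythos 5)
Your high-level route coincides with the paper's---pass to the limit in the threshold equations (\ref{leader_upper}), (\ref{follower_invest_upper}), (\ref{follower_notinvest_lower}), after confirming from Proposition \ref{prop_unique_rat} that interior solutions with $\xiup > 0$ exist in the multiplicity regime---and your argument for $\xnlow \to \infty$ is essentially sound: under $h = \notinvest$ the conditional mean is capped above by $\thetaup$, while indifference demands at least $\frac{1}{n} + \frac{n-1}{2n}$ from the leader-spillover and rank-belief terms, so once $\thetaup \to \frac{n-1}{2n}$ no finite signal can satisfy it. (Two slips there: the truncation is from \emph{above}, not ``truncated-below,'' and $\thetaup < 1$ alone does not suffice, since $1 > \frac{n+1}{2n}$ for $n \geq 2$; you genuinely need the limit of $\thetaup$, which you do also invoke.) The paper instead runs a three-case contradiction on $(\thetaup - \xnlow)/\sigma_F$; your cap-versus-floor comparison is a legitimate and arguably cleaner substitute.

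The genuine gap is in the $(\thetaup, \xiup)$ part, where you misplace the crux. The system is triangular, not coupled: (\ref{follower_invest_upper}) is $\pi_F^\invest(\xiup; \thetalow, \xiup) = 0$ with $\thetalow = 0$ (as you yourself write), so it determines $\xiup$ on its own, and $\thetaup$ then falls out of (\ref{leader_upper}). The ratio you single out as ``the delicate step,'' $t = (\xiup - \thetaup)/\sigma_F$, carries no weight for the value $\frac{n-1}{2n}$: once $\xiup \to \ell \in (0,1)$ is established, the leader equation $\thetaup = \Phi(t)$ automatically forces $\thetaup \to \ell$ and $t \to \Phi^{-1}(\ell)$; the paper's Step 2 just rules out any other limit $\tau$ by noting it would force $(\xiup - \thetaup)/\sigma_F \to \pm\infty$ and hence $\thetaup \to 1$ or $0$. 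The ratio that actually matters is $\xiup/\sigma_F$, the rescaled distance to the truncation point $\thetalow = 0$: the follower indifference reads $\xiup + \sigma_F \lambda(\xiup/\sigma_F) = \frac{n-1}{2n}\Phi(\xiup/\sigma_F)$, and your ``$\Phi(0) = 1/2$ at the coincident threshold'' identification of the constant is valid only after showing the truncation is asymptotically negligible, i.e.\ $\Phi(\xiup/\sigma_F) \to 1$---the $1/2$ is the prefactor of the conditional rank belief $R^\invest(\xiup; 0) = \frac{1}{2}\Phi(\xiup/\sigma_F)$, not a Gaussian evaluated at zero. The missing step, which is exactly the paper's Step 1, is the dichotomy ruling out $\xiup \to 0$: if $\xiup/\sigma_F \to k \in [0, \infty)$ the equation tends to $-\frac{n-1}{2n}\Phi(k) \neq 0$, and if $\xiup \to 0$ with $\xiup/\sigma_F \to \infty$ it tends to $-\frac{n-1}{2n} < 0$; only $\xiup \to \aleph > 0$ survives, whence $\aleph = \frac{n-1}{2n}$ (this uses $\xiup > 0$, inherited from the proof of Proposition \ref{prop_unique_rat}). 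Indeed, had you solved your coupled limiting system literally---reading the rank belief off the gap $t$, giving $\ell = \frac{n-1}{2n}\Phi(t)$ against $\ell = \Phi(t)$---you would conclude $\Phi(t) = 0$ and $\ell = 0$, the wrong answer; the clean value comes from the follower equation anchored at $\thetalow$, not from the leader--follower gap.
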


It is worth noting that the leader's and followers' upper threshold is given by $(n-1)/2n$ which corresponds to the ``risk dominant'' strategy profile of the subgame given the spillover benefit of the leader's action. This is because, in this extreme, the followers are allowed to have beliefs that completely shut down the informational role of the leader. In particular, this would be the unique rationalizable behavior of an alternative game, with the same payoffs as in this subgame, where there is no leader and the beliefs about the state are given by the Bayesian updating of the prior after followers receive their signals. This will be better illustrated in the next section where we interpret our results in terms of "conditional rank beliefs".

\begin{figure}[htbp]
    \centering
    \includegraphics[scale=0.8]{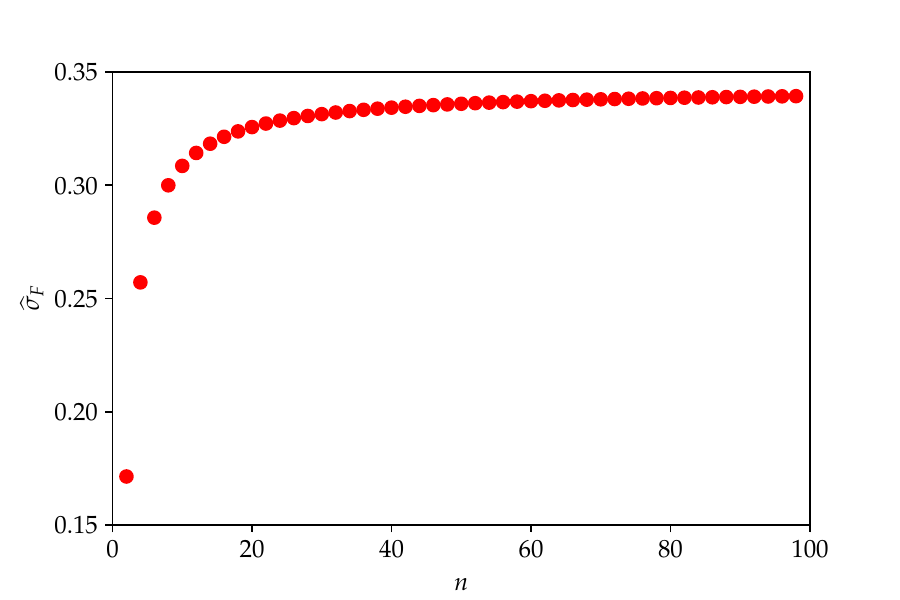}
    \caption{ The value of $\widehat{\sigma}_F$ increases with $n$.}
    \label{fig_n_vs_sigma_f_hat}
\end{figure}

\subsection{The Signaling Role of Leader and Information Traps}

It is now evident that when the condition $\sigma_F > \widehat{\sigma}_F$ is met, we obtain \emph{efficient leadership}: the leader chooses action $a_L = \invest$ whenever it socially desirable to do so (i.e., $\theta > 0$) and chooses $a_L = \notinvest$ otherwise. In this case, the outcome of the game corresponds to the fully efficient subgame perfect equilibrium of the complete information game, which may not come as a surprise.
Indeed, \cite{komai_et_al_2007} reach a similar conclusion in a different framework, which focuses on the signaling role of leaders primarily in organizational economics settings. However, our global games perspective speaks to leadership in a variety of scenarios, such as regime change, bank runs, and currency attacks. Additionally, our result reinforces theirs by deriving it using a weaker solution concept.

However, the significance of our contribution lies in the converse direction, demonstrating that if followers possess sufficiently precise private information, the leader may choose a socially undesirable action. The leader may fear that her well-informed followers might make the ``wrong'' decision, which could, in turn, compel her to act inefficiently by selecting $a_L = \notinvest$ even when $\theta > 0$.
Furthermore, even if $\theta < 0 $ and the leader chooses $\notinvest$, well-informed followers might be tempted to choose $\invest$, which is socially undesirable in this scenario because they do not know what type of leader chose action $\notinvest$. Therefore, we may encounter an \textit{information trap}: better-informed followers might make the leader choose incorrectly, or even if the leader does choose the efficient action, they, themselves, might not do so. On the other hand, if these followers were ``kept in the dark,'' efficient coordination would be achieved as the unique rationalizable outcome of the game. This is surprising in the sense that, we do not obtain ``limit uniqueness'' but rather ``limit multiplicity'' of rationalizable profiles, contrary to the standard results in the global games literature (see, for example, \cite{frankel_et_al_2003}).

Therefore, it is clear that the role of the leader is undermined by the more precise information the followers hold. Our model uncovers two opposite forces that compete with each other: we term them the signaling effect and the miscoordination effect. The necessary and sufficient condition derived in Proposition \ref{prop_unique_rat} makes certain that the signaling effect dominates the miscoordination effect and, as a result, ensures that the efficient outcome is realized. We now proceed to shed more light on these two forces and make the tension uncovered clearer.

\subsubsection{Signaling Effect versus Miscoordination Effect: Why Multiplicity Happens?}

To see why multiplicity presents itself, we will analyze the subgame after history $h$ and consider the rationalizable profiles of followers' type-strategy pairs. Notice that given the action of the leader, the subgame appears to be a global game except we have one-sided dominance regions. To make the intuition clearer it is useful to consider \textit{monotone strategies}. In that regard, consider type $x$ of follower $j$.  This type of follower $j$ does not know which threshold the leader used to make the choice that led to history $h$ being realized, so let $z$ denote this threshold.  Now, define follower $j$'s \textit{conditional rank belief} as the probability he assigns to the event that the other follower's type $x_k$ is at most his own ($x_j = x$) conditional on history $h$. We have that:

\begin{equation} \label{rank_belief}
    R^h(x; z)= \prob(x_k \leq x_j \vl x_j = x, h) = \frac{1}{2}\left[\Phi\left(\frac{x - z}{\sigma_F}\right) + \chi_\notinvest \right] 
\end{equation}
This definition is a direct extension of the rank belief function introduced in \cite{morris_et_al_2016} and \cite{morris_yildiz_2019}. 

Assume that follower $j$ conjectures that his opponents in the subgame will use a threshold $x_h$. Then, the expected payoff to choosing $\invest$ under history $h$ is given by:
\begin{equation} \label{follower_payoff_rank_belief}
    \pi_F^h(x; z, x_h) = \EE_{\theta \sim \Psi^h(\cdot; \,x, z)} \left[ \theta - \frac{n-1}{n} \Phi\left(\frac{x_h-\theta}{\sigma_F}\right)   \right] - \frac{\chi_\notinvest}{n}.
\end{equation}
Now, consider the type of follower $j$ whose signal is equal to the conjectured threshold of followers $-j$. Then, we can write this type's expected payoff to choosing $\invest$ as 
\begin{equation*}
    \pi_F^h(x_h; z, x_h) = \EE_{\theta \sim \Psi^h(x_h; \,z, x_h)} \left[ \theta \right] + \frac{n-1}{n}\left[1 - R^h(x_h; z)\right] + \frac{\chi_\invest}{n} - 1.
\end{equation*}

In order for type $x_h$ of follower $j$ to be indifferent between choosing $\invest$ and $\notinvest$ it must be the case that $x_h$ must solve for all $h$
\begin{equation} \label{eq_eqm_follower}
   \EE_{\underbrace{\theta \sim \Psi^h(\cdot; \,z, x_h)}_{\text{signaling effect}}} \left[ \theta \right] -  \underbrace{\frac{\chi_\notinvest}{n}}_{\text{spillover from leader's action}}= \underbrace{\frac{n-1}{n}R^h(x_h; z)}_{\text{miscoordination effect}} 
\end{equation}
The left-hand side of this equation captures the twofold effect of the leader's choice. First, the signaling effect simply says that by observing history $\invest$, it must be the case that $\theta>z$. The rationality of the leader implies that $z\geq 0$ and this is common knowledge among followers. Second, if the leader chose $\notinvest$ there is a negative spillover benefit to followers. The right-hand side captures the miscoordination effect: given the leader's threshold, if a follower $-j$ invests only if his type is greater than $x_h$, then type $x_h$ of follower $j$ faces an expected loss equal to (1/n) times the probability of this event, which is given exactly by the conditional rank belief function. 
Notice that if type $x_h$ of follower $j$ is indifferent between $\invest$ and $\notinvest$, then any type lower than $x_h$ will choose $\notinvest$ given the conjectured strategies.

Focus on history $h=\invest$. An analogous argument holds for history $h=\notinvest$. Notice that we can write Equation (\ref{eq_eqm_follower}) under $h = \invest$ as 
\begin{equation} \label{eq_eqm_follower_rewrite}
    x_\invest + \sigma_F \lambda\left(\frac{x_\invest - z}{\sigma_F}\right) = \frac{n-1}{n}\left[\frac{1}{2}\Phi\left(\frac{x_\invest - z}{\sigma_F}\right)\right]
\end{equation}
 Since the leader choosing action $\invest$ whenever $\theta>0$ and followers imitating the leader's action is always rationalizable behavior, to get uniqueness,
Equation (\ref{eq_eqm_follower_rewrite}) must either have no solution or have exactly one solution for $z=\thetalow=0$. In the former case we obtain the efficient outcome, while in the latter, we obtain uniqueness but full efficiency is not achieved.
Observe that
the noise, $\sigma_F$, affects both the signaling and the miscoordination effect. Moreover, the former is increasing in $\sigma_F$. In contrast, the latter does not change monotonically with $\sigma_F$ but features rapid slope change around $x_h=z$.\footnote{~It changes monotonically for $x_h<z$ and $x_h>z$ but not for all $x$.} For large values of $\sigma_F$  (i.e., $\sigma_F \geq \widehat{\sigma}_F$), the signaling effect dominates the miss-coordination effect, implying that Equation (\ref{eq_eqm_follower_rewrite}) has no solution, that is, the expected loss is always smaller than the expected return to choosing $\invest$. This means that $\xiup^k \to \xilow$ as $k \to \infty$ and hence $\invest$ is the unique rationalizable action in the subgame following $h = \invest$. See Figure \ref{fig_invest_unique} for an illustration.
As a consequence, $\thetalow = \thetaup$, $\xnlow = \xnup$ and we obtain a unique $\Delta$-rationalizable strategy profile.

\begin{figure}[htbp]
    \centering
    \includegraphics[scale=0.8]{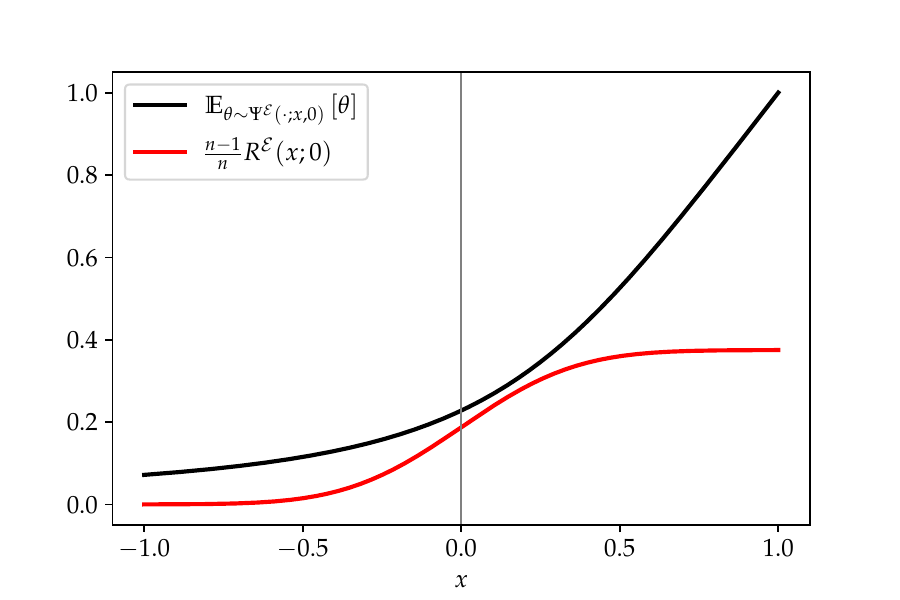}
    \caption{Unique rationalizable behavior when $\sigma_F > \widehat{\sigma}_F$ ($n=4$).}
    \label{fig_invest_unique}
\end{figure}

On the contrary, for small values of $\sigma_F$ (i.e., $\sigma_F \leq \widehat{\sigma}_F$), the signaling effect stops playing a dominant role, meaning that the expected loss may be higher than the expected return. Hence Equation (\ref{eq_eqm_follower_rewrite}) may have multiple solutions (see Figure \ref{fig_invest_multiple}). Indeed, this is the case. In particular, the largest solution corresponds to the limit $\xiup$ to which the sequence $(\xiup^k)$ will converge. This yields the fact that both actions are rationalizable for follower types $(\xilow,\xiup]$ with $\xilow=-\infty$.

At this point, the fact that $\theta>0$ has become common knowledge plays a minimal role. In fact, as $\sigma_F$ approaches zero, the signaling effect is completely obliterated and followers behave as if $\theta$ can take any value on the real line\footnote{~In fact, the monotone strategy profiles with thresholds $\xilow$ and $\xiup$ are the least and greatest Bayesian Nash equilibria that bound all rationalizable strategies in the sub-game that follows action $\invest$ of the leader \citep{van_zandt_vives_2007}.}. An interesting aspect of the model is that 
$R^\invest(\xiup, \thetalow) \to 1/2$ as $\sigma_F \to 0^+$. This guarantees that the monotone strategy profile with threshold $\xiup$ which bounds the set of rationalizable profiles in the subgame, corresponds to the ``risk dominant'' equilibrium \citep{harsanyi_selten_1988} of the subgame game given history $\invest$. That is, a follower will only choose $\invest$ if $\invest$ is a best response to a uniform belief over other followers choosing each action. However, this point cannot be supported as an equilibrium in monotone strategies of the whole game. When followers play according to the threshold $\xiup$, the leader will best respond by using a monotone strategy with threshold $\thetaup > \thetalow$, which is the limit of $\thetaup^k$ as $k \to \infty$. Thus, any leader type $\theta \in (\thetalow, \thetaup)$ will find both actions rationalizable. 

\begin{figure}[htbp]
    \centering
    \includegraphics[scale=0.8]{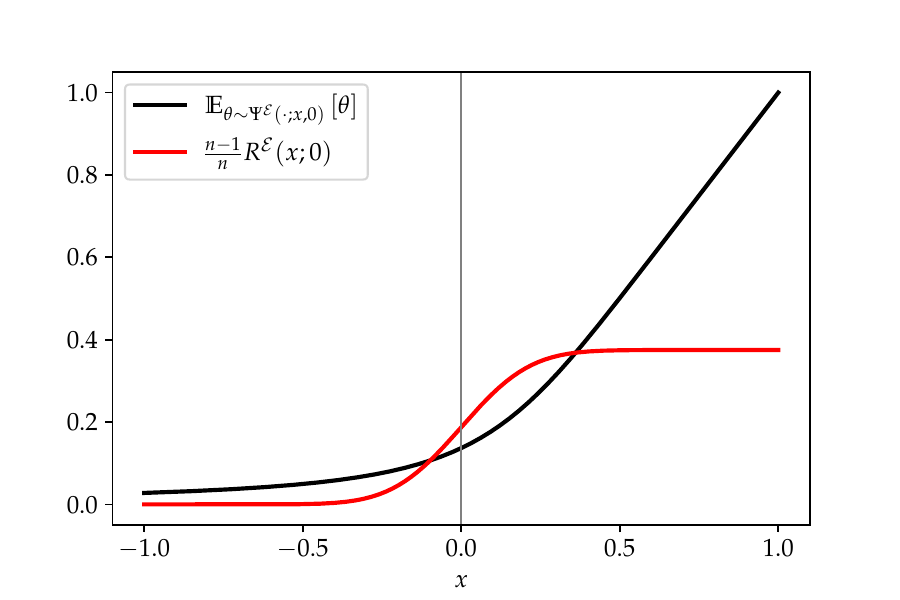}
    \caption{Multiple rationalizable type-strategy profiles when $\sigma_F \leq \widehat{\sigma}_F$ ($n=4$) with the right intersection point being the upper Bayes Nash equilibrium of the subgame.}
    \label{fig_invest_multiple}
\end{figure}

\subsection{Discussion}

\subsubsection{Equilibrium Behavior}

If the necessary and sufficient condition derived in Proposition \ref{prop_unique_rat} is satisfied, the game features unique rationalizable behavior. This immediately implies that the game features unique equilibrium behavior. In particular, the unique rationalizable strategy profile corresponds to the unique Perfect Bayesian Equilibrium of the game, which is in monotone strategies, with thresholds for the leader and followers given by $\theta_L^*=\thetaup=0 $, $\xilow=\xiup=-\infty$ and $\xnlow=\xnup=\infty$. 

As we prove in Appendix B, it is the case that when one considers monotone strategies only, the game always has a unique monotone equilibrium. However, restricting attention to these types of strategies only may be with loss, since one cannot rule out the existence of other equilibria in more complicated, non-monotone strategies. This is one of the reasons that we chose rationalizability as our solution concept.

\subsubsection{Peer-confirming Equilibrium}

Although our model features incomplete information about a fundamental state, the result is consistent with the prediction delivered by the peer-confirming equilibrium of \cite{lipnowski_sadler_2019}. In a leader-centered star network where all followers observe the leader's strategy, \cite{lipnowski_sadler_2019} argue that ``imitation'' may arise as the unique extensive-form peer-confirming equilibrium since followers do not observe any information that could contradict the leader's rationality. 
The leader, thus, has the advantage of inducing others to imitate her behavior.
In this regard, Proposition 1 deals with the interplay between strategic uncertainty and fundamental uncertainty by
identifying a necessary and sufficient condition under which the unique $\Delta$-rationalizable strategy profile that arises also corresponds to the unique peer-confirming equilibrium that highlights imitation.

\medskip

\medskip

\section{Extension: Leader is not Perfectly Informed}
\subsection{Alternative Information Structure} \label{sect_noisy_signaling}

Suppose, now, that instead of perfectly learning the state $\theta$, the leader observes a noisy signal $x_L = \theta + \sigma_L \varepsilon_L$ with $\varepsilon_L$ being a standard Gaussian noise independent of $\theta$ and $\varepsilon_j$ for any $j \in F$. We deem this extension important for three reasons. First, from an applied perspective, it is more natural to assume that the leader observes a noisy signal about the state rather than its true value. Second, we document that multiplicity of rationalizable behavior is not a consequence of the one-sided dominance in the subgames. That is, the reason for multiplicity is not that the action of the leader when she is perfectly informed introduces ``too much'' common knowledge. Rather, the main result of the paper continues to hold with the (noisy) information structure that brings back the standard two-sided dominance. Finally, this extension will show that while it must be the case that the leader is fully informed to obtain efficiency, it is not enough for the leader to be better informed than the followers to obtain unique rationalizable behavior. The requirement that the noise in leader's information is smaller than the one in the followers' (i.e., $\sigma_L < \sigma_F$), is neither necessary nor sufficient for uniqueness.  

\subsection{Analysis and Rationalizable Behavior}

Call $x_L \in X_L = \RR$ the leader's type. A strategy for the leader is now a mapping $s_L: X_L \to \Action_L$. Let $S_L$ denote the strategy space for the leader. The actions and strategies of the followers remain the same as defined in Section \ref{sect_signaling_game}.
Let $\mu_L(\cdot\vl x_L) \in \Delta \left(\Theta \times X \times S \right)$ be leader $x_L$'s belief about the state and the type-strategy pairs of the followers, where $X \times S = \prod_{j\in F} (X_j \times S_j)$. Since there is no learning for the leader, the marginal of $\mu_L(\cdot \vl x_L)$ about $\theta$ has a Gaussian distribution with mean $x_L$ and variance $\sigma_L^2$.
Let $\mu_j(\cdot \vl x_j, h) \in \Delta \left(\Theta \times X_L \times X_{-j} \times S_{-j} \right)$ be follower $j$'s belief about 
the state and the type-strategy pairs of his opponents given type $x_j$ and history $h$, where $X_{-j} \times S_{-j} = \prod_{k \neq j \in F} (X_k \times S_k)$.

For type $x_L$ of the leader, exerting effort is the best response to a belief $\mu_L(\cdot \vl x_L)$ if
\begin{equation*}
    \int_{(\theta, x, s)} u(\theta, A_{-L}(s)) \dd \mu_L(\theta, x, s \vl x_L) > 0,
\end{equation*}
where $A_{-L}(s(\invest)) = \sum_{j \in F} \one(s_j(\invest) = \invest)$.
For a follower with type $x_j$, $s_j(h) = \invest$ is the best response to $\mu_j(\cdot \vl x_j, h)$ when
\begin{equation*}
    \int_{(\theta, x_L, x_{-j}, s_{-j})} u\left(\theta, A_{-j}(h, s_{-j}) \right) \dd \mu(\theta, x_L, x_{-j}, s_{-j} \vl x_j, h) > 0
\end{equation*}
where $A_{-j}(h, s_{-j}(h)) = \chi_\invest + \sum_{k \in F, \,k \neq j} \one(s_k(h) = \invest)$. The initial set of type-strategy pairs for the leader in the definition of $\Delta$-rationalizability is now given by $R_L^0 =X_L \times S_L$.

Suppose that a follower type $x$ believes that the leader uses a monotone strategy with threshold $z$; i.e., $a_L = \invest$ if and only if $x_L > z$. Then, 
upon observing $h = \invest$, type $x$'s interim belief has CDF
\begin{equation} \label{posterior_invest}
    G^\invest(\theta; \,x, z) = \frac{1}{ \Phi\left( \frac{x - z}{\sigma}\right) } \int_{-\infty}^{\theta} \frac{1}{\sigma_F} \phi\left(\frac{t- x}{\sigma_F}\right) \Phi\left(\frac{t - z}{\sigma_L}\right)\dd t,
\end{equation}
where $\sigma^2 = \sigma_F^2 + \sigma_L^2$.
Similarly, type $x$'s interim CDF under history $h = \notinvest$ is
\begin{equation} \label{posterior_not_invest}
    G^\notinvest(\theta; \,x, z) = \frac{1}{\Phi \left(\frac{z - x}{\sigma}\right) } \int_{-\infty}^{\theta}
    \frac{1}{\sigma_F} \phi\left(\frac{t-x}{\sigma_F}\right) \Phi\left(\frac{z - t}{\sigma_L}\right)\dd t.
\end{equation}
It is worth noting that $G^h(\cdot; \,x, z)$, unlike the interim beliefs $\Psi^h(\cdot; \,x, z)$ in the main model, has support over the entire real line. Thus, the subgames no longer feature one-sided dominance in the $\Delta$-rationalizability procedure.

If type $x$ believes further that other followers use monotone strategies with threshold $x_h$ under history $h$, then his payoff to choosing $\invest$ yields
\begin{equation*}
    \pi_F^h(x_j; z, x_h) = \EE_{\theta \sim G^h(\cdot; x, z) } \left[\theta - \frac{n-1}{n} \Phi\left(\frac{x_h - \theta}{\sigma_F}\right) \right]  -\frac{\chi_\notinvest}{n}.
\end{equation*}
We show in Appendix A that $\pi_F^h(x; z, x_h)$ is strictly increasing in $x$ and crosses zero once from below. Furthermore, it is strictly increasing in $z$ and strictly decreasing in $x_h$.
Type $x$'s 
conditional rank belief is given by
\begin{equation} \label{ext_rank_belief}
\begin{cases}
    R^\invest(x; z) = \prob(x_k \leq x_j \vl x_j = x, x_L > z) = \frac{1}{2} - \frac{T\left(\frac{x-z}{\sigma},  
 ~\alpha \right)}{\Phi\left(\frac{x - z}{\sigma}\right)} & \\
    ~ & \\
    R^\notinvest(x; z)  = \prob(x_k \leq x_j \vl x_j = x, x_L \leq z) = \frac{1}{2} + \frac{T\left(\frac{z - x}{\sigma}, ~\alpha \right)}{\Phi\left(\frac{ z - x}{\sigma}\right)} 
\end{cases},
\end{equation}
where $\alpha = \sigma_F/(2\sigma_L^2 + \sigma_F^2)^{1/2}$ and $T(y, a)$ is \textit{Owen's T-function}.\footnote{\label{owen_t}~Owen's T-function, first introduced by \cite{owen_1956}, is defined by 
\[
T(y,a) = \frac{1}{2\pi}\int_0^a \frac{\ee^{-(1+t^2)y^2/2}}{1+t^2} \dd t.
\]
It gives the probability of the event $\{X > y, ~0 < Y < a X \}$ when $X$ and $Y$ are independent standard Gaussian random variables. 
See \cite{savischenko_2014} and \cite{brychkov_savischenko_2016} for an overview of the function.} The derivation of (\ref{ext_rank_belief}) is given in Appendix A. When $x = x_h$, it can be shown that
\begin{equation} \label{ext_follower_fp_eqn}
    \pi_F^h(x_h; z, x_h) = \EE_{\theta \sim G^h(\cdot; \,x_h, z)}[\theta] - \frac{n-1}{n}R^h(x_h; z) - \frac{\chi_\notinvest}{n}.
\end{equation}

Now consider type $x_L$ of the leader. Suppose that leader $x_L$ believes that
followers use monotone strategies with threshold $x_\invest$ under history $h = \invest$. Then her payoff to choosing $\invest$ is
\begin{equation*}
    \pi_L(x_L; x_\invest) = x_L - \Phi\left(\frac{x_\invest -x_L}{\sigma}\right).
\end{equation*}
which is strictly increasing in $x_L$ and strictly decreasing in $x_\invest$.

The $\Delta$-rationalizability procedure again yields six sequences. We prove in Appendix A that $(\xlow^k)_{k=0}^\infty$, $(\xilow^k)_{k=0}^\infty$, and $(\xnlow^k)_{k=0}^\infty$ are increasing and bounded above, and $(\xup^k)_{k=0}^\infty$, $(\xiup^k)_{k=0}^\infty$, and $(\xnup^k)_{k=0}^\infty$ are decreasing and bounded below.
The monotone convergence theorem therefore guarantees that they converge to $\xlow$, $\xilow$, $\xnlow$, $\xup$, $\xiup$, and $\xnup$, respectively. In addition, the limits together solve the following system of equations:
\begin{equation} \label{ext_sys_eqs}
    \begin{cases}
        \pi_L(\xlow; \xilow) = 0 \\
        \pi_L(\xup; \xiup) = 0 \\
        \pi_F^\invest(\xilow; \xup, \xilow) = 0 \\ \pi_F^\invest(\xiup; \xlow, \xiup) = 0 \\
        \pi_F^\notinvest(\xnlow; \xup, \xnlow) = 0 \\
        \pi_F^\notinvest(\xnup; \xlow, \xnup) = 0
    \end{cases}.
\end{equation}

For a given pair of $(\sigma_L, \sigma_F)$, note that we can view it as a point on the ray from the origin with slope $\sigma_F/\sigma_L$. To understand which pair of $(\sigma_L, \sigma_F)$ induces unique $\Delta$-rationalizable behavior, we establish a sufficient condition on each fixed ray $\sigma_F = \gamma \sigma_L$, $\gamma \geq 0$, along which the slope parameter of Owen's T-function is given by $\alpha = \gamma/\sqrt{2 + \gamma^2}$.

\begin{proposition} \label{prop_unique_rat_ext}

The $\Delta$-rationalizable sets are $R_L^\infty = R_L^0 \setminus \overline{R}_L^\infty$ and $R_{F, \,j}^\infty = R_{F, \,j}^0 \setminus \overline{R}_{F, \,j}^\infty$, where
\begin{equation*}
     \overline{R}_L^{\infty} = \left\{(x_L, a_L) \vl~ a_L = \invest ~\text{if}~ x_L \leq \xlow ~\text{and}~ a_L = \notinvest ~\text{if}~ x_L > \xup \right\}
\end{equation*}
and
\begin{equation*}
\overline{R}_{F, \,j}^\infty =  \left\{(x_j, s_j) \vl
\text{$s_j(h) = \invest$ if $x_j \leq \underline{x}_h$ and  $s_j(h) = \notinvest$ if $x_j > \overline{x}_h$, \text{for all}~$h \in \Action_L$} \right\}
\end{equation*} 
Moreover, \\
(i) there exists $\widehat{\sigma}_L(\gamma)$ such that the game has unique $\Delta$-rationalizable behavior if $\sigma_L > \widehat{\sigma}_L(\gamma)$; \\
(ii) in the limit as $\sigma_L \to 0$ (while keeping $\sigma_F/\sigma_L = \gamma$ fixed), the game features multiplicity of $\Delta$-rationalizable behavior.
\end{proposition}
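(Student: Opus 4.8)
The plan is to mirror the structure of the main model's analysis, exploiting the monotone-convergence machinery already set up for the six dominance-bound sequences. Since I may assume that the three lower sequences $(\xlow^k),(\xilow^k),(\xnlow^k)$ increase and are bounded above while the three upper sequences $(\xup^k),(\xiup^k),(\xnup^k)$ decrease and are bounded below, the limits exist and jointly solve the system \eqref{ext_sys_eqs}. The characterization of the rationalizable sets $R_L^\infty$ and $R_{F,\,j}^\infty$ then follows exactly as in Proposition \ref{prop_unique_rat}: a leader type plays $\invest$ for sure below $\xlow$ and $\notinvest$ for sure above $\xup$, while each follower type plays $\invest$ for sure below $\underline{x}_h$ and $\notinvest$ for sure above $\overline{x}_h$, and everything in between admits both actions. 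Uniqueness of $\Delta$-rationalizable behavior is therefore \emph{equivalent} to the collapse of every gap, i.e.\ $\xlow=\xup$, $\xilow=\xiup$, and $\xnlow=\xnup$. So the entire content of parts (i) and (ii) reduces to controlling the gaps between the paired solutions of \eqref{ext_sys_eqs} as a function of $\sigma_L$ along a fixed ray $\sigma_F=\gamma\sigma_L$.

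For part (i), I would work along the fixed ray and show that large $\sigma_L$ forces the gaps to vanish. The key is that the relevant fixed-point equation for the follower threshold, analogous to \eqref{eq_eqm_follower}, balances a signaling/learning term against the miscoordination term $\frac{n-1}{n}R^h(x_h;z)$, where now $R^h$ is the \emph{Owen's-T} conditional rank belief from \eqref{ext_rank_belief}. Because the marginal posterior on $\theta$ has standard deviation of order $\sigma=\sqrt{\sigma_F^2+\sigma_L^2}$, the expectation term $\EE_{\theta\sim G^h(\cdot;\,x_h,z)}[\theta]$ responds steeply to the threshold while the miscoordination term stays bounded by $\frac{n-1}{2n}$; as $\sigma_L\to\infty$ the signaling effect dominates and the equation admits no interior crossing, exactly as in Figure \ref{fig_invest_unique}. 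Formally I would show that the map defining the iteration becomes a contraction (or that the upper and lower fixed-point equations share a unique solution) for $\sigma_L$ large, giving $\xilow=\xiup$ and $\xnlow=\xnup$; feeding this back through $\pi_L(\xlow;\xilow)=0=\pi_L(\xup;\xiup)$ and the monotonicity of $\pi_L$ then yields $\xlow=\xup$. Existence of a threshold $\widehat{\sigma}_L(\gamma)$ follows by a monotonicity/continuity argument in $\sigma_L$, since the signaling term's strength is increasing in $\sigma_L$ along the ray.

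For part (ii), I would show that as $\sigma_L\to0$ along the ray (so $\sigma_F\to0$ too, but with $\sigma_F/\sigma_L=\gamma$ fixed), the gaps remain bounded away from zero. The natural route is to identify the limiting fixed-point equation and exhibit two distinct solutions. The slope parameter of Owen's T becomes $\alpha=\gamma/\sqrt{2+\gamma^2}$, a constant on the ray, so $R^h$ does \emph{not} degenerate; evaluating the rank belief at a symmetric crossing should again deliver the ``risk-dominant'' value $\tfrac12$ in the relevant limit, paralleling $R^\invest(\xiup,\thetalow)\to1/2$ from the main model. This keeps the miscoordination term at $\frac{n-1}{2n}$ while the signaling term collapses, so the fixed-point equation retains multiple crossings and the upper and lower limits stay separated, giving multiplicity. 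The main obstacle I anticipate is the analytic control of the Owen's-T conditional rank belief: unlike the clean Gaussian expression \eqref{rank_belief} in the main model, \eqref{ext_rank_belief} couples the two noise scales nonlinearly, so establishing the monotonicity and limiting behavior of $R^h$ in $\sigma_L$ along the ray---and hence pinning down the sign of the signaling-minus-miscoordination difference at the boundaries---will require careful estimates on $T(y,\alpha)$ and its ratio to $\Phi$, which is where the bulk of the technical work will lie.
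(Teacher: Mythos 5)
Your skeleton matches the paper's: the limits of the six monotone sequences solve \eqref{ext_sys_eqs}, uniqueness is equivalent to the collapse of the three gaps, part (i) is obtained by a derivative/contraction bound for large $\sigma_L$ along the ray, and part (ii) by exhibiting two distinct solutions of the limiting fixed-point equation, using that the Owen's-T slope $\alpha=\gamma/\sqrt{2+\gamma^2}$ is constant along the ray; the estimates you anticipate on $T(y,\alpha)/\Phi(y)$ are exactly the content of the paper's Lemma \ref{lemma_s_func}. However, two of your steps would fail as written. In part (i) you claim that for large $\sigma_L$ the fixed-point equation ``admits no interior crossing, exactly as in Figure \ref{fig_invest_unique}.'' That is impossible in the extension: the posteriors $G^h$ have full support, so by Lemma \ref{lemma_payoff_ext} the diagonal payoff $\pi_F^h(x;z,x)$ runs from $-\infty$ to $+\infty$ and \emph{always} has an interior zero---this is precisely why the unique outcome here is inefficient with noise-dependent thresholds, unlike the main model. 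The paper instead shows $\partial \pi_F^h/\partial x_h>0$ under the bound $\widehat{\sigma}_L^1(\gamma)$ built from $\Lambda(\gamma)=\max_y S'(y)/\bigl(1+\gamma^2(1+\lambda'(y))\bigr)$, giving exactly one crossing; and the mechanism is not that the expectation ``responds steeply'' (its slope in the threshold is bounded between $1/(1+\gamma^2)$ and $1$) but that the rank-belief slope $\tfrac{n-1}{n\sigma}S'$ vanishes as $\sigma_L$ grows. Also, uniqueness of the diagonal system \eqref{ext_sys_small} alone does not collapse the gaps, since the extremal limits satisfy the \emph{crossed} system (lower follower bound paired with upper leader bound); your one-line contraction hedge is the right idea, which the paper makes precise via the composition $h=g\circ f\circ g\circ f$ on $\Xi_L$ with $h'<1$ under a second bound $\widehat{\sigma}_L^2(\gamma)$, so that $\widehat{\sigma}_L(\gamma)=\max\{\widehat{\sigma}_L^1(\gamma),\widehat{\sigma}_L^2(\gamma)\}$.

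In part (ii) your plan to ``exhibit two distinct solutions'' is missing the separating step. The paper first proves the strict inequality $x_\invest^*<x_L^*$ (if $x_\invest^*\geq x_L^*$ then $x_L^*\geq 1/2$ and the indifference payoff is at least $1/(2n)>0$, a contradiction), so that $(x_\invest^*-x_L^*)/\sigma$ tends to some $k\leq 0$ or to $-\infty$, and the limiting equilibrium threshold $\widecheck{x}_\invest^*=\tfrac{n-1}{n}S(k)$ lies \emph{strictly below} $(n-1)/(2n)$. Your claim that the rank belief at the crossing delivers the risk-dominant value $1/2$ is true only of the second limiting solution at $(n-1)/(2n)$, where $(x-x_L^*)/\sigma\to+\infty$ and $S\to 1/2$; at the equilibrium solution it is $S(k)<1/2$. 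Without the strict separation the two candidate limit solutions could coincide and no multiplicity follows; one must also verify, as the paper does in its Step 3, that the leader's best response to the follower threshold $(n-1)/(2n)$ itself converges to $(n-1)/(2n)\neq\widecheck{x}_L^*$, so that both actions are rationalizable for leader types in $(\widecheck{x}_L^*,(n-1)/(2n))$ and follower types in $(\widecheck{x}_\invest^*,(n-1)/(2n))$.
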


Proposition \ref{prop_unique_rat_ext} is an analog of Proposition \ref{prop_unique_rat}. It establishes that the game generally features multiplicity of rationalizable behavior. In particular, this is necessarily the case, given the ray $\gamma$, in the limit where both noises approach zero in a way that their ratio is always given by $\gamma$. Moreover, for each $\gamma$, one can increase the noises $(\sigma_F,\sigma_L)$ in a way that their ratio is given by $\gamma$ and the game features unique rationalizable behavior. Note, however, that now, even when unique rationalizable play is obtained, the thresholds the agents use depend on the noises $(\sigma_F,\sigma_L)$. This was not the case in our main model. There, as long as $\sigma_F >\widehat{\sigma}_F$, the unique rationalizable play was always the fully efficient one. Finally, the leader having more accurate information than the followers is neither necessary nor sufficient to obtain unique rationalizable behavior, since this can happen irrespective of whether $\gamma$ is greater, equal, or less than one. On the other hand, the leader being arbitrarily better informed than followers is necessary to obtain the efficient outcome.

\subsection{Discussion}

\subsubsection{Signaling Effect, Miscoordination Effect, and Multiplicity}

In a similar spirit to the analysis of the main model, one can analyze the subgame after history $h$ and consider the rationalizable profiles of followers' type-strategy pairs.  Let $z=x_L^* $, that is, $z$ be equal to the leader's threshold in the case where unique rationalizable behavior obtains. Assume that $\widehat{x}$ is the type of follower who is indifferent between choosing $\invest$ and $\notinvest$. One can rewrite Equation \ref{eq_eqm_follower} as
\begin{equation} \label{eq_f_ext}
    \EE_{\underbrace{{\theta \sim G^h(\widehat{x}; z, \widehat{x})}}_\text{signaling effect}} \left[ \theta \right] -  \underbrace{\frac{\chi_\notinvest}{n}}_{\text{externality from leader's action}}= \underbrace{\frac{n-1}{n}R^h(\widehat{x}; z)}_{\text{miss-coordination effect}} 
\end{equation} 

As we stated, in this case, Equation (\ref{eq_f_ext}) has at least one solution. To get the uniqueness of rationalizable play, it must be the case that the derivative of the conditional rank belief function is sufficiently bounded. This is not generally the case, since around $z$, for certain values of $\sigma_L $ and $\sigma_F$\footnote{~In particular, this is necessarily the case in the limit as $\sigma_L\rightarrow 0$ and $\sigma_F \rightarrow 0$ with $\sigma_F/\sigma_L= \gamma$.}, the rank belief function abruptly changes, which means that the expected payoff of the indifferent type of follower $j$ changes sign more than once. The condition of Proposition \ref{prop_unique_rat_ext} ensures that this rapid change is not enough to make the expected payoff of follower $j$ cross the $x$-axis multiple times. Similarly to the main model, if the sign change occurred more than once, the subgame would feature at least two Bayesian Nash equilibria that would correspond to the solutions of Equation \ref{eq_f_ext}. In this case, multiplicity of rationalizable behavior immediately obtains.  Such a case is given in Figure \ref{fig_invest_noisy_multiple}.

\begin{figure}[htbp]
    \centering
    \includegraphics[scale=0.8]{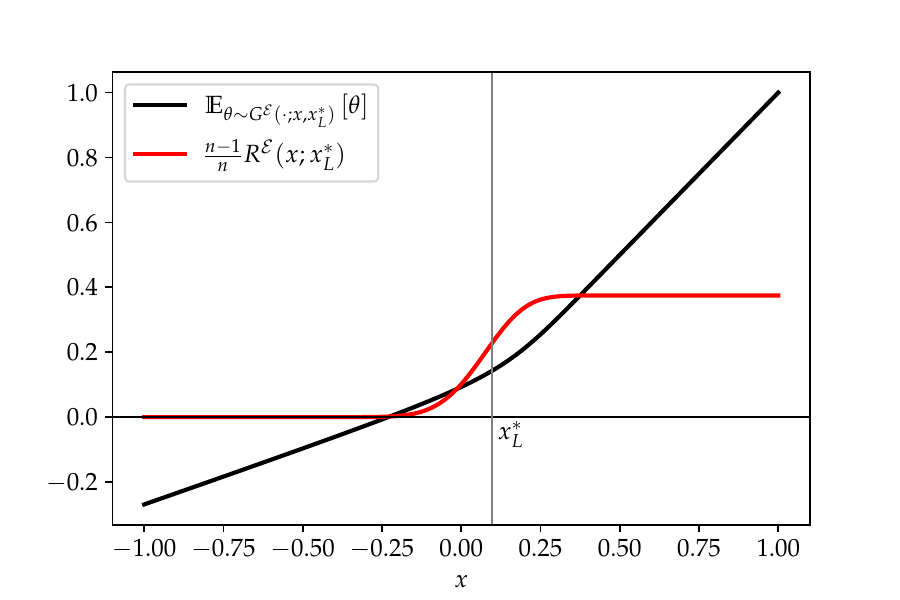}
    \caption{Multiple rationalizable type-strategy profiles.}
    \label{fig_invest_noisy_multiple}
\end{figure}

\subsubsection{Inefficiency of the Unique Outcome}
Contrary to the main model, the extension features an inefficient outcome irrespective of the uniqueness of rationalizable play. This result obtains whenever $\sigma_L$ is bounded away from zero. In the limiting case where $\sigma_L \to 0^+$ and for $\sigma_F$ sufficiently large, we recover the unique efficient $\Delta$-rationalizable profile of the main model. It is worth noting, though, that in all cases, the extensive form game leads to outcomes at least as efficient as the ones that would obtain if the game was a simultaneous move game, a prediction consistent with existing literature. This means that the presence of the leader is always helpful, even if her information is very imprecise. This is not surprising, since the leader's action apart from information carries a benefit that spills over to followers.



\subsection{A Synthesis of the Results}
One can think of the results established in Propositions 1 and 3 in the following way: In the $(\sigma_L , \sigma_F)$ space, when $\sigma_L=0$, Proposition 1 derives a necessary and sufficient condition under which the game features unique rationalizable behavior which delivers the fully efficient outcome. When $\sigma_F=0$, multiplicity immediately obtains since the followers are perfectly informed about the state. In the limit where both noises vanish and $\sigma_F/\sigma_L\to 0$,\footnote{~ This means that followers are arbitrarily better informed than the leader.} the subgame becomes a standard global game, where the leader's action carries only the positive spillover and no information. In this case, the unique rationalizable behavior features a monotone strategy for the leader and the followers with threshold $(n-1)/2n$.  When both $\sigma_L$ and $\sigma_F$ are nonzero and vanish at a rate such that their ratio is given by $\gamma$ for any $\gamma>0$, Proposition 3 establishes the multiplicity of rationalizable behavior in the limit when we move towards the origin along the fixed ray $\gamma$ and the existence of a value $\widehat{\sigma}_L(\gamma)$ such that when one is sufficiently away from the origin the game features unique rationalizable behavior. Thus, in general, the leader cannot ``discipline'' the followers on imitating her behavior and she may choose the inefficient action in the first place.

\medskip

\section{So, Does One Soros Make a Difference?}

Although our model is stylized, the forces we uncover are present in a range of economic settings, in particular, financial stability and crises. In financial economics, a substantial body of literature highlights coordination failures as a key contributor to fragility. These failures occur when economic agents undertake ``destabilizing'' actions based on the anticipation that others will do the same. The outcome is a self-fulfilling crisis, commonly referred to as panic. On the other hand, it is often a reason for discussion about whether a crisis happens due to flawed fundamentals or due to panic. The global games approach has become very popular as it provides a bridge that connects these two perspectives. Using this approach, these papers (for example, \cite{corsetti_et_al_2004}) predict that the presence of a large player (our leader) may lead to a higher degree of (de)stabilization if that is the large player's preferred outcome. Our results show that this might not be necessarily the case.

\cite{corsetti_et_al_2004} analyze the question of whether large visible traders increase the vulnerability of a currency to speculative attacks. Their model features a single large investor and a continuum of small investors who decide whether to attack a currency based on their private information about the fundamentals. One of their results is that the large player's action makes small traders more aggressive compared to the case where the large player is absent.

Although our framework differs from that of \cite{corsetti_et_al_2004}, our results provide valuable insights into the question at hand. In their study, the visible trader serves as our equivalent of a leader. However, in our model, the leader is not characterized by size or power, as each follower possesses equal influence with the leader. Nonetheless, the leader's presence induces more aggressive behavior among the followers. This effect arises because our model involves a finite number of players, resulting in each individual's actions affecting the payoffs of others. Additionally, the leader's actions serve as signals to the followers, conveying a portion of her information. Even as we approach the assumption of a continuum by increasing the number of followers infinitely, this overall impact remains, driven solely by informational factors.

Our findings demonstrate that a powerful player is not necessarily needed to elicit this effect; a visible agent is sufficient. However, this holds true only if the condition that yields unique rationalizable behavior is satisfied. In \cite{corsetti_et_al_2004}, the focus is on monotone strategies to derive their results, while our approach complements theirs by examining the uniqueness or multiplicity of rationalizable behavior rather than equilibrium behavior. In cases where our sufficient condition is met, our model aligns with the predictions of \cite{corsetti_et_al_2004} (see Proposition 7 of that paper).

If that condition is not satisfied, however, then for specific pairs of ($\sigma_L,\sigma_F)$  the game features multiple rationalizable strategy profiles. This, in the language of \cite{corsetti_et_al_2004}, implies that even if the large trader is visible, the fear of follower miscoordination may deter certain types of the leader from attacking the currency. Thus, it may be the case that the attack never happens even if the fundamentals are weak, as long as the large trader cannot take down the currency on her own. Moreover, even if the large trader does attack the currency, there is still a possibility that it will survive, resulting in a negative payoff for her. This is due to the lack of guarantee that the small traders will coordinate their actions with the leader. Furthermore, even if the visible trader does not deem the attack worthwhile, the followers may still opt to do so, given their uncertainty regarding the leader's motives. Was it because the fundamentals were too strong or due to her fear of followers miscoordinating? Thus, they might still coordinate on attacking, as long as it is possible that they, on their own, can take down the currency.

A similar argument applies to models of bank runs, such as \cite{goldstein_pauzner_2005}. The role of the leader is now played by a large and visible creditor. While equilibrium beliefs would make the smaller creditors imitate the action of the large player, when non-equilibrium beliefs are taken into account, this no longer necessarily holds. Thus, even if the large creditor would ideally not run on the bank for some intermediate values of fundamentals, she may be forced to do so out of fear that smaller creditors will, even after observing her action.

In a broader sense, these models predict that as noise diminishes, the large player can attain the most favorable outcome from her perspective. This implies that a significant investor contributes to increased fragility in the context of a currency attack, whereas in a bank-run model, she fosters greater stability. However, our paper emphasizes that this prediction becomes fragile when considering rationalizable behavior. Specifically, the leader can only achieve the optimal outcome from her viewpoint when a unique rationalizable outcome obtains. Consequently, our model highlights the significance of precise information on the part of the followers, rendering the presence of the large player less influential. In fact, smaller players may coordinate their actions against her, leading to a scenario where the large player refrains from taking actions that would (de)stabilize the economy, even if the underlying fundamentals justify such actions perfectly. Of course, it should be noted that what we have just discussed is a novel effect that will be present in regime change games. Whether this effect will play a significant role, like in our setting, or not, may depend on the intricacies of the specific model considered.

\section{Why $\Delta$-rationalizability?}
Before we conclude, we discuss our choice of $\Delta$-rationalizability as our solution concept. 
The global games approach has gained significant popularity due to its ability to provide unique predictions about equilibrium behavior. However, when we deviate from the static benchmark or introduce signaling, achieving uniqueness becomes more challenging. Although one can usually establish uniqueness within the class of monotone strategies, this may exclude other equilibria involving more complex strategies which are difficult to characterize. We, therefore, choose rationalizability as our solution concept since it provides the best possible bound on equilibrium outcomes.

While the canonical rationalizability concept for static games of incomplete information, interim correlated rationalizability (ICR) (see \cite{dekel_et_al_2007}), is well established, this is not the case for multistage games. In this paper, we choose $\Delta$-rationalizability as our preferred concept because it is fully consistent with  Harsanyi’s approach and is particularly suitable for addressing the questions we investigate. Additionally, it simplifies notation and analysis by reducing the need for certain technical details. Another possible candidate is interim sequential rationalizability (ISR), a generalization of ICR to multistage games developped in \cite{penta_2012}. Another approach would involve applying ICR to the normal form of the game as described by \cite{chen_2012}. As it turns out, for our model, these three approaches yield equivalent results. This equivalence stems from the following observation: if we were to specify a type space as mandated by ICR and ISR, our game would become a game with no information in the sense of \cite{penta_2012}. Consequently, the predictions generated by ISR would be identical to those obtained by applying ICR to the normal form of the game. Given that our restrictions on first-order beliefs, $\Delta$, can be derived from the type space, Proposition 3 of \cite{battigalli_et_al_2011}  implies that these predictions coincide with the ones derived from ICR.

\section{Conclusion}
In this paper, we adopt an informational approach to the study of leadership and analyze whether a leader can coordinate her followers on a mutually desirable course of action in a framework that features strategic complementarities. Our main result shows that this is not necessarily the case. It is indeed possible for highly informed followers to undermine the leadership's role in facilitating coordination on the desired action and even cause the leader herself to choose inefficient actions. We have identified the tension between two main forces that drive our results: the signaling effect of the leader's actions and the miscoordination effect stemming from the followers' dispersed information. We have established conditions regarding the precision of information of the leader and the followers under which either effect can dominate, leading to uniqueness or multiplicity of rationalizable play and efficient or inefficient outcomes.

It is important to note that our model is highly stylized, and several questions remain open for further exploration. The most promising one, we believe, is the endogenous determination of the leader, particularly in relation to the costs associated with acquiring information. In our model, the leader is exogenously chosen, and access to information sources is assumed to be cost-free. However, if a team of individuals can acquire costly information about the state, would a leader endogenously emerge? Would players understand the potentially negative effects of everyone acquiring information and opt to have only one player, the leader, do so? We leave these and other questions for future research.

Finally, it should be noted that while our results may support the centralization of information, this conclusion only applies within the examined framework. In different scenarios, although we anticipate that precise information on the followers' part would still undermine coordination and efficient leadership as we have described, there may be other reasons why information dissemination is desired. Therefore, a broader avenue for future research is to examine under what circumstances dispersed information is preferred, despite its negative impact on facilitating coordination towards efficient actions.
\newpage

\bibliographystyle{te}
\bibliography{references}

\newpage

\section*{Appendix A: Proofs }

\subsection*{Proof of Lemma \ref{lemma_truncated_expectation}} First note that  type $x$'s belief about $\theta$ has a Gaussian distribution with mean $x$ and variance $\sigma_F^2$. 
Recall that $\lambda(x) = \phi(x)/\Phi(x)$ is the reversed hazard rate of a standard Gaussian random variable. 
Rewrite Equation (\ref{eq_expectation_of_theta}) as follows:
\begin{equation*} 
    \EE_{\theta \sim \Psi^h(\cdot; \,x, z)}[\theta] = \begin{cases}
   \sigma_F \left[y + \lambda(y) \right] + z  & \text{if $h = \invest$} \\
   ~ & ~ \\
   -\sigma_F \left[-y + \lambda(-y) \right] + z & \text{if $h = \notinvest$}
   \end{cases},
\end{equation*}
where $y = (x - z)/\sigma_F$. 
Since $-1 < \lambda'(x) < 0$ \citep{sampford_1953},
$x + \lambda(x)$ is strictly increasing in $x$. It is now straightforward to see that $\EE_{\theta \sim \Psi^h(\cdot; \,x, z)}[\theta]$ is strictly increasing in $x$ and $z$ because $\lambda(\cdot)$ is a  strictly decreasing function and $\sigma_F > 0$.

Using the derivative formula $\phi'(x) = -x \phi(x)$ and L'H\^{o}pital's rule, we have $\lim_{x \to -\infty} x + \lambda(x) = \lim_{x \to -\infty} \phi(x)/\phi'(x) = 0$ and hence $x + \lambda(x) > 0$ for all $x \in \RR$. We also have $\lim_{x \to \infty} x + \lambda(x) = \infty$ because $\lim_{x \to \infty} \lambda(x) = 0$. Thus,
\begin{equation*}
    \EE_{\theta \sim \Psi^\invest(\cdot; \,x, z)}[\theta] \to \begin{cases}
   z & \text{as $x \to -\infty$} \\
   \infty & \text{as $x \to \infty$}
   \end{cases},
\end{equation*}
and
\begin{equation*}
    \EE_{\theta \sim \Psi^\notinvest(\cdot; \,x, z)}[\theta] \to \begin{cases}
   -\infty & \text{as $x \to -\infty$} \\
   z & \text{as $x \to \infty$}
   \end{cases}.
\end{equation*}
This completes the proof. \qed

\medskip
\subsection*{Proof of Lemma \ref{lemma_x_payoff}}
(Part 1)
By Lemma \ref{lemma_truncated_expectation}, it suffices to show that $\Psi^h(\theta; x, z)$ is increasing in $x$ and $z$ in the sense of first-order stochastic dominance
because $\Phi\left((x_h - \theta)/\sigma_F\right)$ is strictly increasing in $\theta$ and strictly decreasing in $x_h$.
 
Given $h = \invest$. For $x' > x$, we have
\begin{equation*}
    \frac{\psi^\invest(\theta; x', z)}{\psi^\invest(\theta; x
    , z)} = \frac{\phi\left(\frac{\theta - x'}{\sigma_F}\right)}{\phi\left(\frac{\theta - x}{\sigma_F}\right)} \cdot \frac{\Phi\left(\frac{x - z}{\sigma_F}\right)}{\Phi\left(\frac{x' - z}{\sigma_F}\right)}.
\end{equation*}
Since $\phi(\cdot)$ is   log-concave, $\phi\left((\theta - x')/\sigma_F\right)/\phi\left((\theta - x)/\sigma_F\right)$ is increasing in $\theta$; i.e., $\phi(\cdot)$ is a P\'{o}lya frequency function of order 2 \citep[Proposition 2.3]{saumard_wellner_2014}, and so is $\psi^\invest(\theta; x', z)/\psi^\invest(\theta; x, z)$. This implies that $\psi^\invest(\theta; x, z)$ is log-supermodular in $(\theta, x)$, or, equivalently, $\Psi^\invest(\theta; x',z)$   dominates $\Psi^\invest(\theta; x,z)$ in the monotone likelihood ratio order. Thus, $\Psi^\invest(\theta; x',z)$   first-order stochastically dominates $\Psi^\invest(\theta; x,z)$. 

By definition, we have
\begin{equation*}
    \Psi^\invest(\theta; x, z) = 1 - \frac{\Phi\left(\frac{x - \theta}{\sigma_F}\right)}{\Phi\left(\frac{x - z}{\sigma_F}\right)}.
\end{equation*}
It is decreasing in $z$; that is, $\Psi^\invest(\theta; x, z') < \Psi^\invest(\theta; x, z)$ for $z' > z$ and for all $\theta \in (z, \infty)$, where $\Psi^\invest(\theta; x, z') = 0$ when $\theta \in (z, z']$. This proves that $\Psi(\theta; x, z')$ dominates $\Psi(\theta; x, z)$ in the first-order stochastic dominance sense. The proof for $\Psi^\notinvest(\theta; x, z)$ is analogous.

\noindent (Part 2)  Since $\Phi\left( (x_h - \theta)/\sigma_F \right)$ is bounded, it follows from Lemma \ref{lemma_truncated_expectation} that $\pi_F^\notinvest(x; z, x_\notinvest)$ $\to -\infty$ as $x \to -\infty$ and $\pi_F^\invest(x; z, x_\invest)$ $\to \infty$ as $x \to \infty$. Now under $h = \invest$,
\begin{align*}
    \EE_{\theta \sim \Psi^\invest(\cdot; \,x, z)} \left[ \Phi\left(\frac{x_\invest - \theta}{\sigma_F}\right) \right] & = \frac{1}{\Phi\left( \frac{x - z}{\sigma_F} \right)} \int_z^\infty \Phi\left( \frac{x_\invest - t}{\sigma_F} \right) \frac{1}{\sigma_F} \phi\left( \frac{x - t}{\sigma_F} \right) \dd t \\
    & = \frac{1}{\Phi\left( \frac{x - z}{\sigma_F} \right)} \int_{-\infty}^{\frac{x - z}{\sigma_F}} \phi(\eta) \Phi\left( \frac{x_\invest - x}{\sigma_F} + \eta \right) \dd \eta \\
    & \to \Phi\left( \frac{x_\invest - z}{\sigma_F} \right) \quad \text{as $x \to -\infty$}.
\end{align*}
The second equality is given by a change of variable $\eta = (x - t)/\sigma_F$, and the limiting result is a consequence of applying L'H\^{o}pital's rule. Thus, Lemma \ref{lemma_truncated_expectation} implies that $\lim_{x \to -\infty} \pi_F^\invest(x; z, x_\invest)$ $= z - ((n-1)/n)\Phi(( x_\invest - z )/\sigma_F)$.

Similarly, under $h = \notinvest$, we have 
\begin{align*}
    \EE_{\theta \sim \Psi^\notinvest(\cdot; \,x, z)} \left[ \Phi\left(\frac{x_\notinvest - \theta}{\sigma_F}\right) \right] & = \frac{1}{\Phi\left( \frac{z - x}{\sigma_F} \right)} \int_{-\infty}^z \Phi\left( \frac{x_\notinvest - t}{\sigma_F} \right) \frac{1}{\sigma_F} \phi\left( \frac{x - t}{\sigma_F} \right) \dd t \\
    & = \frac{1}{\Phi\left( \frac{z - x}{\sigma_F} \right)} \int_{\frac{x - z}{\sigma_F}}^\infty \phi(\eta) \Phi\left( \frac{x_\notinvest - x}{\sigma_F} + \eta \right) \dd \eta \\
    & \to \Phi\left( \frac{x_\notinvest - z}{\sigma_F} \right) \quad \text{as $x \to \infty$}.
\end{align*}
Thus, by Lemma \ref{lemma_truncated_expectation}, $\lim_{x \to \infty} \pi_F^\notinvest(x; z, x_\notinvest) = z - 1/n - ((n-1)/n) \Phi((x_\notinvest - z)/\sigma_F)$.
\qed

\medskip
\subsection*{Proof of Lemma \ref{lemma_rat_seq} }
Define iteratively six sequences as follows. Let $\thetalow^0 = \xilow^0 = \xnlow^0 = -\infty$ and $\thetaup^0 = \xiup^0 = \xnup^0 = \infty$, and for $k \geq 1$, 
\begin{equation*}
    \begin{cases}
        \thetalow^k = \br_L(\xilow^{k-1}) & \\
        \thetaup^k = \br_L(\xiup^{k-1}) & \\
        \xilow^k = \br_F^\invest(\thetaup^k, \xilow^{k-1}) & \\
        \xiup^k = \br_F^\invest(\thetalow^k, \xiup^{k-1}) & \\
        \xnlow^k = \br_F^\notinvest(\thetaup^k, \xnlow^{k-1}) & \\
        \xnup^k = \br_F^\notinvest(\thetalow^k, \xnup^{k-1})
    \end{cases},
\end{equation*}
where $\theta = \br_L(x)$, $x \in \RR \cup \{-\infty, \infty\}$, is 
the unique solution to
\begin{equation*}
   \pi_L(\theta; x) = \theta - \Phi\left(\frac{x - \theta}{\sigma_F}\right) = 0,
\end{equation*}
and, $\br_F^h(\theta', x')$, $(\theta', x') \in \RR \times \RR \cup \{-\infty, \infty\}$, is the unique value of $x$, if exists, that solves
\begin{equation*} 
     \pi_F^h(x; \theta', x') = \EE_{\theta \sim \Psi^h(\cdot; \,x, \theta')}\left[ \theta - \frac{n-1}{n} \Phi\left( \frac{x' - \theta}{\sigma_F} \right) \right] - \frac{\chi_\notinvest}{n} = 0; 
\end{equation*}
otherwise
\begin{equation*}
    \br_F^h(\theta', x') = \begin{cases}
        -\infty & \text{if $h = \invest$} \\
        \infty & \text{if $h = \notinvest$}
    \end{cases}.
\end{equation*}
Now we prove statements (a)-(f) as follows by induction.

\vskip 0.5 \baselineskip
\noindent Parts (a), (b), (c) \& (d): For $k = 1$, 
we have $\thetalow^1 = 0$ and $1 = \thetaup^1 < \thetaup^0$
because $\xilow^0 = -\infty$ and $\xiup^0 = \infty$. It follows that $\xilow^1 = \br_F^\invest(1, -\infty) = - \infty$ because $\EE_{\theta \sim \Psi^\invest(\cdot; \,x, 1)}[\theta] > 0$ for all $x$ by Lemma \ref{lemma_truncated_expectation}. Also, $\xiup^1 = \br_F^\invest(0, \infty)$ is the unique solution to $\EE_{\theta \sim \Psi^\invest(\cdot; \,\xiup^1, 0)}[\theta] = (n-1)/n$; therefore $\xiup^1 < \xiup^0$.

Suppose now that $\thetalow^k = 0$, $\thetaup^k < \thetaup^{k-1}$, $\xilow^k = -\infty$, and $\xiup^k < \xiup^{k -1}$ for any given $k \geq 2$. Then $\thetalow^{k+1} = \br_L(\xilow^k) = \br_L(-\infty) =  0$, and $0 < \thetaup^{k+1} = \br_L(\xiup^k) < \br_L(\xiup^{k-1}) = \thetaup^k$ because the leader's payoff is strictly decreasing in $x_\invest$. Furthermore, 
$\xilow^{k+1} = \br_F^\invest(\thetaup^{k+1}, \xilow^k) = \br_F^\invest(\thetaup^{k+1}, -\infty) = -\infty$
because $\EE_{\theta \sim \Psi^\invest(\cdot; \,x, \thetaup^{k+1})}[\theta] > \EE_{\theta \sim \Psi^\invest(\cdot; \,x, 0}[\theta] > 0$ for all $x$ by Lemma \ref{lemma_truncated_expectation}. Since Lemma \ref{lemma_x_payoff} implies that $\br_F^\invest(0, x)$ is strictly increasing in $x$, it is now clear that
\begin{equation*}
    \xiup^{k+1} = \br_F^\invest(\thetalow^{k+1}, \xiup^k) = \br_F^\invest(0, \xiup^k) < \br_F^\invest(0, \xiup^{k-1}) = \br_F^\invest(\thetalow^k, \xiup^{k-1}) = \xiup^k.
\end{equation*}
This completes the proof of parts (a), (b), (c), and (d).

\vskip 0.5 \baselineskip
\noindent Part (e): For $k = 1$, $\xnlow^1 = \br_F^\notinvest(\thetaup^1, \xnlow^0) = \br_F^\notinvest(\thetaup^1, -\infty) > \xnlow^0$ because, by Lemma \ref{lemma_truncated_expectation}, $\EE_{\theta \sim \Psi^\notinvest(\cdot; \,x, \thetaup^1)}[\theta] = 1/n$ has a unique solution.
Suppose now that $\xnlow^k > \xnlow^{k-1}$ for any given $k \geq 2$.
We know from Lemma \ref{lemma_x_payoff} that $\br_F^\notinvest(\theta, x)$ is strictly decreasing in $\theta$ but strictly increasing in $x$. Thus, $\xnlow^{k+1} = \br_F^\notinvest(\thetaup^{k+1}, \xnlow^k) > \br_F^\notinvest(\thetaup^k, \xnlow^k) > \br_F^\notinvest(\thetaup^k, \xnlow^{k-1}) = \xnlow^k$.

\vskip 0.5 \baselineskip
\noindent Part (f): For $k=1$, we know by Lemma \ref{lemma_truncated_expectation} that $\pi_F^\notinvest(x; \thetalow^1, \xnup^0) = \EE_{\theta \sim \Psi^\notinvest(\cdot; \,x, 0)}[\theta] - 1 < -1$ for all $x$. This implies that $\xnup^1 = \infty$.
Suppose that $\xnup^k = \infty$ for any given $k \geq 2$. We again have $\pi_F^\notinvest(x; \thetalow^k, \xnup^0) = \EE_{\theta \sim \Psi^\notinvest(\cdot; \,x, 0)}[\theta] - 1 < -1
$ for all $x$. Thus, $\xnup^{k+1} = \infty$. \qed

\medskip
\subsection*{Proof of Proposition \ref{prop_unique_rat} }
The model has a unique $\Delta$-rationalizable behavior if and only if $\thetaup = 0$, $\xiup = -\infty$, and $\xnlow = \infty$.
If $\xiup = -\infty$, Equation (\ref{leader_upper}) implies that
$\thetaup = 0$. By Equation (\ref{follower_notinvest_lower}),
it follows that $\xnup = \infty$ because we know from Equations (\ref{eq_expectation_of_theta}) and (\ref{follower_payoff_rank_belief}) that
\begin{equation*}
    \pi_F^\notinvest(x; 0, x) = x - \sigma_F \lambda\left( -\frac{x}{\sigma_F} \right) - \frac{n-1}{2n} \Phi \left(\frac{x}{\sigma_F}\right) - \frac{n+1}{2n} < 0
\end{equation*}
for all $x$. However, if there exists a value of $x$ such that $\pi_F^\invest(x; 0, x) = 0$, then 
\begin{equation*}
    \xiup = \max \left\{x \vl \pi_F^\invest(x; 0, x) =0 \right\}
\end{equation*}
and hence $\thetaup > \thetalow$. Thus, a unique $\Delta$-rationalizable behavior obtains if and only if $\xiup = -\infty$. It is worth noting that 
$\xiup > -\infty$ does not necessarily imply
that $\xnlow < \infty$.

Now we show that there exists a unique $\widehat{\sigma}_F$ such that $\xiup = -\infty$ if and only if $\sigma_F > \widehat{\sigma}_F$.
Again by Equations (\ref{eq_expectation_of_theta}) and (\ref{follower_payoff_rank_belief}) we can write $\pi_F^\invest(x; 0, x) = 0$ as
\begin{equation} \label{app_eq_fb}
    x = \frac{n-1}{2n}\Phi\left(\frac{x}{\sigma_F}\right) - \sigma_F \lambda\left(\frac{x}{\sigma_F}\right). \tag{A.1}
\end{equation}
Define $\rho(x, \sigma_F)$ to be the right-hand side of Equation (\ref{app_eq_fb}).
Observe that $\partial \rho(x, \sigma_F)/\partial x > 0$, $\lim_{x \to \infty}\rho(x, \sigma_F) = (n-1)/(2n)$, $\lim_{x \to -\infty} \rho(x, \sigma_F) = 1$, and $x > \rho(x, \sigma_F)$ for $x < 0$ but $|x|$ sufficiently large. 
Let $\bar{\sigma}_F = (n-1)/(8n\phi(0))$. It suffices to consider the following two cases.

\textit{Case 1}: Suppose that $\sigma_F \leq \bar{\sigma}_F$. It is equivalent to $\rho(0, \sigma_F) \geq 0$. Since
\begin{equation*}
    \frac{\partial \rho(0, \sigma_F)}{\partial x} = \frac{n-1}{2n\sigma_F}\phi(0) - \lambda'(0) \geq 8\phi(0)^2 > 1
\end{equation*}
by the derivative formula $\lambda'(x) = -\lambda(x)\left[ x + \lambda(x) \right]$,
Equation (\ref{app_eq_fb}) must have at least two solutions.

\textit{Case 2}: Suppose that $\sigma_F > \bar{\sigma}_F$.
For $x > 0$, 
\begin{equation} \label{app_rho_sig}
   \frac{\partial \rho(x, \sigma_F)}{\partial \sigma_F} = - \frac{x}{\sigma_F} \left(\frac{\partial \rho(x, \sigma_F)}{\partial x}\right) < 0, \tag{A.2}
\end{equation}
and
\begin{equation} \label{app_rho_sd}
    \frac{\partial^2 \rho(x, \sigma_F)}{\partial x^2} = \frac{n-1}{2n\sigma_F^2}\phi'\left(\frac{x}{\sigma_F}\right) - \frac{1}{\sigma_F}\lambda''\left(\frac{x}{\sigma_F}\right) < 0 \tag{A.3}
\end{equation}
because $\lambda''(x) > 0$. By (\ref{app_rho_sig}) and (\ref{app_rho_sd}), there exists a unique $\widehat{\sigma}_F$ such that $x$ and $\rho(x, \widehat{\sigma}_F)$ are tangent to each other at some $x > 0$ because $\rho(x, \sigma_F)$ is strictly concave. Moreover, for $x > 0$, $x > \rho(x, \sigma_F)$ if and only if $\sigma_F > \widehat{\sigma}_F$. If we can prove that $x > \rho(x, \sigma_F)$ for $x \leq 0$ whenever $\sigma_F > \widehat{\sigma}_F$, then we are done. Note that, at $\sigma_F = \bar{\sigma}_F$, 
\begin{equation*}
    \rho(x, \bar{\sigma}_F) - x = \bar{\sigma}_F \bigg[2\lambda(0)\Phi(z) - \lambda(z) - z \bigg] < 0,
\end{equation*}
where $z = x/\bar{\sigma}_F$ because $\lambda(0) = 2\phi(0)$ and $2\lambda(0)\Phi(x) - \lambda(x) - x < 0$ for $x < 0$. It follows that $x > \rho(x, \sigma_F)$ for all $x \leq 0$. Thus, (\ref{app_eq_fb}) can 
only have solutions if $\sigma_F \leq \widehat{\sigma}_F$.

Taking Case 1 and Case 2 together, we can conclude that (\ref{app_eq_fb}) has no solution (i.e., $\xiup = -\infty$) if and only if $\sigma > \widehat{\sigma}_F$. \qed

\medskip
\subsection*{Proof of Proposition \ref{prop_limit}}

\underline{Step 1}. We first show that $\xiup \to (n-1)/(2n)$ as $\sigma_F \to 0$. 
For sufficiently small $\sigma_F$,
note that $\xiup > -\infty$ is determined by Equation (\ref{follower_invest_upper}); that is,
\begin{equation*}
    \xiup + \sigma_F \lambda \left( \frac{\xiup - \thetalow}{\sigma_F}  \right) - \frac{n-1}{2n} \Phi\left( \frac{\xiup - \thetalow}{\sigma_F}  \right) = 0,
\end{equation*}
where $\thetalow = 0$.
We know from the proof of Proposition \ref{prop_unique_rat} that $\xiup > 0$. So, as $\sigma_F \to 0$, there are three cases to consider: (i) $\xiup \to 0$ and $\xiup/\sigma_F \to k \geq 0$, (ii) $\xiup \to 0$ and $\xiup/\sigma_F \to \infty$, and  (iii) $\xiup \to \aleph > 0$ and $\xiup/\sigma_F \to \infty$, 

\emph{Case (i)}: Suppose that $\xiup \to 0$ and $\xiup/\sigma_F \to k$, where $k \geq 0$ is a constant. It follows that
\begin{equation*}
    \xiup + \sigma_F \lambda \left( \frac{\xiup}{\sigma_F}  \right) - \frac{n-1}{2n} \Phi\left( \frac{\xiup}{\sigma_F}  \right) \to 0 + 0 \cdot \lambda(k) - \frac{n-1}{2n}\Phi(k) \neq 0,
\end{equation*}
which leads to a contradiction.

\emph{Case (ii)}: Suppose that $\xiup \to 0$ and $\xiup/\sigma_F \to \infty$. Then we have a contradiction because
\begin{equation*}
    \xiup + \sigma_F \lambda \left( \frac{\xiup}{\sigma_F}  \right) - \frac{n-1}{2n} \Phi\left( \frac{\xiup}{\sigma_F}  \right) \to - \frac{n-1}{2n} < 0.
\end{equation*}

\emph{Case (iii)}: Suppose that $\xiup \to \aleph > 0$. Then it must be the case that
\begin{equation*}
    \xiup + \sigma_F \lambda \left( \frac{\xiup}{\sigma_F}  \right) - \frac{n-1}{2n} \Phi\left( \frac{\xiup}{\sigma_F}  \right) \to \aleph - \frac{n-1}{2n} = 0,
\end{equation*}
which results in a contradiction except for the case $\aleph = (n-1)/(2n)$.

Combining all three cases above, we can conclude that $\xiup \to (n-1)/(2n)$ as $\sigma_F \to 0$.

\vskip 0.5 \baselineskip
\noindent \underline{Step 2}. We show next that $\thetaup \to (n-1)/(2n)$ as $\sigma_F \to 0$. We consider, for the sake of contradiction, the following two cases: (i) $\thetaup \to  \tau < (n-1)/(2n)$, and (iii) $\thetaup \to  \tau > (n-1)/(2n)$.

\emph{Case (i)}: Suppose that $\thetaup \to  \tau$, where $\tau \in [0, (n-1)/(2n))$ is a constant. Note that $\thetaup$ is given by Equation (\ref{leader_upper}):
\begin{equation*}
    \thetaup - \Phi \left( \frac{\xiup - \thetaup}{\sigma_F} \right) = 0.
\end{equation*}
Since $(\xiup - \thetaup)/\sigma_F \to \infty$, we have
\begin{equation*}
\thetaup - \Phi \left( \frac{\xiup - \thetaup}{\sigma_F} \right) \to \tau - 1 < - \frac{n+1}{2n} < 0,     
\end{equation*}
which leads to a contradiction.

\emph{Case (ii)}: Suppose that $\thetaup \to  \tau \in ((n-1)/(2n), 1]$. Then it must be the case that $(\xiup - \thetaup)/\sigma_F \to -\infty$. But we have a contradiction because 
\begin{equation*}
\thetaup - \Phi \left( \frac{\xiup - \thetaup}{\sigma_F} \right) \to \tau > \frac{n-1}{2n} > 0.     
\end{equation*}

Thus, we must have $\thetaup \to (n-1)/(2n)$ as $\sigma_F \to 0$.

\medskip
\noindent \underline{Step 3}. Lastly, we show that $\xnlow \to \infty$ as $\sigma_F \to 0$. 
By way of contradiction, suppose that $\xnlow \to \aleph < \infty$. This means that $\aleph$ solves  
Equation (\ref{follower_notinvest_lower}):
\begin{equation*}
    \aleph - \sigma_F \lambda \left( \frac{\thetaup - \aleph}{\sigma_F} \right) + \frac{n-1}{2n} \Phi\left( \frac{\thetaup - \aleph}{\sigma_F}\right) = 1.
\end{equation*}
There are three possible cases for the limit of $(\thetaup - \xnlow)/\sigma_F$ as $\sigma_F \to 0$: (i) $-\infty$, (ii) $\infty$, or (iii) a constant $k \in \RR$.

\emph{Case (i)}: Suppose that $(\thetaup - \xnlow)/\sigma_F \to -\infty$. Since $\lim_{x \to -\infty} x + \lambda(x) = 0$, we have
\begin{align*}
    \xnlow - & \sigma_F \lambda \left( \frac{\thetaup - \xnlow}{\sigma_F} \right) + \frac{n-1}{2n} \Phi\left( \frac{\thetaup - \xnlow}{\sigma_F}\right) \\
    & = -\sigma_F \left( \frac{\thetaup - \xnlow}{\sigma_F} + \lambda \left( \frac{\thetaup - \xnlow}{\sigma_F} \right) \right) + \thetaup + \frac{n-1}{2n} \Phi\left( \frac{\thetaup - \xnlow}{\sigma_F}\right) \\
    & \to \frac{n-1}{2n} < 1.
\end{align*}
Thus, we have a contradiction.

\emph{Case (ii)}: Suppose that $(\thetaup - \xnlow)/\sigma_F \to \infty$. In this case, we must have $\aleph \leq (n-1)/(2n)$. It follows that
\begin{equation*}
    \xnlow - \sigma_F \lambda \left( \frac{\thetaup - \xnlow}{\sigma_F} \right) + \frac{n-1}{2n} \Phi\left( \frac{\thetaup - \xnlow}{\sigma_F}\right) \to \aleph + \frac{n-1}{2n} \leq \frac{n-1}{n} < 1,
\end{equation*}
which leads to a contradiction.

\emph{Case (iii)}: Suppose that $(\thetaup - \xnlow)/\sigma_F \to k$, where $k$ is a constant. This implies that $\aleph = (n-1)/(2n)$. But since
\begin{equation*}
    \xnlow - \sigma_F \lambda \left( \frac{\thetaup - \xnlow}{\sigma_F} \right) + \frac{n-1}{2n} \Phi\left( \frac{\thetaup - \xnlow}{\sigma_F}\right) \to \aleph + \frac{n-1}{2n}\Phi(k) < \frac{n-1}{n} < 1,
\end{equation*}
we have a contradiction.

Thus, we must have $\xnlow \to \infty$ as $\sigma_F \to 0$. This completes the proof. \qed

\medskip
\subsection*{Derivation of the conditional rank beliefs in (\ref{rank_belief}) } 
By definition, we have
\begin{align*}
    R^h(x; z) & = \prob(x_k \leq x_j \vl x_j = x, z) \\
    & = \prob(\theta + \sigma_F \varepsilon_k \leq x_j \vl x_j = x, z) \\
    & = \int_{-\infty}^\infty \left( \int_{-\infty}^\frac{x - \theta}{\sigma_F} \phi(\varepsilon) \dd \varepsilon \right) \dd \Psi^h(\theta; x, z) \\
    & = \int_{-\infty}^\infty \Phi\left( \frac{x - \theta}{\sigma_F} \right) \dd \Psi^h(\theta; x, z).
\end{align*}
Now under $h = \invest$, by (\ref{interim_belief})
\begin{align*}
    R^\invest(x; z) & = \frac{1}{\Phi\left( \frac{x - z}{\sigma_F}\right) } \int_z^\infty \Phi\left( \frac{x - \theta}{\sigma_F} \right) \frac{1}{\sigma_F} \phi\left( \frac{\theta - x}{\sigma_F}  \right) \dd \theta \\
    & = \frac{1}{\Phi\left( \frac{x - z}{\sigma_F}\right) } \int_{-\infty}^{\Phi\left(\frac{x-z}{\sigma_F}\right)} \eta \dd \eta \\ 
    & = \frac{1}{2}\Phi\left( \frac{x-z}{\sigma_F} \right).
\end{align*}
The second equality is given by a change of variable $\eta = \Phi((x - \theta)/\sigma_F)$. Similarly,
\begin{align*}
    R^\notinvest(x; z) & = \frac{1}{\Phi\left( \frac{z - x}{\sigma_F}\right) } \int_{-\infty}^z \Phi\left( \frac{x - \theta}{\sigma_F} \right) \frac{1}{\sigma_F} \phi\left( \frac{\theta - x}{\sigma_F} \right) \dd \theta \\
    & = \frac{1}{\Phi\left( \frac{z - x}{\sigma_F}\right) } \int_{\Phi\left(\frac{x-z}{\sigma_F}\right)}^1 \eta \dd \eta \\
    & = \frac{1}{2} \Phi\left( \frac{x-z}{\sigma_F} \right) + \frac{1}{2}.
\end{align*}

\medskip
\subsection*{Derivation of the conditional rank beliefs in (\ref{ext_rank_belief})}

Under history $h = \invest$,
\begin{align*}
    R^\invest(x; z) & = \prob(x_k \leq x_j \vl x_j = x, x_L > z) \\
    & = \int_{-\infty}^\infty \left( \int_{-\infty}^\frac{x - \theta}{\sigma_F} \phi(\varepsilon) \dd \varepsilon \right) g^\invest (\theta; x , z) \dd \theta \\ 
    & = \frac{1}{\Phi \left( \frac{x - z }{\sigma} \right) }\int_{-\infty}^\infty \Phi(-y) \phi(y) \Phi \left( \frac{x - z + \sigma_F y }{\sigma_L} \right) \dd y \\
    & = \frac{1}{2} - \frac{1}{\Phi \left( \frac{x - z }{\sigma} \right)} T \left( \frac{x - z}{\sigma},  \frac{\sigma_F}{(2\sigma_L^2 + \sigma_F^2)^{\frac{1}{2}}} \right).
\end{align*}
The second equality is given by the independence between 
$\theta$ and $\varepsilon_k$,
the third equality is due to a change of variable $y = (\theta - x_\invest)/\sigma_F$, and the last equality is derived from applying the following integral identity
\begin{equation*}
    \int_{-\infty}^\infty \Phi(a + bx) \Phi(cx) \phi(x) \dd x = \frac{1}{2} \Phi \left( \frac{a}{\sqrt{1 + b^2}} \right) + T \bigg( \frac{a}{\sqrt{1 + b^2}}, \frac{bc}{\sqrt{1 + b^2 + c^2}}\bigg)
\end{equation*}
and $T(y, -a) = -T(y, a)$, where $T(y, a)$ is Owen's T-function (see footnote \ref{owen_t}).

Under history $h = \notinvest$, a similar argument yields
\begin{align*}
    R^\notinvest(x; z) & = \prob (x_k \leq x_j \vl x_j = x, x_L \leq z) \\
    & = \int_{-\infty}^\infty \Phi \left( \frac{x - \theta }{\sigma_F} \right)  g^\notinvest(\theta; x, z) \dd \theta \\
    & = \frac{1}{2} + \frac{1}{\Phi \left( \frac{z - x}{\sigma} \right)} T \left( \frac{z - x}{\sigma},  \frac{\sigma_F}{(2\sigma_L^2 + \sigma_F^2)^{\frac{1}{2}}} \right).
\end{align*}

\medskip

\begin{lemma} \label{lemma_exp_ext}
Under history $h$, $\EE_{\theta \sim G^h(\cdot; \,x, z)}[\theta]$ is increasing in $x$ and $z$. Moreover,
\[
\EE_{\theta \sim G^h(\cdot; \,x, z)}[\theta] \to \begin{cases}
     \infty & \text{as $x \to \infty$} \\
     -\infty & \text{as $x \to -\infty$}
\end{cases}.
\]
\end{lemma}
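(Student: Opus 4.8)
The plan is to collapse the whole statement to a single closed-form expression for each interim expectation and then read off both monotonicity and the limits from known properties of the reversed hazard rate $\lambda(y)=\phi(y)/\Phi(y)$. The starting observation is that $G^h(\cdot;x,z)$ is the posterior of the (improper-prior) Gaussian belief $\theta\mid x\sim N(x,\sigma_F^2)$ after conditioning on the leader's action, i.e.\ on the event $\{x_L>z\}$ when $h=\invest$ and $\{x_L\le z\}$ when $h=\notinvest$, where $x_L=\theta+\sigma_L\varepsilon_L$ with $\varepsilon_L$ independent of $\theta$. Under this belief the pair $(\theta,x_L)$ is jointly Gaussian with $\theta\sim N(x,\sigma_F^2)$, $x_L\sim N(x,\sigma^2)$, and $\mathrm{Cov}(\theta,x_L)=\sigma_F^2$, where $\sigma^2=\sigma_F^2+\sigma_L^2$. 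Abbreviate $\EE^h\equiv\EE_{\theta\sim G^h(\cdot;x,z)}[\theta]$.

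First I would compute $\EE^h$ via the conditional-mean formula for a bivariate normal truncated in one coordinate (equivalently, the integration-by-parts identity $\int u\phi(u)\Phi(a+bu)\,\dd u=\tfrac{b}{\sqrt{1+b^2}}\,\phi\!\big(a/\sqrt{1+b^2}\big)$, which is what produces the normalizing constant $\Phi((x-z)/\sigma)$ in (\ref{posterior_invest})). Since $\EE[x_L-x\mid x_L>z]=\sigma\,\lambda\big((x-z)/\sigma\big)$ and the regression slope is $\mathrm{Cov}(\theta,x_L)/\mathrm{Var}(x_L)=\sigma_F^2/\sigma^2$, this yields
\[
\EE^\invest=x+\frac{\sigma_F^2}{\sigma}\,\lambda\!\left(\frac{x-z}{\sigma}\right),\qquad
\EE^\notinvest=x-\frac{\sigma_F^2}{\sigma}\,\lambda\!\left(\frac{z-x}{\sigma}\right),
\]
the second following from $\Phi(-\,\cdot)=1-\Phi(\cdot)$. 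I expect this closed-form derivation to be the main obstacle; the two remaining steps are then immediate.

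Given the closed forms, monotonicity follows by differentiation. Writing $u=(x-z)/\sigma$, one has $\partial_x\EE^\invest=1+(\sigma_F^2/\sigma^2)\,\lambda'(u)$ and $\partial_z\EE^\invest=-(\sigma_F^2/\sigma^2)\,\lambda'(u)$; because $-1<\lambda'<0$ (as used in the proof of Lemma \ref{lemma_truncated_expectation}) and $\sigma_F^2/\sigma^2<1$, both derivatives are strictly positive, and the analogous computation with $v=(z-x)/\sigma$ handles $h=\notinvest$. Hence $\EE^h$ is strictly increasing in $x$ and in $z$. As an alternative I could run a monotone-likelihood-ratio argument exactly as in the proof of Lemma \ref{lemma_x_payoff}: varying $x$ changes only the log-concave Gaussian factor, and for $z'>z$ the likelihood-factor ratio $\Phi((\theta-z')/\sigma_L)/\Phi((\theta-z)/\sigma_L)$ is increasing in $\theta$ since $\lambda$ is decreasing, so $G^h(\cdot;x',z)$ and $G^h(\cdot;x,z')$ dominate in the first-order stochastic sense.

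Finally, for the limits I would rewrite, using $x=\sigma u+z$ and $\sigma-\sigma_F^2/\sigma=\sigma_L^2/\sigma$,
\[
\EE^\invest=z+\frac{\sigma_F^2}{\sigma}\big(u+\lambda(u)\big)+\frac{\sigma_L^2}{\sigma}\,u .
\]
As $x\to\infty$ we have $u\to\infty$ and $\lambda(u)\to0$, so both variable terms diverge to $+\infty$ and $\EE^\invest\to\infty$; as $x\to-\infty$ we have $u\to-\infty$ with $u+\lambda(u)\to0$ (both limiting facts are established in the proof of Lemma \ref{lemma_truncated_expectation}), while $(\sigma_L^2/\sigma)\,u\to-\infty$, so $\EE^\invest\to-\infty$. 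The identical rearrangement in $v=(z-x)/\sigma$ gives the same two limits for $\EE^\notinvest$, which completes the proof.
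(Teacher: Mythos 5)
Your proposal is correct and takes essentially the same route as the paper: both reduce the lemma to the closed forms $\EE_{\theta \sim G^\invest(\cdot;\,x,z)}[\theta] = x + (\sigma_F^2/\sigma)\,\lambda\big((x-z)/\sigma\big)$ and $\EE_{\theta \sim G^\notinvest(\cdot;\,x,z)}[\theta] = x - (\sigma_F^2/\sigma)\,\lambda\big((z-x)/\sigma\big)$, then obtain monotonicity from $-1 < \lambda' < 0$ and the limits from the identical decomposition $z + (\sigma_F^2/\sigma)\big(u + \lambda(u)\big) + (\sigma_L^2/\sigma)\,u$ with $u + \lambda(u) \to 0$ as $u \to -\infty$. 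The only (cosmetic) difference is how the closed form is reached---you use the bivariate-Gaussian regression slope $\sigma_F^2/\sigma^2$ and the truncated-normal mean via the tower property, whereas the paper integrates directly with the identities $\int_{-\infty}^\infty \phi(\eta)\Phi(a+b\eta)\,\dd\eta = \Phi\big(a/\sqrt{1+b^2}\big)$ and $\int_{-\infty}^\infty \eta\,\phi(\eta)\Phi(a+b\eta)\,\dd\eta = \big(b/\sqrt{1+b^2}\big)\,\phi\big(a/\sqrt{1+b^2}\big)$, which is precisely the computation your projection argument encodes.
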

\begin{proof}[Proof of Lemma \ref{lemma_exp_ext}]
Note that under history $h = \invest$,
\begin{align} \label{app_fol_exp}
    \EE_{\theta \sim G^\invest(\cdot; \,x, z)} [\theta] & = \frac{1}{\Phi\left(\frac{x-z}{\sigma}\right)}\int_{-\infty}^\infty \frac{t}{\sigma_F} \phi\left(\frac{t-x}{\sigma_F}\right) \Phi\left(\frac{t-z}{\sigma_L}\right) \dd t  \nonumber \\
    & = \frac{1}{\Phi\left(\frac{x-z}{\sigma}\right)}\int_{-\infty}^\infty (x + \sigma_F \eta) \phi(\eta) \Phi\left(\frac{x + \sigma_F \eta - z}{\sigma_L}\right) \dd \eta \nonumber \\
    & = x + \frac{\sigma_F^2}{\sigma} \lambda \left(\frac{x-z}{\sigma}\right). \tag{A.4}
\end{align}
The second inequality is due to a change of variable $\eta = (t-x)/\sigma_F$, and the third equality is derived by applying integral identities $\int_{-\infty}^\infty \phi(\eta)\Phi(a + b\eta) \dd \eta = \Phi(a/\sqrt{1 + b^2})$ and $\int_{-\infty}^\infty \eta \phi(\eta) \Phi(a+b\eta) \dd \eta = (b/\sqrt{1+b^2})\Phi(a/\sqrt{1+b^2})$ (see, for example, \cite{owen_1980}). Therefore, (\ref{app_fol_exp}) is increasing in $x$ and $z$ because
\begin{align*}
    \EE_{\theta \sim G^\invest(\cdot; \,x, z)} [\theta] & = \frac{\sigma_L^2}{\sigma}\left(\frac{x-z}{\sigma}\right) + \frac{\sigma_F^2}{\sigma}\left(\frac{x-z}{\sigma} + \lambda\left(\frac{x-z}{\sigma}\right)\right) + z,
\end{align*}
$\eta + \lambda(\eta)$ is increasing in $\eta$, and $-1 < \lambda'(\eta) < 0$. The last inequality is due to \cite{sampford_1953}.
Now given the fact that $\lambda(\eta)/\eta \to -1$ as $\eta \to -\infty$ and $\lambda(\eta)/\eta \to 0$ as $\eta \to \infty$, we can conclude that
\[
\EE_{\theta \sim G^\invest(\cdot; \,x, z)} [\theta] \to \begin{cases}
    \infty & \text{as $x \to \infty$} \\
    -\infty & \text{as $x \to -\infty$}
\end{cases}.
\]

Similarly, under history $h = \notinvest$, one can show that
\begin{align*}
    \EE_{\theta \sim G^\notinvest(\cdot; \,x, z)}[\theta] & = x - \frac{\sigma_F^2}{\sigma} \lambda\left(\frac{z-x}{\sigma}\right) \\
    & = -\frac{\sigma_L^2}{\sigma}\left(\frac{z-x}{\sigma}\right) - \frac{\sigma_F^2}{\sigma}\left(\frac{z-x}{\sigma} + \lambda\left(\frac{z-x}{\sigma}\right)\right) + z,
\end{align*}
which is increasing in $x$ and $z$, approaches to $-\infty$ as $x \to -\infty$, and approaches to $\infty$ as $x \to \infty$. The proof is complete.
\end{proof}

\begin{lemma} \label{lemma_payoff_ext}
Under history $h$, $\pi_F^h(x; z, x_h)$ is increasing in $x$ and $z$, and is decreasing in $x_h$. Moreover, $\lim_{x \to -\infty}\pi_F^h(x; z, x_h) = -\infty$ and $\lim_{x \to \infty} \pi_F^h(x; z, x_h) = \infty$.
\end{lemma}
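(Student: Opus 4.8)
The plan is to write
$\pi_F^h(x; z, x_h) = \EE_{\theta \sim G^h(\cdot; x, z)}[\theta] - \frac{n-1}{n}\,\EE_{\theta \sim G^h(\cdot; x, z)}\!\left[\Phi\!\left(\frac{x_h - \theta}{\sigma_F}\right)\right] - \frac{\chi_\notinvest}{n}$
and to handle the three monotonicity claims and the two limits separately, leaning on Lemma \ref{lemma_exp_ext} wherever a statement about $\EE[\theta]$ alone suffices. Monotonicity in $x_h$ is immediate: only the middle term depends on $x_h$, the map $x_h \mapsto \Phi((x_h - \theta)/\sigma_F)$ is strictly increasing for every $\theta$, and $G^h(\cdot; x, z)$ does not involve $x_h$, so the minus sign makes $\pi_F^h$ strictly decreasing in $x_h$.

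For monotonicity in $x$ and $z$ I would first establish that $G^h(\cdot; x, z)$ is ordered by the monotone likelihood ratio (hence by first-order stochastic dominance) in each of $x$ and $z$, mirroring Part 1 of the proof of Lemma \ref{lemma_x_payoff}. Take $h = \invest$; the case $h = \notinvest$ is symmetric. For $x' > x$ the density ratio $g^\invest(\theta; x', z)/g^\invest(\theta; x, z)$ reduces, after the common factor $\Phi((\theta - z)/\sigma_L)$ and the $\theta$-independent normalizer cancel, to $\phi((\theta - x')/\sigma_F)/\phi((\theta - x)/\sigma_F)$ times a constant, which is increasing in $\theta$ by log-concavity of $\phi$. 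For $z' > z$ the ratio reduces to $\Phi((\theta - z')/\sigma_L)/\Phi((\theta - z)/\sigma_L)$ times a constant, whose log-derivative in $\theta$ equals $\sigma_L^{-1}\left[\lambda((\theta - z')/\sigma_L) - \lambda((\theta - z)/\sigma_L)\right]$; this is positive because $\lambda = \phi/\Phi$ is strictly decreasing and $z' > z$. Both ratios being increasing in $\theta$ yields the MLR, hence FOSD, ordering. Since the integrand $\theta - \frac{n-1}{n}\Phi((x_h - \theta)/\sigma_F)$ is strictly increasing in $\theta$, its expectation, and therefore $\pi_F^h$, is strictly increasing in $x$ and in $z$.

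For the limits as $x \to \pm\infty$ I would exploit boundedness of the correction terms: $\Phi((x_h - \theta)/\sigma_F) \in [0,1]$ confines the middle term to $[0, (n-1)/n]$, and $\chi_\notinvest/n \in \{0, 1/n\}$, so both stay bounded uniformly in $x$. Lemma \ref{lemma_exp_ext} gives $\EE_{\theta \sim G^h(\cdot; x, z)}[\theta] \to +\infty$ as $x \to +\infty$ and $\to -\infty$ as $x \to -\infty$; adding bounded terms preserves these limits, so $\pi_F^h \to \pm\infty$ accordingly.

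The step I expect to be the main obstacle is the FOSD ordering in $z$, because $z$ enters simultaneously through the tilting factor $\Phi((\theta - z)/\sigma_L)$ and through the normalizing constant $\Phi((x - z)/\sigma)$. The resolution is to note that the normalizer is $\theta$-independent and hence drops out of the likelihood-ratio comparison, reducing the question to the monotonicity of $\Phi((\theta - z')/\sigma_L)/\Phi((\theta - z)/\sigma_L)$ in $\theta$, which is precisely the decreasing reversed-hazard-rate property of the Gaussian already invoked in the main model.
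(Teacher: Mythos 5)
Your proof is correct and takes essentially the same route as the paper's: monotonicity in $x_h$ is immediate from the sign of the middle term, monotonicity in $x$ and $z$ follows from an MLR (hence FOSD) ordering of $G^h(\cdot;\,x,z)$ applied to the strictly increasing integrand, and the limits follow from Lemma \ref{lemma_exp_ext} together with boundedness of $\Phi\left((x_h-\theta)/\sigma_F\right)$. In fact the paper omits the FOSD verification as ``similar to Lemma \ref{lemma_x_payoff},'' and your reversed-hazard-rate argument for the tilting factor $\Phi\left((\theta - z)/\sigma_L\right)$ correctly supplies the one step that does not carry over verbatim from the truncated-Gaussian case, where the paper instead used the explicit closed form of $\Psi^\invest$.
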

\begin{proof}[Proof of Lemma \ref{lemma_payoff_ext}]
Recall that
\[
\pi_F^h(x; z, x_h) = \EE_{\theta \sim G^h(\cdot; \,x, z)} \left[\theta - \frac{n-1}{n}\Phi\left(\frac{x_h - \theta}{\sigma_F}\right)\right] - \frac{\chi_\notinvest}{n}.
\]
It is immediate to see that $\pi_F^h(x; z, x_h)$ is decreasing in $x_h$ since $\Phi((x_h - \theta)/\sigma_F)$ is increasing in $x_h$.
To show $\pi_F^h(x; z, x_h)$ is increasing in both $x$ and $z$, it suffices to prove that $G^h(\cdot; \,x, z)$ is increasing in $x$ and $z$ with respect to the first-order stochastic dominance order. We ignore the proof as it is similar to that of Lemma \ref{lemma_x_payoff}.
The limits at infinity are given directly by Lemma \ref{lemma_exp_ext} and the boundedness of $\Phi((x_h - \theta)/n)$.
\end{proof}

\begin{lemma} \label{lemma_seq_ext} 
Let $\xlow^0 = \xhlow^0 = -\infty$ and $\xup^0 = \xhup^0 = \infty$ for each history $h$. Then
the iterative procedure of $\Delta$-rationalizability yields six sequences: \\
(i) $(\xlow^k)_{k = 0}^\infty, (\xilow^k)_{k = 0}^\infty$, and $(\xnlow^k)_{k = 0}^\infty$ are strictly increasing and bounded above;\\
(ii) $(\xup^k)_{k = 0}^\infty, (\xiup^k)_{k = 0}^\infty$, and $(\xnup^k)_{k = 0}^\infty$ are strictly decreasing and bounded below.
\end{lemma}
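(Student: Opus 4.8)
The plan is to mimic the iteration of Lemma \ref{lemma_rat_seq} from the main model, now exploiting the two-sided dominance that the noisy leader signal restores. First I would set up the best-response maps explicitly. Let $\br_L(x_\invest)$ denote the unique zero of $\pi_L(\cdot\,; x_\invest)$, which exists and is unique because $\pi_L(x_L; x_\invest) = x_L - \Phi((x_\invest - x_L)/\sigma)$ is strictly increasing in $x_L$ with limits $\mp\infty$; and let $\br_F^h(z, x_h)$ denote the unique zero of $\pi_F^h(\cdot\,; z, x_h)$, which is well-defined by Lemma \ref{lemma_payoff_ext} (strict monotonicity in $x$ together with limits $\mp\infty$). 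The iteration then produces, for $k \ge 1$,
\begin{equation*}
\begin{cases}
\xlow^k = \br_L(\xilow^{k-1}), \quad \xup^k = \br_L(\xiup^{k-1}), \\
\xilow^k = \br_F^\invest(\xup^k, \xilow^{k-1}), \quad \xiup^k = \br_F^\invest(\xlow^k, \xiup^{k-1}), \\
\xnlow^k = \br_F^\notinvest(\xup^k, \xnlow^{k-1}), \quad \xnup^k = \br_F^\notinvest(\xlow^k, \xnup^{k-1}).
\end{cases}
\end{equation*}
The pairing reflects a best-case/worst-case logic: each lower dominance bound (the smallest type for whom $\invest$ survives) is computed against the most optimistic rationalizable conjecture about opponents --- for a follower this is the highest surviving leader threshold (most favorable signaling) together with the lowest surviving follower threshold (maximal coinvestment) --- whereas each upper dominance bound is computed against the most pessimistic such conjecture.

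Second, I would record the comparative statics of these maps. Since $\pi_L$ is strictly decreasing in $x_\invest$, $\br_L$ is strictly increasing; since $\pi_F^h$ is strictly increasing in $z$ and strictly decreasing in $x_h$ (Lemma \ref{lemma_payoff_ext}), $\br_F^h$ is strictly decreasing in $z$ and strictly increasing in $x_h$. The base round is finite: $\xlow^1 = \br_L(-\infty) = 0$ and $\xup^1 = \br_L(\infty) = 1$, while $\xilow^1$ and $\xnlow^1$ are finite precisely because $\EE_{\theta \sim G^h(\cdot; \,x, z)}[\theta] \to -\infty$ as $x \to -\infty$ by Lemma \ref{lemma_exp_ext}. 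This full-support property is exactly the feature absent in the main model, and it is what supplies the lower dominance regions here.

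Third, I would prove both halves of the statement by a single induction that simultaneously tracks monotonicity and interleaving. The inductive claim at round $k$ is: (a) $\xlow^k > \xlow^{k-1}$, $\xilow^k > \xilow^{k-1}$, $\xnlow^k > \xnlow^{k-1}$, with the three upper sequences strictly smaller than at $k-1$; and (b) $\xlow^k \le \xup^k$, $\xilow^k \le \xiup^k$, $\xnlow^k \le \xnup^k$. For (a) one chains the comparative statics through the recursion: e.g.\ $\xlow^{k+1} = \br_L(\xilow^k) > \br_L(\xilow^{k-1}) = \xlow^k$, and $\xiup^{k+1} = \br_F^\invest(\xlow^{k+1}, \xiup^k) < \br_F^\invest(\xlow^k, \xiup^{k-1}) = \xiup^k$, since the first argument rose and the second fell, both pushing $\br_F^\invest$ down; the remaining four are symmetric. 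For (b), computing the leader bounds first gives $\xlow^k = \br_L(\xilow^{k-1}) \le \br_L(\xiup^{k-1}) = \xup^k$, and then $\xilow^k = \br_F^\invest(\xup^k, \xilow^{k-1}) \le \br_F^\invest(\xlow^k, \xiup^{k-1}) = \xiup^k$ because the first argument is larger and the second smaller.

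Finally, boundedness follows from (a) and (b) together: the decreasing sequence $\xup^k$ is bounded below by $\xlow^k \ge \xlow^1 = 0$, and the increasing sequence $\xlow^k$ is bounded above by $\xup^k \le \xup^1 = 1$, with the analogous bounds $\xilow^k \le \xiup^k \le \xiup^1$ and $\xnlow^k \le \xnup^k \le \xnup^1$ for the $\invest$ and $\notinvest$ pairs; the monotone convergence theorem then delivers the six limits. The main obstacle I anticipate is not the induction itself but pinning down the recursion correctly --- in particular justifying which extremal leader and follower thresholds enter each best response (the best-/worst-case selection) and verifying that the single-crossing property of $\pi_F^h$ guaranteed by Lemma \ref{lemma_payoff_ext} makes each $\br_F^h$ genuinely single-valued, so that the comparative statics apply unambiguously.
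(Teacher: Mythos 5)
Your proposal is correct and takes essentially the same route as the paper's own proof: the identical best-response recursion (\ref{ext_seq_br}), the same comparative statics of $\br_L$ and $\br_F^h$ derived from Lemmas \ref{lemma_exp_ext} and \ref{lemma_payoff_ext}, and the same simultaneous induction that tracks strict monotonicity together with the interleaving $\xlow^k < \xup^k$ and $\xhlow^k < \xhup^k$, from which boundedness follows. The only differences are expository---you spell out the best-/worst-case pairing behind the recursion and the bounding-by-interleaving step, which the paper states tersely.
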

\begin{proof}[Proof of Lemma \ref{lemma_seq_ext}]
Let $x_L = \br_L(x_\invest)$ denote the unique solution to
\[
\pi_L(x_L; x_\invest) = x_L - \Phi\left(\frac{x_\invest-x_L}{\sigma}\right),
\]
and $x = \br_F^h(z, x_h)$ the unique solution to $\pi_F^h(x; z, x_h) = 0$ for each history $h$.
The latter is guaranteed by both Lemma \ref{lemma_exp_ext} and Lemma \ref{lemma_payoff_ext}.

Let $\xlow^0 = \xhlow^0 = -\infty$ and $\xup^0 = \xhup^0 = \infty$. Define for $k \in \NN$,
\begin{equation} \label{ext_seq_br}
\begin{cases}
\xlow^k = \br_L(\xilow^{k-1}) \\
\xup^k = \br_L(\xiup^{k-1}) \\
\xilow^k = \br_F^\invest(\xup^k, \xilow^{k-1}) \\
\xiup^k  = \br_F^\invest(\xlow^k, \xiup^{k-1}) \\
\xnlow^k = \br_F^\notinvest(\xup^k, \xnlow^{k-1}) \\
\xnup^k = \br_F^\notinvest(\xlow^k, \xnup^{k-1}) \\
\end{cases}. \tag{A.5}
\end{equation}
Note that, by Lemma \ref{lemma_payoff_ext}, $\br_L(x_\invest)$ is strictly increasing in $x_\invest$ and $\br_F^h(z, x_h)$ is strictly decreasing in $z$ and strictly increasing in $x_h$.  We prove the lemma by induction. 

\noindent For $k = 1$, consider the leader first. Since $\xlow^1 = \br_L(\xilow^0) = 0$ and $\xup^0 = \br_L(\xiup^0) = 1$, it is immediate that $\xlow^0 < \xlow^1 < \xup^1 < \xup^0$. Now under history $h$, Lemma \ref{lemma_exp_ext} implies that $\xhlow^1 > \xhlow^0$ and $\xhup^1 < \xhup^0$. But since
\[
\xhlow^1 = \br_F^h(\xup^1, \xhlow^0) < \br_F^h(\xlow^1, \xhlow^0) < \br_F^h(\xlow^1, \xhup^0) = \xhup^1,
\]
it follows that $\xhlow^0 < \xhlow^1 < \xhup^1 < \xhup^0$.

\noindent Assume that, for $k \geq 1$, $\xlow^{k-1} < \xlow^k < \xup^k < \xup^{k-1}$ for the leader, and $\xhlow^{k-1} < \xhlow^k < \xhup^k < \xhup^{k+1}$ for followers under history $h$. The induction hypothesis implies
\begin{align*}
    \br_L(\xilow^{k-1}) < \br_L(\xilow^k) < \br_L(\xiup^k) < \br_L(\xiup^{k-1});
\end{align*}
therefore we have $\xlow^k < \xlow^{k+1} < \xup^{k+1} < \xup^k$.
The proof for $\xhlow^k < \xhlow^{k+1} < \xhup^{k+1} < \xhup^k$ is straightforward.
\end{proof}

\medskip
\noindent Define $\tilde{S}(y, \nu) = 1/2 - T(y, \nu)/\Phi(y)$, where $T(y, \nu)$ is Owen's T-function with slope parameter $\nu > 0$ and $y \in \RR$ (see footnote \ref{owen_t}).
\begin{lemma} \label{lemma_s_func}
    The function $\tilde{S}(y, \nu)$ is strictly increasing and differentiable everywhere in $y$. Moreover, $\lim_{y \to -\infty} \tilde{S}(y, \nu) = 0$, $\lim_{y \to \infty} \tilde{S}(y, \nu) = 1/2$, and if $\nu \in (0, 1)$, then $\partial \tilde{S}(y, \nu)/\partial y$ vanishes at infinity ; i.e.,
    \begin{equation*}
         \lim_{y \to -\infty} \frac{\partial  \tilde{S}(y, \nu)}{\partial y} =          \lim_{y \to \infty} \frac{\partial  \tilde{S}(y, \nu)}{\partial y} = 0.
    \end{equation*}
\end{lemma}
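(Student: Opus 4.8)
The plan is to reduce the whole lemma to a single derivative identity and then study an auxiliary function through a first-order linear ODE. I would first note that $T(y,\nu)$ is $C^\infty$ in $y$ — one may differentiate under the integral sign, the integrand and its $y$-derivatives being dominated on compacta — and that $\Phi(y)>0$ for every $y$; hence $\tilde{S}(y,\nu) = \tfrac12 - T(y,\nu)/\Phi(y)$ is differentiable (indeed smooth) everywhere, settling the first claim. The engine of the proof is the elementary identity $\partial_y T(y,\nu) = -\phi(y)\big[\Phi(\nu y) - \tfrac12\big]$, which I would obtain by differentiating under the integral and using the substitution $u=|y|t$. Feeding this and $T(y,\nu)/\Phi(y) = \tfrac12 - \tilde{S}(y,\nu)$ into the quotient rule yields
\[
\frac{\partial \tilde{S}(y,\nu)}{\partial y} = \frac{\phi(y)}{\Phi(y)}\big[\Phi(\nu y) - \tilde{S}(y,\nu)\big].
\]

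For strict monotonicity it then suffices to prove that $g(y) := \Phi(\nu y) - \tilde{S}(y,\nu)$ is strictly positive. Differentiating $g$ and inserting the identity gives the linear ODE $g'(y) + \tfrac{\phi(y)}{\Phi(y)}g(y) = \nu\phi(\nu y)$; multiplying by the integrating factor $\Phi(y)$ collapses the left-hand side to a total derivative, so that $\big(\Phi(y)g(y)\big)' = \nu\,\Phi(y)\phi(\nu y) > 0$. Because $\Phi(y)g(y) = \Phi(y)\Phi(\nu y) - \tfrac12\Phi(y) + T(y,\nu) \to 0$ as $y\to-\infty$, integrating from $-\infty$ gives $\Phi(y)g(y) = \nu\int_{-\infty}^y \Phi(s)\phi(\nu s)\,\dd s > 0$, whence $g>0$ and therefore $\partial_y\tilde{S}>0$ at every $y$.

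The two limits come next. As $y\to\infty$ one has $T(y,\nu)/\Phi(y)\to 0$ directly, so $\tilde{S}\to\tfrac12$. As $y\to-\infty$ both $T$ and $\Phi$ vanish, and a single L'H\^opital step, using $\partial_y T(y,\nu)/\Phi'(y) = \tfrac12 - \Phi(\nu y)\to\tfrac12$, gives $T/\Phi\to\tfrac12$ and hence $\tilde{S}\to 0$.

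Finally, I would deduce the vanishing of the derivative from $\partial_y\tilde{S} = \tfrac{\phi(y)}{\Phi(y)}g(y)$. As $y\to\infty$ this is routine: $g$ is bounded by $1$ while $\phi(y)/\Phi(y)\to 0$. The delicate case, and the main obstacle, is $y\to-\infty$, which is an $\infty\cdot 0$ indeterminacy since $\phi(y)/\Phi(y)\sim|y|$ whereas $g\to 0$. I would resolve it by rewriting $\partial_y\tilde{S} = \tfrac{\phi(y)}{\Phi(y)^2}\,\nu\int_{-\infty}^y\Phi(s)\phi(\nu s)\,\dd s$ and bounding the integral with $\Phi(s)\le\Phi(y)$ for $s\le y$, which yields the clean estimate $0 < \partial_y\tilde{S}\le \phi(y)\Phi(\nu y)/\Phi(y)$. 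The Mills-ratio asymptotics $\Phi(y)\sim\phi(y)/|y|$ and $\Phi(\nu y)\sim\phi(\nu y)/(\nu|y|)$ then collapse the right-hand side to $\sim (\nu\sqrt{2\pi})^{-1}\ee^{-\nu^2y^2/2}\to 0$, so the derivative vanishes by squeezing. This argument is in fact valid for every $\nu>0$; the hypothesis $\nu\in(0,1)$ is merely the regime used in the application, where $\nu=\alpha=\sigma_F/(2\sigma_L^2+\sigma_F^2)^{1/2}<1$.
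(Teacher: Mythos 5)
Your proof is correct, and in its two substantive parts it takes a genuinely different route from the paper's. Both arguments pivot on the same key identity $\partial_y T(y,\nu) = -\tfrac{1}{2}\phi(y)\erf\big(\nu y/\sqrt{2}\big) = -\phi(y)\big[\Phi(\nu y)-\tfrac{1}{2}\big]$, and the two limits of $\tilde{S}$ are handled essentially identically (directly at $+\infty$, one L'H\^opital step at $-\infty$). For strict monotonicity, the paper differentiates $T/\Phi$ by the quotient rule, substitutes the integral representation $2T(y,\nu)=\int_{-y}^{\infty}\phi(t)\erf\big(\nu t/\sqrt{2}\big)\,\dd t$, and concludes negativity from the sign of $\erf\big(\nu y/\sqrt{2}\big)+\erf\big(\nu t/\sqrt{2}\big)$ on $t>-y$; you instead recast the identity $\partial_y\tilde{S} = \tfrac{\phi(y)}{\Phi(y)}\big[\Phi(\nu y)-\tilde{S}(y,\nu)\big]$ as a linear ODE for $g(y)=\Phi(\nu y)-\tilde{S}(y,\nu)$, whose integrating factor $\Phi$ yields the closed form $\Phi(y)g(y)=\nu\int_{-\infty}^{y}\Phi(s)\phi(\nu s)\,\dd s>0$ (the boundary condition $\Phi(y)g(y)\to 0$ as $y\to-\infty$ is the one step that must not be skipped, and you verify it). For the vanishing of the derivative at $-\infty$, the paper introduces the auxiliary function $M(y,\nu)$ and runs two L'H\^opital computations anchored by $\lim_{y\to-\infty}\phi(y)/(y\Phi(y))=-1$; you instead reuse your closed form, bound $\Phi(s)\le\Phi(y)$ for $s\le y$ to obtain $0<\partial_y\tilde{S}\le \phi(y)\Phi(\nu y)/\Phi(y)$, and squeeze via Mills-ratio asymptotics. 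Your route buys three things: a single integral representation does double duty (positivity of $g$ and the tail estimate); you get an explicit Gaussian decay rate $\partial_y\tilde{S} = O\big(e^{-\nu^2 y^2/2}\big)$ rather than a bare limit; and your closing observation is right---the conclusion holds for every $\nu>0$ (indeed the paper's own computation never actually invokes $\nu<1$ after assuming it), the hypothesis $\nu\in(0,1)$ simply reflecting the application's $\alpha=\gamma/\sqrt{2+\gamma^2}<1$. What the paper's route buys in exchange is directness with standard Owen's-T and $\erf$ identities, avoiding the ODE detour at the cost of a somewhat more opaque double L'H\^opital argument.
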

\begin{proof}[Proof of Lemma \ref{lemma_s_func}]
Since $T(y, \nu)$ is differentiable everywhere in $y$, so is $\tilde{S}(y, \nu)$. Note that
\begin{align*}
    \frac{\dd }{\dd y}\left[ \frac{T(y, \nu)}{\Phi(y)} \right] & = \frac{1}{2\Phi(y)^2} \left[-\phi(y) \erf\left( \frac{\nu y}{\sqrt{2}} \right) \Phi(y) - 2\phi(y) T(y, \nu)\right]  \\
 & \propto - \phi(y) \erf\left( \frac{\nu y}{\sqrt{2}} \right) \Phi(y) - \phi(y) \int_{-y}^\infty \phi(t) \erf\left(
 \frac{\nu t}{\sqrt{2}} \right) \dd t \\
 & = - \phi(y) \int_{-y}^\infty \phi(t) \left[ \erf\left( 
 \frac{\nu y}{\sqrt{2}} \right) + \erf\left( 
 \frac{\nu t}{\sqrt{2}} \right) \right] \dd t \\
 & < 0,
\end{align*}
where $\erf(\cdot)$ is the error function.
The second equality is derived using the integral representation of Owen's T-function (see Equation (3.1) in \cite{brychkov_savischenko_2016}). The inequality holds because the strict monotonicity of $\erf(\cdot)$ implies that
\begin{equation*}
    \erf\left( \frac{\nu y}{\sqrt{2}} \right) + \erf\left( 
 \frac{\nu t}{\sqrt{2}} \right) >  \erf\left( 
 \frac{\nu y}{\sqrt{2}} \right) + \erf\left( 
 -\frac{\nu y}{\sqrt{2}} \right) = 0
\end{equation*}
for all $t > -y$. Thus, $\tilde{S}(y, \nu)$ is strictly increasing. Moreover, since $T(y, \nu) \to 0$ as $y \to \infty$ or $y \to -\infty$, we have $\lim_{y \to \infty} \tilde{S}(y, \nu) = 1/2$ and
\begin{equation*}
    \lim_{y \to -\infty} \tilde{S}(y, \nu) = \frac{1}{2} -\lim_{y \to -\infty}\frac{T(y, \nu)}{\Phi(y)} = \frac{1}{2} + \frac{1}{2}\lim_{y \to -\infty} \erf\left( \frac{\nu y}{\sqrt{2}} \right)  = 0.
\end{equation*}
The last equality is due to $\erf(\nu y/\sqrt{2}) \to -1$ as $y \to -\infty$.

Now we define
\begin{equation*}
    M(y, \nu) = \int_{-y}^\infty \phi(t) \erf\left( 
 \frac{\nu t}{\sqrt{2}} \right)  \dd t + \erf\left( \frac{\nu y}{\sqrt{2}}\right)\Phi(y),
\end{equation*}
and write $\tilde{S}_y(y, \nu)$ for the partial derivative $\partial \tilde{S}(y, \nu)/\partial y$. Then we know from above that
\begin{equation*}
    \tilde{S}_y(y, \nu) = \frac{\phi(y) M(y, \nu)}{2 \Phi(y)^2}.
\end{equation*}
To show that $\tilde{S}_y(y, \nu)$ vanishes as $y \to \infty$, note that
\begin{equation*}
    \lim_{y \to \infty} M(y, \nu) = \int_{-\infty}^\infty \phi(t) \erf\left( \frac{\nu t}{\sqrt{2}}\right) \dd t + 1
    = 2 \int_{-\infty}^\infty \phi(t) \Phi(\nu t) \dd t
    = 2 \Phi(0) = 1.
\end{equation*}
The first equality holds because $\erf(\nu y/\sqrt{2}) \to 1$ as $y \to \infty$, the second equality is due to $\erf(\nu t/\sqrt{2}) = 2 \Phi(\nu t) - 1$, and the last equality is given by applying the integral identity $\int_{-\infty}^\infty \phi(t) \Phi(a + bt) \dd t = \Phi(a/\sqrt{1+b^2})$. Thus, $\lim_{y \to \infty} \tilde{S}_y(y, \nu) = 0$. 

Suppose that $\nu \in (0, 1)$.
Since
\begin{equation*}
    M_y(y,\nu) = \sqrt{\frac{2\nu^2}{\pi}} \Phi(y) \exp{\left(-\frac{\nu^2 y^2}{2}\right)},
\end{equation*}
by L'H\^{o}pital's rule and $\lim_{y \to -\infty} \phi(y)/(y \Phi(y)) = -1$, we have
\begin{equation*}
    \lim_{y \to -\infty} \frac{y M(y, \nu)}{\Phi(y)} = \lim_{y \to -\infty} \frac{M(y, \nu)}{\Phi(y) y^{-1}} = \sqrt{\frac{2\nu^2}{\pi}} \lim_{y \to -\infty}  \frac{ \exp{\left(-\frac{\nu^2 y^2}{2}\right)}}{\frac{\phi(y)}{y \Phi(y)} - \frac{1}{y^2}} = 0.
\end{equation*}
This implies that
\begin{align*}
    \lim_{y \to -\infty} \frac{\phi(y) M(y, \nu)}{\Phi(y)^2} & = \lim_{y \to -\infty} \frac{-y M(y, \nu) + M_y(y, \nu)}{2 \Phi(y)} \\
    & = \sqrt{\frac{2\nu^2}{\pi}} \lim_{y \to -\infty} \exp{\left(-\frac{\nu^2 y^2}{2}\right)} \\
    & = 0.
\end{align*}
Thus, $\lim_{y \to -\infty} \tilde{S}_y(y, \nu) = 0$.
The proof is complete.
\end{proof}

\medskip
\subsection*{Proof of Proposition \ref{prop_unique_rat_ext}}
\emph{Proof of Part (i)}:
Fix $\sigma_F = \gamma \sigma_L$, $\gamma > 0$. Let $\xlow, \xup, \xilow, \xiup, \xnlow$, and $\xnup$ be the limits of the six sequences described in (\ref{ext_seq_br}), respectively. By Lemma \ref{lemma_seq_ext}, we know that they must solve (\ref{ext_sys_eqs}). Moreover $0 = \xlow^1 < \xlow \leq \xup < \xup^1 = 1$ and $\xhlow^1 < \xhlow \leq \xhup < \xhup^1$, where $\xhlow^1$ and $\xhup^1$ are the lower and upper dominance bounds in Round 1 for the followers under history $h$. Let $\Xi_L = [0, 1]$ and $\Xi_F^h = [\xhlow^1, \xhup^1]$.

Let $S(y) = \tilde{S}(y, \alpha)$ with slope parameter $\alpha = \gamma/\sqrt{2 + \gamma^2}$.
Since $1 + \gamma^2 (1 + \lambda'(y))$ is positive and bounded for all $y \in \RR$,
the following function is well-defined by Lemma \ref{lemma_s_func}:
\begin{equation*}
    \Lambda(\gamma) = \max_{y \in \RR} \frac{S'(y)}{1 + \gamma^2(1 + \lambda'(y))}.
\end{equation*}
Moreover, $\Lambda(\gamma) > 0$.  Now let
\begin{equation} \label{ext_sufficiency_1} 
     \widehat{\sigma}_L^1(\gamma) = \left(\frac{n-1}{n}\right)  \sqrt{(1+\gamma^2) \Lambda(\gamma)^2}. \tag{A.6}
\end{equation}
We prove this part in two steps. First, we show that if $\sigma_L > \widehat{\sigma}_L^1(\gamma)$
then the following system of equations with $x_L \in \Xi_L$ and $x_h \in \Xi_F^h$ 
\begin{equation} \label{ext_sys_small}
    \begin{cases}
        \pi_L(x_L;x_{\invest}) = 0 \\
        \pi_F^\invest(x_{\invest}; x_L, x_{\invest}) = 0 \\ 
        \pi_F^\notinvest(x_{\notinvest}; x_L,x_{\notinvest}) = 0 \\
    \end{cases} \tag{A.7}
\end{equation}
has a unique solution $(x_L^*, x_\invest^*, x_\notinvest^*)$. Second, we show that there exists $\widehat{\sigma}_L^2(\gamma)$ such that $(x_L^*, x_\invest^*, x_\notinvest^*)$ is also the unique solution to (\ref{ext_sys_eqs}) whenever
\begin{equation} \label{ext_suff_cond}
    \sigma_L > \widehat{\sigma}_L(\gamma) = \max \left\{\widehat{\sigma}_L^1(\gamma), \widehat{\sigma}_L^2(\gamma) \right\}. \tag{A.8}
\end{equation}

\noindent \underline{Step 1}: Assume from now on that $\sigma_L > \widehat{\sigma}_L^1(\gamma)$.
By Equations (\ref{ext_rank_belief}), (\ref{ext_follower_fp_eqn}), and (\ref{app_fol_exp}), we have
\begin{equation} \label{app_fol_fp_eqn}
    \pi_F^\invest(x_\invest; x_L, x_\invest) = x_\invest + \frac{\sigma_F^2}{\sigma} \lambda\left( \frac{x_\invest - x_L}{\sigma}\right) - \frac{n-1}{n}S\left( \frac{x_\invest - x_L}{\sigma} \right). \tag{A.9}
\end{equation}
If $\sigma_L > \widehat{\sigma}_L^1(\gamma)$, then $\sigma_F^2/\sigma^2 = \gamma^2/(1 + \gamma^2)$ gives that
\begin{align*}
    \frac{\partial \pi_F^\invest}{\partial x_\invest} & = 1 + \frac{\gamma^2}{1 + \gamma^2} \lambda'\left( \frac{x_\invest - x_L}{\sigma} \right) - \frac{n-1}{n\sigma} S'\left(\frac{x_\invest - x_L}{\sigma} \right)\\
    & > 1 + \frac{\gamma^2}{1 + \gamma^2} \lambda'\left( \frac{x_\invest - x_L}{\sigma} \right) - \frac{1}{(1 + \gamma^2) \Lambda(\gamma)} S'\left(\frac{x_\invest - x_L}{\sigma} \right) \geq 0.
\end{align*}
This implies that, for any given $x_L$, $\pi_F^\invest(x_\invest; x_L, x_\invest) = 0$ admits a unique solution. 
Similarly, since
\begin{equation*}
    \pi_F^\notinvest (x_\notinvest; x_L, x_\notinvest) = x_\notinvest - \frac{\sigma_F^2}{\sigma} \lambda\left( \frac{ x_L - x_\notinvest }{\sigma}\right) + \frac{n-1}{n}S\left( \frac{x_L - x_\notinvest}{\sigma} \right) - 1,
\end{equation*}
the condition $\sigma_L > \widehat{\sigma}_L^1(\gamma)$ implies that
\begin{align*}
    \frac{\partial \pi_F^\notinvest}{\partial x_\notinvest} & = 1 + \frac{\gamma^2}{1 + \gamma^2} \lambda'\left( \frac{ x_L - x_\notinvest }{\sigma}\right) - \frac{n-1}{n\sigma} S'\left(\frac{x_L - x_\notinvest}{\sigma} \right) \\
    & > 1 + \frac{\gamma^2}{1 + \gamma^2} \lambda'\left( \frac{x_L - x_\notinvest}{\sigma} \right) - \frac{1}{(1 + \gamma^2) \Lambda(\gamma)} S'\left(\frac{x_L - x_\notinvest}{\sigma} \right) \geq 0
\end{align*}
and hence, given $x_L$, $\pi_F^\notinvest (x_\notinvest; x_L, x_\notinvest) = 0$ has a unique solution.
Thus, (\ref{ext_sys_small}) has a unique solution if and only if there is a unique solution to its first two equations.

Since $\partial \pi_F^\invest/\partial x_\invest$ is continuous on $\Xi_F^\invest \times \Xi_L$, the extreme value theorem ensures that there exists $d_\invest > 0$ such that $\partial \pi_F^\invest/\partial x_\invest \geq d_\invest$. Thus, a global implicit function theorem (see, e.g., Lemma 2 of \cite{zhang_ge_2006}) implies that there is a unique function $f: \Xi_L \to \Xi_F^\invest$ such that $\pi_F^\invest(f(x_L); x_L, f(x_L)) = 0$. Moreover, $f \in C^1$ and $f' < 0$. It follows that
\begin{equation*}
    \frac{\dd \pi_L}{\dd x_L} = \frac{\partial \pi_L}{\partial x_L} + \frac{\partial \pi_L}{\partial x_\invest}f'(x_L) > 0.
\end{equation*}
This says that $\pi_L(x_L; f(x_L)) = 0$ has a unique solution. Thus, (\ref{ext_sys_small}) has a unique solution $(x_L^*, x_\invest^*, x_\notinvest^*)$.

\vskip 0.5 \baselineskip
\noindent \underline{Step 2}: We next show that there exists $\widehat{\sigma}_L^2(\gamma)$ such that
$(x_L^*, x_\invest^*, x_\notinvest^*)$ is the unique solution to (\ref{ext_sys_eqs}) if (\ref{ext_suff_cond}) holds.
Note that
\begin{equation*}
    \frac{\partial \pi_L}{\partial x_L} = 1 + \frac{1}{\sigma}\phi\left( \frac{x_\invest - x_L}{\sigma} \right) > 0
\end{equation*}
is continuous on $\Xi_F^\invest \times \Xi_L$, so there exists $d_L > 0$ such that $\partial \pi_L/\partial x_L \geq d_L$. Thus, there exists a global implicit function $g: \Xi_F^\invest \to \Xi_L$ such that $\pi_L(g(x_\invest); x_\invest) = 0$. In addition, we have $g \in C^1$ and $g' > 0$. 
Then the first four equations of (\ref{ext_sys_eqs}) implies that
\begin{equation*}
    \xlow = g(\xilow) = (g \circ f)(\xup) = (g\circ f \circ g) (\xiup) =  (g\circ f \circ g \circ f)(\xlow).
\end{equation*}
Consider $h: \Xi_L \to \Xi_L$ such that $h = g \circ f \circ g \circ f$. Clearly, $x_L^*$ is a fixed point of $h$
(shown in Step 1). We also note that $h(0) > 0$ and $h' > 0$ because $f' < 0$ and $g' > 0$.
Define
\begin{equation*}
    M(\sigma_L, \gamma) = \max_{y \in \RR} - \frac{\gamma^2}{1+\gamma^2} \lambda'(y) + \frac{n-1}{n\sigma_L \sqrt{1 + \gamma^2}} S'(y).
\end{equation*}
We have $M(\sigma_L, \gamma) < 1$ because $\sigma_L > \widehat{\sigma}_L^1(\gamma)$. It follows that
\begin{equation*}
    f' \geq \frac{M(\sigma_L, \gamma)}{M(\sigma_L, \gamma) - 1} = M_f(\sigma_L, \gamma).
\end{equation*}
By the envelope theorem, $\partial M(\sigma_L, \gamma)/\partial \sigma_L < 0$. Thus, $\partial M_f(\sigma_L, \gamma)/\partial \sigma_L > 0$.
By the definition of $g$, we have
\begin{equation*}
    g' \leq \frac{\phi(0)}{\sigma_L \sqrt{1 + \gamma^2} + \phi(0)} = M_g(\sigma_L, \gamma).
\end{equation*}
Moreover, $\partial M_g(\sigma_L, \gamma)/\partial \sigma_L < 0$.
It follows that
\begin{equation} \label{h_prime}
    h' \leq \left[M_g(\sigma_L, \gamma) \cdot M_f(\sigma_L,\gamma)\right]^2. \tag{A.10}
\end{equation}
The right-hand side of (\ref{h_prime}) is strictly decreasing in $\sigma_L$ and approaches zero as $\sigma_L \to \infty$. Thus, there exists $\widehat{\sigma}_L^2(\gamma)$ such that $h' < 1$ if $\sigma_L > \max\{\widehat{\sigma}_L^1(\gamma),  \widehat{\sigma}_L^2(\gamma)\}$ (i.e., (\ref{ext_suff_cond}) holds). It is immediate that $h$ has a unique fixed point when $h' < 1$. The proof of part (i) is complete.   

\medskip
\noindent \emph{Proof of Part (ii)}:
Fix $\sigma_F = \gamma \sigma_L$, $\gamma > 0$. We prove this part in three steps. We first show the existence of a monotone equilibrium characterized by a tuple of thresholds $(x_L^*, x_\invest^*, x_\notinvest^*)$ such that
$a_L = \invest$ if and only if $x_L > x_L^*$, and $s_j(\invest) = \invest$ if and only if $x_j > x_\invest^*$ and $s_j(\notinvest) = \invest$ if and only if $x_j > x_\notinvest^*$ for all followers $j \in F$. 
In Step 2, we show that $x_\invest^* < x_L^*$. We finally show, in Step 3, that there are multiple $\Delta$-rationalizable profiles in the limit as $\sigma_L \to 0$.

\vskip 0.5 \baselineskip
\noindent \underline{Step 1}.
Let $x_L = g(x_\invest)$ be the solution to $\pi_L(x_L; x_\invest) = 0$. Substituting $g(x_\invest)$ into Equation (\ref{app_fol_fp_eqn}) yields
\begin{equation*}
    \pi_F^\invest(x_\invest; g(x_\invest), x_\invest) = x_\invest + \frac{\sigma_F^2}{\sigma} \lambda\left( \frac{x_\invest - g(x_\invest)}{\sigma}\right) - \frac{n-1}{n}S\left( \frac{x_\invest - g(x_\invest)}{\sigma} \right).
\end{equation*}
Since $g(x_\invest) \in (0, 1)$, $\pi_F^\invest(x_\invest; g(x_\invest), x_\invest)\to -\infty$ as $x_\invest \to -\infty$, and $\pi_F^\invest(x_\invest; g(x_\invest), x_\invest)$ $\to \infty$ as $x \to \infty$ by Lemma \ref{lemma_payoff_ext}. Thus, by continuity, there must exists $x_\invest^*$ such that $\pi_F^\invest(x_\invest^*;$ $ g(x_\invest^*), x_\invest^*) = 0$. Let $x_L^* = g(x_\invest^*)$. It follows, by Lemma \ref{lemma_payoff_ext}, that there exists $x_\notinvest^*$ such that $\pi_F^\notinvest(x_\notinvest^*; g(x_\invest^*), x_\notinvest^*) = 0$.

\vskip 0.5 \baselineskip
\noindent \underline{Step 2}.
We now prove that $x_\invest^* < x_L^*$. By way of contradiction, assume $x_\invest^* \geq x_L^*$. Then we have $(x_\invest^* - x_L^*)/\sigma \geq 0$ and hence $x_L^* = \Phi((x_\invest^* - x_L^*)/\sigma) \geq 1/2$. It follows that
\begin{align*}
    x_\invest^* + \frac{\sigma_F^2}{\sigma} \lambda \left( \frac{x_\invest^* - x_L^*}{\sigma}  \right) - \frac{n-1}{n} S \left( \frac{x_\invest^* - x_L^*}{\sigma} \right) 
    \geq x_L^* - \frac{n-1}{2n} \geq \frac{1}{2n} > 0,
\end{align*}
which leads to a contradiction. The first inequality is due to $\lambda > 0$, $S < 1/2$, and the assumption that $x_\invest^* \geq x_L^*$. The second inequality is given by $x_L^* \geq 1/2$. Thus, it must be that $x_\invest^* < x_L^*$.

\vskip 0.5 \baselineskip
\noindent \underline{Step 3}.
Let $\widecheck{x}_L^*$ and $\widecheck{x}_\invest^*$ be the limits of $x_L^*$ and $x_\invest^*$, respectively, as $\sigma_L \to 0$ while keeping the ratio $\sigma_F/\sigma_L = \gamma$ fixed.
Step 2 implies that $(x_\invest^* - x_L^*)/\sigma$ can approach either a constant $k \leq 0$ or $-\infty$ as $\sigma_L \to 0$. We show in both cases, $\widecheck{x}_L^*$ is strictly less than $(n-1)/(2n)$ and so is $\widecheck{x}_\invest^*$.
In the former case, $x_L^*$ and $x_\invest^*$ must have the same limit; otherwise $(x_\invest^* - x_L^*)/\sigma \to -\infty$ leading to a contradiction. But since
\begin{equation*}
    \frac{\sigma_F^2}{\sigma} \lambda \left( \frac{x_\invest^* - x_L^*}{\sigma} \right) - \frac{n-1}{n} S \left( \frac{x_\invest^* - x_L^*}{\sigma} \right) \to 0 \cdot \lambda(k) - \frac{n-1}{n} S(k) = - \frac{n-1}{n}S(k),
\end{equation*}
we have $\widecheck{x}_\invest^* = ((n-1)/n)S(k) < (n-1)/(2n)$ by Lemma \ref{lemma_s_func}, and so does $\widecheck{x}_L^*$. In the latter case, $\widecheck{x}_L^* = 0 < (n-1)/(2n)$ because $\Phi((x_\invest^* - x_L^*)/\sigma) \to 0$.

Now consider a function $\widehat{x}_\invest = \widehat{x}_\invest(\sigma_L)$ that approaches $(n-1)/(2n)$ as $\sigma_L \to 0$. Observe that
\begin{equation*}
    \widehat{x}_\invest + \frac{\sigma_F^2}{\sigma} \lambda \left( \frac{\widehat{x}_\invest - x_L^*}{\sigma} \right) - \frac{n-1}{n} S \left( \frac{\widehat{x}_\invest - x_L^*}{\sigma} \right) \to 0
\end{equation*}
because $(\widehat{x}_\invest - x_L^*)/\sigma \to \infty$. This means that $(n-1)/(2n)$ is a solution to $\pi_F^\invest(x_\invest; $ $\widecheck{x}_L^*, x_\invest) = 0$. In other words, given $x_L^*$, both $\widecheck{x}_\invest^*$ and $(n-1)/(2n)$ are solutions to $\pi_F^\invest(x_\invest; x_L^*, x_\invest) = 0$ in the limit as $\sigma_L \to 0$. 
If we can show that the threshold of the leader's best response given $(n-1)/(2n)$ is different from $\widecheck{x}_L^*$, then we are done.

Let $\zeta$ be the limit of $g(\widehat{x}_\invest)$ as $\sigma_L \to 0$. We show next that $\zeta = (n-1)/(2n)$. Suppose that $\zeta < (n-1)/(2n)$, then $(\widehat{x}_\invest - g(\widehat{x}_\invest))/\sigma \to \infty$ as $\sigma_L \to 0$. This implies that $\zeta = 1 > (n-1)/(2n)$, which gives a contradiction to the assumption that $\zeta < (n-1)/(2n)$. Suppose, on the other hand, that $\zeta > (n-1)/(2n)$. This again leads to a contradiction because $\Phi((\widehat{x}_\invest - g(\widehat{x}_\invest))/\sigma) \to 0$ as $\sigma_L \to 0$. Thus, $\zeta = (n-1)/(2n)$. We can now conclude that both actions are rationalizable for leader types $x_L \in (\widecheck{x}_L^*, (n-1)/(2n))$ and for follower types $x \in (\widecheck{x}_\invest^*, (n-1)/(2n))$ under $h = \invest$. The proof is complete. \qed

\newpage

\section*{Appendix B: Unique Monotone Equilibrium}
In this appendix, we show that the game has a unique equilibrium when we restrict attention to monotone strategies. Although the result is standard in the literature, it is included for completeness.

Let $s_L: \Theta \to \Action_L$ be the leader's strategy. 
The leader is said to follow a \emph{monotone strategy} if her strategy takes the form:
\begin{equation*}
    s_L(\theta) = \begin{cases}
        \invest & \text{if $\theta > \thetazerohat$} \\
        \notinvest & \text{if $\theta \leq \thetazerohat$}
    \end{cases}.
\end{equation*}
A strategy for any follower $j$ is a mapping $s_j: X_j \times \Action_L \to \Action_j$. Follower $j$'s strategy is monotone if
\begin{equation*}
    s_j(x_j, h) = \begin{cases}
        \invest & \text{if $x_j > \xhhat$} \\
        \notinvest & \text{if $x_j \leq \xhhat$}
    \end{cases}.
\end{equation*}
A \emph{monotone equilibrium} is a symmetric perfect Bayesian equilibrium in monotone strategies with thresholds $(\theta_L^*, x_\invest^*, x_\notinvest^*)$.

\begin{lemma} \label{lemma_monotone_b}
There exists a monotone equilibrium with thresholds $\thetazerostar = 0$, $\xistar = -\infty$, and $\xnstar = \infty$.
\end{lemma}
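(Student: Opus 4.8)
The plan is to exhibit the profile with $\thetazerostar = 0$, $\xistar = -\infty$, $\xnstar = \infty$ and verify that each player's prescribed behavior is sequentially rational, interpreting the infinite thresholds as ``every type exerts effort'' (for $-\infty$) and ``no type exerts effort'' (for $\infty$). Since under $\thetazerostar = 0$ both histories arise with positive probability, each follower's interim beliefs are pinned down by Bayes' rule and coincide with the truncated Gaussians $\Psi^h(\cdot; x, 0)$ of Equation (\ref{interim_belief}); hence it suffices to check the three best-response conditions at the leader's node and at the followers' nodes following each history.

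First I would verify the leader's incentive. Holding fixed the followers' threshold $\xistar = -\infty$ under $h = \invest$ (so that every follower imitates her), the leader's payoff to effort in Equation (\ref{leader_payoff}) becomes $\pi_L(\theta; -\infty) = \theta - \Phi(-\infty) = \theta$. Because the safe action yields $0$ irrespective of how followers behave after $h = \notinvest$, the leader strictly prefers $\invest$ exactly when $\theta > 0$, which is precisely the threshold $\thetazerostar = 0$ (the tie-break convention assigns $\notinvest$ at $\theta = 0$). Note that this step is independent of $\xnstar$, since the leader's continuation payoff after $\notinvest$ is identically zero.

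Next I would check the followers. Setting $z = \thetazerostar = 0$ and the opponents' threshold $x_h = \xistar = -\infty$ in Equation (\ref{follower_payoff}), the spillover term $\tfrac{n-1}{n}\Phi((x_h - \theta)/\sigma_F)$ vanishes and $\chi_\notinvest = 0$, leaving $\pi_F^\invest(x; 0, -\infty) = \EE_{\theta \sim \Psi^\invest(\cdot; x, 0)}[\theta]$. By Lemma \ref{lemma_truncated_expectation} this expectation is strictly increasing in $x$ with $\lim_{x \to -\infty}\EE_{\theta \sim \Psi^\invest(\cdot; x, 0)}[\theta] = 0$, so it is strictly positive for every finite $x$; hence $\invest$ is the unique best response for all follower types, consistent with $\xistar = -\infty$. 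The symmetric computation after $h = \notinvest$ uses $x_h = \xnstar = \infty$, which makes the spillover term equal $\tfrac{n-1}{n}$ and $\chi_\notinvest = 1$, giving $\pi_F^\notinvest(x; 0, \infty) = \EE_{\theta \sim \Psi^\notinvest(\cdot; x, 0)}[\theta] - 1$. Again by Lemma \ref{lemma_truncated_expectation}, $\lim_{x \to \infty}\EE_{\theta \sim \Psi^\notinvest(\cdot; x, 0)}[\theta] = 0$ with the expectation strictly increasing, so this payoff is strictly below $-1$ for all finite $x$; thus $\notinvest$ is optimal for every type, consistent with $\xnstar = \infty$.

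Assembling these three verifications shows that the profile is a symmetric perfect Bayesian equilibrium in monotone strategies with the claimed thresholds. I expect the only genuine care point to be the bookkeeping around the degenerate thresholds $\pm\infty$: one must confirm that the limiting spillover values $\Phi(-\infty) = 0$ and $\Phi(\infty) = 1$ are the correct substitutions, and that the limits supplied by Lemma \ref{lemma_truncated_expectation} are approached but never attained, so the relevant inequalities are strict for every finite type and no ``boundary'' follower is left indifferent. Everything else reduces to direct substitution into Equations (\ref{leader_payoff}) and (\ref{follower_payoff}) together with the monotonicity and limit statements already established.
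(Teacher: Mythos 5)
Your proposal is correct and follows essentially the same route as the paper's own proof: verify that with $z = \thetazerostar = 0$ the follower payoffs reduce to $\EE_{\theta \sim \Psi^\invest(\cdot;\,x,0)}[\theta] > 0$ and $\EE_{\theta \sim \Psi^\notinvest(\cdot;\,x,0)}[\theta] - 1 < 0$ via Lemma \ref{lemma_truncated_expectation}, so all followers imitate, and then note that the leader's payoff to effort is exactly $\theta$, making $\thetazerostar = 0$ her best response. Your extra bookkeeping on the degenerate thresholds and the independence of the leader's step from $\xnstar$ is sound but adds nothing beyond the paper's argument.
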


\begin{proof}
    Fix a follower type $x$. Suppose that the leader uses threshold $\thetazerostar = 0$ and other followers use thresholds $\xistar = -\infty$ and $\xnstar = \infty$. If the leader exerts effort, then type $x$'s payoff yields
\begin{equation*}
    \pi_F^\invest(x; \thetazerostar, \xistar) = \EE_{\theta \sim \Psi^\invest(\cdot; \, x, 0)} \left[\theta\right] > 0.
\end{equation*}
This means that all types $x$ will exert effort under history $h = \invest$. Thus, follower $j$'s best response is a monotone strategy with threshold $\xistar = -\infty$. In contrast, if the leader does not exert effort, then the payoff for type $x$ is
\begin{equation*}
    \pi_F^\notinvest(x; \thetazerostar, \xnstar) = \EE_{\theta \sim \Psi^\notinvest(\cdot; \,x, 0)} [\theta] - 1< 0.
\end{equation*}
Thus, under history, $h = \notinvest$, follower $j$ will best respond by using a monotone strategy with threshold $\xnstar = \infty$.

Consider now type $\theta$ of the leader. Since all followers will invest if they see the leader invests, investing generates a payoff of $\theta$ for type $\theta$. Therefore, type $\theta$ invests if and only if $\theta > 0$. In other words, the leader will best respond by choosing threshold $\thetazerostar = 0$. The proof is complete.
\end{proof}

\begin{proposition}
There is no monotone equilibrium other than the one given in Lemma \ref{lemma_monotone_b}.

\end{proposition}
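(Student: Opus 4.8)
The plan is to characterize an arbitrary monotone equilibrium through its consistency conditions and show they force $(\thetazerostar,\xistar,\xnstar)=(0,-\infty,\infty)$. In any monotone equilibrium the leader best-responds to the followers' threshold $\xistar$ under history $\invest$ — her payoff under $\notinvest$ is identically zero, so only the $\invest$-subgame constrains her — hence $\thetazerostar$ is the unique zero of $\pi_L(\cdot;\xistar)$, i.e. $\thetazerostar=\Phi\big((\xistar-\thetazerostar)/\sigma_F\big)$. Meanwhile followers hold correct beliefs $z=\thetazerostar$ about the leader's cutoff, so whenever $\xistar$ (respectively $\xnstar$) is finite it must satisfy the diagonal indifference $\pi_F^\invest(\xistar;\thetazerostar,\xistar)=0$ (respectively $\pi_F^\notinvest(\xnstar;\thetazerostar,\xnstar)=0$); otherwise the threshold sits at $\pm\infty$ and I verify the relevant corner directly. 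Monotonicity and single crossing of $\pi_F^h$ in $x$ (Lemma \ref{lemma_x_payoff}) guarantee that these indifference/corner conditions fully describe every monotone best response.

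The central step is to pin down the $\invest$-subgame, and I would first rule out a finite $\xistar$. Writing $y=(\xistar-\thetazerostar)/\sigma_F$, the leader's indifference gives $\thetazerostar=\Phi(y)$ and hence $\xistar=\Phi(y)+\sigma_F y$, while the follower's indifference, rewritten exactly as in Equation (\ref{eq_eqm_follower_rewrite}), reads $\xistar+\sigma_F\lambda(y)=\tfrac{n-1}{2n}\Phi(y)$. Substituting the former into the latter collapses the system to
\[
\sigma_F\big(y+\lambda(y)\big)=-\frac{n+1}{2n}\,\Phi(y).
\]
The left-hand side is strictly positive because $y+\lambda(y)>0$ for all $y$ (established in the proof of Lemma \ref{lemma_truncated_expectation}), whereas the right-hand side is strictly negative — a contradiction. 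I then rule out $\xistar=+\infty$: this would force $\thetazerostar=1$, but every finite follower type has $\pi_F^\invest(x;1,\infty)=\EE_{\theta\sim\Psi^\invest(\cdot;x,1)}[\theta]-\tfrac{n-1}{n}>1-\tfrac{n-1}{n}=\tfrac1n>0$, so all types strictly prefer $\invest$, contradicting $\xistar=+\infty$. The only surviving case is $\xistar=-\infty$, which forces $\thetazerostar=0$ and is consistent, since under $\invest$ every type invests (as checked in Lemma \ref{lemma_monotone_b}).

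With $\thetazerostar=0$ fixed, I would close the $\notinvest$-subgame. A finite $\xnstar$ would require $\pi_F^\notinvest(\xnstar;0,\xnstar)=0$, but the computation in the proof of Proposition \ref{prop_unique_rat} shows $\pi_F^\notinvest(x;0,x)<0$ for every $x$, so no finite diagonal zero exists. The corner $\xnstar=-\infty$ is impossible because $\pi_F^\notinvest(x;0,-\infty)\to-\infty$ as $x\to-\infty$ by Lemma \ref{lemma_x_payoff}, so low types strictly prefer $\notinvest$. Hence $\xnstar=+\infty$, which is consistent because $\pi_F^\notinvest(x;0,\infty)=\EE_{\theta\sim\Psi^\notinvest(\cdot;x,0)}[\theta]-1<0$ for all $x$. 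This exhausts every case and isolates $(0,-\infty,\infty)$ as the only monotone equilibrium.

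I expect the finite-$\xistar$ elimination to be the main obstacle and the real content of the argument: it is precisely the combination of the leader's and the follower's indifference conditions that, through the truncation-driven identity $y+\lambda(y)>0$ special to the perfectly-informed-leader model, produces the clean sign contradiction. The remaining corner checks are routine applications of the limit statements already recorded in Lemmas \ref{lemma_truncated_expectation} and \ref{lemma_x_payoff}.
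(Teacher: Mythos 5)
Your proof is correct and its core argument is exactly the paper's: you combine the leader's indifference condition $\thetazerostar = \Phi\big((\xistar-\thetazerostar)/\sigma_F\big)$ with the follower's diagonal indifference condition (the paper's Equations (B.1) and (B.3)) to obtain $\sigma_F\big(y+\lambda(y)\big) = -\frac{n+1}{2n}\Phi(y)$, and derive the same sign contradiction from $y + \lambda(y) > 0$. Your explicit treatment of the corner cases $\xistar = \pm\infty$ and of the $\notinvest$-subgame thresholds is a welcome bit of added rigor---the paper's proof handles only the finite-threshold case and leaves those cases implicit---but it does not change the substance of the argument.
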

\begin{proof}
By way of contradiction, suppose that $\thetazerostar$ and $\xistar$ are the equilibrium thresholds. Then they must solve the indifference conditions
\begin{equation} \label{app_b_leader_eq}
    \pi_L(\thetazerostar; \xistar) = \thetazerostar - \Phi\left(\frac{\xistar - \thetazerostar}{\sigma_F}\right) = 0 \tag{B.1}
\end{equation}
and
\begin{equation} \label{app_b_follower_eq}
    \pi_F^\invest(x_\invest^*; \theta_L^*, x_\invest^*) =  \EE_{\theta \sim \Psi^\invest(\cdot; \,\xistar, \thetazerostar)} \left[ \theta - \frac{n-1}{n}\Phi\left(\frac{\xistar - \thetazerostar}{\sigma_F}\right) \right] = 0. \tag{B.2}
\end{equation}
By Equations (\ref{rank_belief}) and (\ref{eq_eqm_follower}) we can write (\ref{app_b_follower_eq}) as
\begin{equation} \label{app_b_follower_eq_2}
    \xistar + \sigma_F \lambda\left(\frac{\xistar - \thetazerostar}{\sigma_F}\right) = \frac{n-1}{2n}\Phi\left(\frac{\xistar - \thetazerostar}{\sigma_F}\right). \tag{B.3}
\end{equation}
Subtracting Equation (\ref{app_b_leader_eq}) from Equation (\ref{app_b_follower_eq_2}) yields
\begin{equation} \label{app_eq_diff}
    \xistar - \thetazerostar + \sigma_F \lambda\left( \frac{\xistar - \thetazerostar}{\sigma_F} \right) = - \frac{n+1}{2n} \Phi\left(\frac{\xistar - \thetazerostar}{\sigma_F}\right). \tag{B.4}
\end{equation}
Note that $x + \lambda(x)$ is increasing in $x$ with $\lim_{x \to -\infty} x + \lambda(x) = 0$, and hence $x + \lambda(x) > 0$ for all $x$. This implies that the left-hand side of (\ref{app_eq_diff}) is positive. But since the right-hand side of (\ref{app_eq_diff}) is negative, this leads to a contradiction.
\end{proof}

\newpage
\section*{Appendix C: Log-concave Noises}

In this appendix, we extend the main model in Section 3 by considering noises that have log-concave densities. Formally, $(\varepsilon_i)_{i \in N}$ are independently drawn from distribution $F$, which has a positive continuous density $f$ on the entire real line. We assume, in addition, that $f$ is strictly log-concave and symmetric about zero. Common distributions, such as Gaussian, Laplace, and logistic distributions with mean zero, satisfy these assumptions.

Suppose that the leader uses a monotone strategy with threshold $z \in \RR$. Then a follower with type $x$ has the following posterior density about $\theta$:
\begin{equation*}
    \psi^h(\theta; x, z) = \begin{cases}
    \frac{\frac{1}{\sigma_F}f\left(\frac{x - \theta}{\sigma_F}\right)}{F\left(\frac{x - z}{\sigma_F}\right)}\one(\theta > z) & \text{if $h = \invest$} \\
    ~ & ~ \\
    \frac{\frac{1}{\sigma_F}f\left(\frac{x - \theta}{\sigma_F}\right)}{1 - F\left(\frac{x - z}{\sigma_F}\right)}\one(\theta \leq z) & \text{if $h = \notinvest$}
    \end{cases}.
\end{equation*}
Let $\Psi^h(\cdot; \,x, z)$ be the corresponding CDF.

\begin{lemma} \label{app_c_fosd}
$\Psi^h(\cdot; \,x, z)$ is strictly increasing in $x$ and $z$ in the sense of strict first-order stochastic dominance.
\end{lemma}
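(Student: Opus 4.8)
The plan is to mirror Part~1 of the proof of Lemma~\ref{lemma_x_payoff}, separating the two comparative statics. Monotonicity in $x$ is handled through the monotone likelihood ratio (MLR) order, which is the natural vehicle when the truncation point is held fixed; monotonicity in $z$ is handled directly through a closed form for the truncated CDF, since varying $z$ moves the support and so cannot be captured by a likelihood-ratio comparison alone.

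For the dependence on $x$, I would fix $z$ and a history $h$, and take $x' > x$. On the common support (which is $\{\theta > z\}$ for $h = \invest$ and $\{\theta \le z\}$ for $h = \notinvest$) the likelihood ratio is
\begin{equation*}
    \frac{\psi^h(\theta; x', z)}{\psi^h(\theta; x, z)} = \frac{f\!\left(\frac{x' - \theta}{\sigma_F}\right)}{f\!\left(\frac{x - \theta}{\sigma_F}\right)} \cdot \kappa,
\end{equation*}
where $\kappa > 0$ is a normalizing constant independent of $\theta$. Writing $\ell = \log f$, the logarithm of the first factor has $\theta$-derivative $\sigma_F^{-1}\left[\ell'\!\left(\frac{x-\theta}{\sigma_F}\right) - \ell'\!\left(\frac{x'-\theta}{\sigma_F}\right)\right]$. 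Because $f$ is strictly log-concave, $\ell' = f'/f$ is strictly decreasing, and since $\frac{x'-\theta}{\sigma_F} > \frac{x-\theta}{\sigma_F}$ this bracket is strictly positive. Hence the likelihood ratio is strictly increasing in $\theta$, i.e.\ $\Psi^h(\cdot; x', z)$ strictly dominates $\Psi^h(\cdot; x, z)$ in the MLR order; strict MLR implies strict first-order stochastic dominance, so $\Psi^h(\theta; x', z) < \Psi^h(\theta; x, z)$ at every interior point of the support.

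For the dependence on $z$, I would integrate the truncated density explicitly. A change of variable $s = (x - t)/\sigma_F$ gives, for $h = \invest$ and $\theta > z$,
\begin{equation*}
    \Psi^\invest(\theta; x, z) = 1 - \frac{F\!\left(\frac{x - \theta}{\sigma_F}\right)}{F\!\left(\frac{x - z}{\sigma_F}\right)},
\end{equation*}
and, for $h = \notinvest$ and $\theta \le z$,
\begin{equation*}
    \Psi^\notinvest(\theta; x, z) = \frac{1 - F\!\left(\frac{x - \theta}{\sigma_F}\right)}{1 - F\!\left(\frac{x - z}{\sigma_F}\right)}.
\end{equation*}
As $z$ increases, $(x-z)/\sigma_F$ decreases, so $F\!\left(\frac{x-z}{\sigma_F}\right)$ strictly decreases (positive density) while $1 - F\!\left(\frac{x-z}{\sigma_F}\right)$ strictly increases; in either display this strictly lowers $\Psi^h(\theta; x, z)$ for fixed $\theta$, giving strict dominance on the overlap of supports. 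On the interval between the old and new truncation points one argues directly: there the CDF at the larger $z$ equals $0$ (resp.\ is $< 1$) while at the smaller $z$ it is $> 0$ (resp.\ equals $1$), so the strict ordering persists across the whole line.

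I expect the only delicate point to be the bookkeeping of strictness, not any deep estimate: strict log-concavity is invoked solely to make $f'/f$ \emph{strictly} decreasing in the $x$-comparison, whereas the $z$-comparison needs nothing beyond $F$ being a strictly increasing CDF and must be patched separately on the region where the two supports differ. Symmetry of $f$ plays no role in this particular lemma.
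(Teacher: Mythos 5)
Your proof is correct and follows essentially the same route as the paper's: monotonicity in $x$ is established via the monotone likelihood ratio order implied by log-concavity of $f$ (the paper invokes the equivalence with P\'{o}lya frequency of order 2 where you differentiate $\log f$ directly, which is fine since the noise density is assumed continuously differentiable), and monotonicity in $z$ via the closed form $\Psi^\invest(\theta; x, z) = 1 - F\left((x-\theta)/\sigma_F\right)/F\left((x-z)/\sigma_F\right)$ and its $\notinvest$ analog. If anything, you are slightly more careful than the paper on the strictness bookkeeping (the paper's MLR step only records a weakly increasing likelihood ratio before asserting the strict conclusion) and on explicitly patching the interval between the two truncation points, but these are refinements of the same argument rather than a different approach.
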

\begin{proof}
We only prove the case $h = \invest$, for the proof of the other case, is similar.
Fix $z$ and $\theta > z$, and let $x < x'$. Note that
\begin{equation*}
    \frac{ \psi^\invest (\theta; \,x', z) }{ \psi^\invest (\theta; \,x', z) } = \frac{ F\left(\frac{x - z}{\sigma_F}\right) }{ F\left(\frac{x' - z}{\sigma_F}\right) } \cdot \frac{ f\left(\frac{x' - \theta}{\sigma_F}\right) }{ f\left(\frac{x - \theta}{\sigma_F}\right) }.
\end{equation*}
Since $f$ is logconcave, $f((x'-\theta)/\sigma_F)/f((x-\theta)/\sigma_F)$ is weakly increasing in $\theta$; that is, $\Psi^\invest(\cdot; \,x', z)$ dominates $\Psi^\invest(\cdot; \,x, z)$ in the monotone likelihood ratio order. This follows from the equivalence between log concavity and P\'{o}lya Frequency of order 2 (See, for example, Proposition 1 in \cite{an_1998} or Proposition 2.3 in \cite{saumard_wellner_2014}). 
Thus, $\Psi^\invest(\cdot; \,x', z)$ first-order stochastically dominates $\Psi^\invest(\cdot; \,x, z)$.

The claim that $\Psi^\invest(\cdot; \,x, z)$ is strictly increasing in the first-order stochastic dominance sense is because
\begin{equation*}
    \Psi^\invest(\theta; \,x, z) = \frac{1}{ F\left(\frac{x-z}{\sigma_F}\right)  }\int_z^\theta \frac{1}{\sigma_F} f\left(\frac{x-t}{\sigma_F}\right)  \dd t = 1 - \frac{ F\left(\frac{x-\theta}{\sigma_F}\right) }{ F\left(\frac{x-z}{\sigma_F}\right) }
\end{equation*}
is strictly increasing in $z$.
\end{proof}

Let $\eta = \lim_{x \to -\infty} F(x)/f(x)$. It is worth noting that $\eta$ is the scale parameter if the noises follow a Laplace or logistic distribution.
\begin{lemma} \label{app_c_exp}
The posterior expectations $\EE_{\theta \sim \Psi^h(\cdot; \,x, z)}[\theta]$ are strictly increasing in $x$ and $z$. Moreover,
\begin{equation*}
    \lim_{x \to -\infty} \EE_{\theta \sim \Psi^h(\cdot; \,x, z)}[\theta] = \begin{cases}
    \sigma_F \eta + z & \text{if $h = \invest$} \\
    - \infty & \text{if $h = \notinvest$}
    \end{cases},
\end{equation*}
and 
\begin{equation*}
    \lim_{x \to \infty} \EE_{\theta \sim \Psi^h(\cdot ; \,x, z)}[\theta] = \begin{cases}
    \infty & \text{if $h = \invest$} \\
    -\sigma_F \eta + z & \text{if $h = \notinvest$}
    \end{cases}.
\end{equation*}
\end{lemma}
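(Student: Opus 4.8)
The plan is to get monotonicity for free from the stochastic-dominance result already in hand, and to reduce each of the four limits to a single application of L'Hôpital's rule after a convenient change of variables.

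\medskip

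\noindent\emph{Monotonicity.} Lemma~\ref{app_c_fosd} says that $\Psi^h(\cdot;x,z)$ is strictly increasing in $x$ and $z$ in the sense of strict first-order stochastic dominance. Since $\theta\mapsto\theta$ is strictly increasing, the expectation $\EE_{\theta\sim\Psi^h(\cdot;x,z)}[\theta]$ inherits strict monotonicity in $x$ and $z$. No further work is needed for that claim, so I can devote the rest of the argument to the limits.

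\medskip

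\noindent\emph{Reduction to tail ratios.} Writing $w=(x-z)/\sigma_F$ and substituting $u=(x-\theta)/\sigma_F$ in the defining integrals, I obtain
\[
\EE_{\theta\sim\Psi^\invest(\cdot;x,z)}[\theta]=x-\sigma_F\,\frac{m(w)}{F(w)}=z+\sigma_F\Big(w-\frac{m(w)}{F(w)}\Big),\qquad m(w)=\int_{-\infty}^{w}u\,f(u)\,\dd u,
\]
and, analogously,
\[
\EE_{\theta\sim\Psi^\notinvest(\cdot;x,z)}[\theta]=z+\sigma_F\Big(w-\frac{M(w)}{1-F(w)}\Big),\qquad M(w)=\int_{w}^{\infty}u\,f(u)\,\dd u.
\]
Here I use that $f$ has a finite first moment, which holds because a strictly log-concave density has at least exponentially decaying tails; this also gives $m(w)\to0$ as $w\to-\infty$ and $M(w)\to0$ as $w\to\infty$. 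The two ``easy'' limits are then immediate: under $h=\invest$ with $x\to\infty$ ($w\to\infty$) we have $F(w)\to1$ and $m(w)\to0$, so $\EE[\theta]\sim x\to\infty$; under $h=\notinvest$ with $x\to-\infty$ we have $1-F(w)\to1$ and $M(w)\to0$, so $\EE[\theta]\sim x\to-\infty$.

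\medskip

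\noindent\emph{The two nontrivial limits.} For $h=\invest$ as $x\to-\infty$ I write $w-\frac{m(w)}{F(w)}=\frac{wF(w)-m(w)}{F(w)}=\frac{\int_{-\infty}^{w}(w-u)f(u)\,\dd u}{F(w)}$, which is of the indeterminate form $0/0$ (both $wF(w)\to0$ and $m(w)\to0$). Differentiating in $w$ and using $m'(w)=wf(w)$ collapses the derivative of the numerator to $F(w)$, so L'Hôpital gives $\lim_{w\to-\infty}\big(w-\tfrac{m(w)}{F(w)}\big)=\lim_{w\to-\infty}\tfrac{F(w)}{f(w)}=\eta$, whence the limit is $z+\sigma_F\eta$. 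For $h=\notinvest$ as $x\to\infty$ the same manipulation gives $w-\frac{M(w)}{1-F(w)}=\frac{\int_{w}^{\infty}(w-u)f(u)\,\dd u}{1-F(w)}$, again $0/0$; L'Hôpital with $M'(w)=-wf(w)$ reduces it to $-\lim_{w\to\infty}\frac{1-F(w)}{f(w)}$, and symmetry of $f$ about zero rewrites $\frac{1-F(w)}{f(w)}=\frac{F(-w)}{f(-w)}\to\eta$, giving the limit $z-\sigma_F\eta$.

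\medskip

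\noindent The main obstacle is the bookkeeping around these L'Hôpital steps: verifying that the expressions are genuinely $0/0$ (which rests on the finite first moment and the faster-than-polynomial tail decay of log-concave densities, so that $wF(w)\to0$ and $w(1-F(w))\to0$), confirming that $\eta=\lim_{x\to-\infty}F(x)/f(x)$ exists and is finite (log-concavity makes the reversed hazard rate $f/F$ monotone, so $F/f$ converges to a finite nonnegative limit), and correctly invoking symmetry to transfer the left-tail computation to the right tail in the $h=\notinvest$ case. Everything else is routine.
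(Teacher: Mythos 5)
Your proof is correct and takes essentially the same route as the paper's: the identical change of variables reduces each expectation to $z+\sigma_F\bigl(w-m(w)/F(w)\bigr)$ (resp.\ the mirror expression), and your direct L'H\^{o}pital step on $\bigl(wF(w)-m(w)\bigr)/F(w)$ is exactly the paper's integration-by-parts identity $\delta(u)=\int_{-\infty}^{u}F(t)\,\dd t\big/F(u)$ followed by L'H\^{o}pital, with the same light-tail facts ($wF(w)\to 0$, $w(1-F(w))\to 0$) and the same symmetry argument transferring the left tail to the right. Your explicit check that $\eta$ exists and is finite, via monotonicity of the reversed hazard rate under log-concavity, is a small addition the paper leaves implicit in its definition of $\eta$.
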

\begin{proof}
(Part 1) The first part of this lemma is given by Lemma \ref{app_c_fosd}.

\noindent (Part 2) Consider first history $h = \invest$.
By a change of variable, we have
\begin{equation*}
    \EE_{\theta \sim \Psi^\invest(\cdot; \,x, z)}[\theta]  = \frac{1 }{F\left(\frac{x - z}{\sigma_F}\right)} \int_z^\infty  \frac{t}{\sigma_F} f\left(\frac{x - t}{\sigma_F}\right) \dd t = \sigma_F \delta\left( \frac{x - z}{\sigma_F} \right) + z.
\end{equation*}
where
\begin{equation*}
    \delta(u) = u - \frac{ \int_{-\infty}^u t f(t) \dd t }{ F(u) }.
\end{equation*}
It is clear that $\lim_{u \to \infty} \delta(u) = \infty$, and hence
\begin{equation*}
    \lim_{x \to \infty} \EE_{\theta \sim \Psi^h(\cdot ; \,x, z)}[\theta] = \infty.
\end{equation*}
Since $f$ is log-concave, it has a light right tail; that is,
\begin{equation*}
    \lim_{x \to \infty} \frac{ f(x) }{ \ee^{-cx} } = 0
\end{equation*}
for some $c > 0$.\footnote{~See, for example, Corallary 1 in \cite{an_1998}.} Thus, by symmetry (about zero),
\begin{equation*}
    \lim_{x \to -\infty} x F(x) = \lim_{x \to -\infty} \frac{ F(x) }{ x^{-1} } = \lim_{x \to -\infty} \frac{ f(x) }{ -x^{-2} } = \lim_{x \to -\infty} \frac{ f(x) }{ \ee^{cx} } \cdot \frac{ \ee^{cx} }{ -x^{-2} } = 0.
\end{equation*} 
Integrating by parts now gives 
\begin{equation*}
    \delta(u) = \frac{ \int_{-\infty}^u F(t) \dd t }{ F(u) }.
\end{equation*}
It follows that $\lim_{u \to -\infty} \delta(u) = \eta$. Thus,
\begin{equation*}
    \lim_{x \to -\infty} \EE_{\theta \sim \Psi^h(\cdot ; \,x, z)}[\theta] = \sigma_F \eta + z.
\end{equation*}

The proof for history $h = \notinvest$ is similar.
We have
\begin{equation*}
    \EE_{\theta \sim \Psi^\notinvest(\cdot; \,x, z)}[\theta] = \frac{1}{ 1 - F\left(\frac{x - z}{\sigma_F}\right) } \int_{-\infty}^z \frac{t}{\sigma_F} f\left( \frac{x - t}{\sigma_F} \right) \dd t = -\sigma_F \varsigma \left(\frac{x - z}{\sigma_F} \right) + z,
\end{equation*}
where
\begin{equation*}
    \varsigma(u) = \frac{ \int_u^\infty t f(t) \dd t }{ 1 - F(u) } - u.
\end{equation*}
Thus, $\lim_{u \to -\infty} \varsigma(u) = \infty$, and
\begin{equation*}
    \lim_{x \to -\infty} \EE_{\theta \sim \Psi^\notinvest(\cdot; \,x, z)}[\theta] = -\infty.
\end{equation*}
Since $\lim_{u \to \infty} u \left( 1 - F(u) \right) = 0$ (because $f$ is light-tailed),
\begin{equation*}
    \varsigma(u) = \frac{\int_u^\infty [1 - F(t)] \dd t}{1 - F(u)}
\end{equation*}
by integration by parts. Thus,
\begin{equation*}
    \lim_{u \to \infty} \varsigma(u) = \lim_{u \to \infty} \frac{1 - F(u)}{f(u)} = \lim_{u \to \infty} \frac{ F(-u) }{ f(-u) } = \eta,
\end{equation*}
which implies that
\begin{equation*}
    \lim_{x \to \infty} \EE_{\theta \sim \Psi^\notinvest(\cdot; \,x, z)}[\theta] = -\sigma_F \eta + z.
\end{equation*}
The proof is complete.
\end{proof}

Suppose, in addition, that a follower with type $x$ believes that other followers use monotone strategies with threshold $x_h$ under history $h$, then his payoff under history $h$
\begin{equation*}
    \pi_F^h(x; z, x_h) = \EE_{\theta \sim \Psi^h(\cdot; \,x, z)} \left[ \theta - \frac{n-1}{n}F\left( \frac{x_h - \theta}{\sigma_F} \right) \right] - \frac{\chi_\notinvest}{n}
\end{equation*}
has the following properties:

\begin{lemma} \label{app_c_fol_payoff}
Type $x$'s payoffs $\pi_F^h(x; z, x_h)$ are strictly increasing in $x$ and $z$ but is strictly decreasing in $x_h$. Moreover,
\begin{equation*}
   \lim_{x \to -\infty} \pi_F^h(x; z, x_h) = \begin{cases}
        \sigma_F \eta + z - \frac{n-1}{n}F\left( \frac{x_\invest - z}{\sigma_F} \right) & \text{if $h = \invest$} \\
        -\infty & \text{if $h = \notinvest$}
    \end{cases}
\end{equation*}
and
\begin{equation*}
    \lim_{x \to \infty} \pi_F^h(x; z, x_h) = \begin{cases}
        \infty & \text{if $h = \invest$} \\
        -\sigma_F \eta + z - \frac{1}{n} - \frac{n-1}{n}F\left( \frac{x_\notinvest - z}{\sigma_F} \right) & \text{if $h = \notinvest$}
    \end{cases}
\end{equation*}
\end{lemma}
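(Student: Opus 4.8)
The plan is to reduce everything to the two facts already available in the log-concave setting, namely the first-order stochastic dominance comparison of Lemma~\ref{app_c_fosd} and the expectation computation of Lemma~\ref{app_c_exp}, together with the elementary fact that $F$ takes values in $[0,1]$. Write the payoff as $\pi_F^h(x;z,x_h)=\EE_{\theta\sim\Psi^h(\cdot;x,z)}[g(\theta)]-\chi_\notinvest/n$ with $g(\theta)=\theta-\frac{n-1}{n}F\!\left((x_h-\theta)/\sigma_F\right)$. The first observation is that $g$ is strictly increasing, since $g'(\theta)=1+\frac{n-1}{n\sigma_F}f\!\left((x_h-\theta)/\sigma_F\right)>0$ because $f>0$. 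Since Lemma~\ref{app_c_fosd} states that $\Psi^h(\cdot;x,z)$ is strictly increasing in $x$ and $z$ in the first-order stochastic dominance order, and the expectation of a strictly increasing (integrable) function strictly increases under strict first-order stochastic dominance, strict monotonicity of $\pi_F^h$ in $x$ and $z$ follows. Monotonicity in $x_h$ is even more direct: the distribution $\Psi^h$ does not depend on $x_h$, and for each fixed $\theta$ the term $F\!\left((x_h-\theta)/\sigma_F\right)$ is strictly increasing in $x_h$, so $\pi_F^h$ is strictly decreasing in $x_h$.

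For the two infinite limits I would simply combine Lemma~\ref{app_c_exp} with the boundedness of $F$. Under $h=\invest$ we have $\EE_{\theta\sim\Psi^\invest}[\theta]\to\infty$ as $x\to\infty$, while $\frac{n-1}{n}\EE[F((x_\invest-\theta)/\sigma_F)]$ stays in $[0,\tfrac{n-1}{n}]$; hence $\pi_F^\invest\to\infty$. Symmetrically, under $h=\notinvest$ the expectation of $\theta$ diverges to $-\infty$ as $x\to-\infty$ while the $F$-term remains bounded, so $\pi_F^\notinvest\to-\infty$. Neither case requires new computation.

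The substantive part is the two finite limits, e.g.\ the behavior of $\pi_F^\invest$ as $x\to-\infty$. By Lemma~\ref{app_c_exp} the term $\EE[\theta]$ tends to $\sigma_F\eta+z$, so it remains to establish that $\EE_{\theta\sim\Psi^\invest(\cdot;x,z)}[F((x_\invest-\theta)/\sigma_F)]\to F((x_\invest-z)/\sigma_F)$. I would mirror the Gaussian argument used in the proof of Lemma~\ref{lemma_x_payoff}: write the truncated expectation as
\[
\frac{1}{F\!\left(\frac{x-z}{\sigma_F}\right)}\int_z^\infty F\!\left(\frac{x_\invest-t}{\sigma_F}\right)\frac{1}{\sigma_F}f\!\left(\frac{x-t}{\sigma_F}\right)\dd t,
\]
perform the substitution $u=(x-t)/\sigma_F$ so that the upper limit becomes $(x-z)/\sigma_F\to-\infty$ and the argument of $F$ becomes $(x_\invest-z)/\sigma_F-\big((x-z)/\sigma_F-u\big)$, and then resolve the resulting $0/0$ indeterminacy with L'H\^opital's rule in $x$, differentiating numerator and denominator. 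The companion case $h=\notinvest$, $x\to\infty$, is handled by the same change of variable on $(-\infty,z)$ and gives the claimed value with $-\sigma_F\eta+z$.

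The main obstacle is exactly this finite-limit computation. In the Gaussian case the explicit identities for $\phi$ and $\Phi$ make the boundary term collapse cleanly to $F((x_\invest-z)/\sigma_F)$; for a general symmetric log-concave $f$ one must instead control the ratio of the boundary contribution to the normalizing factor $F((x-z)/\sigma_F)$ using only the light-tail property of log-concave densities (the same estimate invoked in Lemma~\ref{app_c_exp}), and verify that the remaining integral term is negligible relative to that factor. This is the step where the specific tail structure of the noise distribution enters, and it is where I would concentrate the rigorous work; the monotonicity claims and the two divergent limits are, by contrast, immediate consequences of the earlier lemmas.
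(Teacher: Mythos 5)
Your monotonicity argument and the two divergent limits coincide with the paper's proof: the integrand $\theta - \frac{n-1}{n}F\left((x_h - \theta)/\sigma_F\right)$ is strictly increasing in $\theta$ and strictly decreasing in $x_h$, so Lemma \ref{app_c_fosd} delivers monotonicity in $x$ and $z$, while Lemma \ref{app_c_exp} plus boundedness of $F$ gives $\pi_F^\invest \to \infty$ as $x \to \infty$ and $\pi_F^\notinvest \to -\infty$ as $x \to -\infty$. Your plan for the finite limits---change of variable, then L'H\^{o}pital---is also exactly the paper's route; the paper performs the substitution $u = (x-\theta)/\sigma_F$ and then simply asserts the limit $F\left((x_\invest - z)/\sigma_F\right)$ ``by L'H\^{o}pital's rule.''

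The gap is precisely the step you flagged and postponed, and your instinct that the tail structure enters there is correct in a stronger sense than you suggest: the verification does not merely require work, it \emph{fails} for general symmetric log-concave noise. Writing $y = (x-z)/\sigma_F \to -\infty$ and $c = (x_\invest - z)/\sigma_F$, differentiating numerator and denominator gives the L'H\^{o}pital quotient
\begin{equation*}
F(c) - \frac{1}{f(y)}\int_{-\infty}^{y} f(u)\, f(c - y + u)\, \dd u
= F(c) - \int_{-\infty}^{0} \frac{f(y+v)}{f(y)}\, f(c + v)\, \dd v,
\end{equation*}
where the subtracted term comes from the $x$-dependence inside the integrand, not just at the boundary. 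Since $(\log f)'$ is decreasing with limit $L = 1/\eta$ at $-\infty$ (L'H\^{o}pital applied to $F/f$ shows $\eta = 1/L$), two regimes emerge. If $\eta = 0$ (super-exponential tails, e.g.\ Gaussian), then $f(y+v)/f(y) \to 0$ for each $v < 0$ and, with the dominating function $e^{mv}f(c+v)$ for $m = (\log f)'(y_0) > 0$, the correction vanishes---there your ``negligible remainder'' program succeeds and recovers the Gaussian computation of Lemma \ref{lemma_x_payoff}. But if $\eta > 0$---exactly the Laplace and logistic cases the appendix highlights when interpreting $\eta$ as the scale parameter---then $f(y+v)/f(y) \to e^{v/\eta}$ pointwise and dominated convergence yields the strictly positive limit $\int_{-\infty}^0 e^{v/\eta} f(c+v)\, \dd v$ for the correction. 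Intuitively, when $\eta > 0$ the truncated posterior $\Psi^\invest(\cdot;\,x,z)$ does not collapse to a point mass at $z$ as $x \to -\infty$; it converges to an exponential-type law on $(z,\infty)$ with mean $z + \sigma_F \eta$, under which $\EE\left[F\left((x_\invest - \theta)/\sigma_F\right)\right]$ converges to a value strictly below $F(c)$ (for unit-scale Laplace, to $\int_0^\infty F(c-s)e^{-s}\dd s$, which matches the corrected L'H\^{o}pital limit after integration by parts).

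So to complete the proof you must either restrict to $\eta = 0$, or replace the stated finite limits by the corrected ones, e.g.
\begin{equation*}
\lim_{x \to -\infty}\pi_F^\invest(x; z, x_\invest) = \sigma_F \eta + z - \frac{n-1}{n}\left[F\!\left(\frac{x_\invest - z}{\sigma_F}\right) - \int_{-\infty}^0 e^{v/\eta} f\!\left(\frac{x_\invest - z}{\sigma_F} + v\right) \dd v\right],
\end{equation*}
and symmetrically under $h = \notinvest$. Note that the uses of the lemma in Proposition \ref{app_c_rat_seq} only invoke the cases $x_h = \pm\infty$, where $F \equiv 1$ or $F \equiv 0$ is constant and the correction term is absent, so the downstream dominance argument survives; but the lemma as you (and the paper) state it is not correct for the noise distributions with $\eta > 0$, and your proposal, by deferring precisely this computation, leaves the substantive claim unestablished.
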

\begin{proof}
(Part 1) Note that $\theta - ((n-1)/n)F((x_h - \theta)/\sigma_F)$ is strictly increasing in $\theta$; therefore Lemma \ref{app_c_fosd} implies that $\pi_F^h(x; z, x_h)$ is strictly increasing in $x$ and $z$. But since $\theta - ((n-1)/n)F((x_h - \theta)/\sigma_F)$ is strictly decreasing in $x_h$, so is $\pi_F^h(x; z, x_h)$.

(Part 2)
It follows immediately from Lemma \ref{app_c_exp} that $\pi_F^\invest(x; z, x_\invest) \to \infty$ as $x \to \infty$
and $\pi_F^\notinvest(x; z, x_\notinvest) \to -\infty$ as $x \to -\infty$
because $F((x_h - \theta)/\sigma_F)$ is bounded.

Now under $h = \invest$, a change of variable $u = (x - \theta)/\sigma_F$ gives that
\begin{align*}
    \EE_{\theta \sim \Psi^\invest(\cdot;\,x, z)}\left[ F\left(\frac{x_\invest - \theta}{\sigma_F}\right)\right] & = \frac{1}{F\left(\frac{x - z}{\sigma_F}\right)} \int_z^\infty F\left(\frac{x_\invest - \theta}{\sigma_F}\right) \frac{1}{\sigma_F}f\left(\frac{x - \theta}{\sigma_F}\right) \dd \theta \\
    & = \frac{1}{F\left(\frac{x - z}{\sigma_F}\right)} \int_{-\infty}^\frac{x-z}{\sigma_F} f(u) F\left(\frac{x_\invest - x}{\sigma_F} + u \right) \dd u.
\end{align*}
By L'H\^{o}pital's rule, the above expectation has the following limit:
\begin{equation*}
    \lim_{x \to -\infty} \EE_{\theta \sim \Psi^\invest(\cdot;\,x, z)}\left[ F\left(\frac{x_\invest - \theta}{\sigma_F}\right)\right] = F \left( \frac{x_\invest - z}{\sigma_F} \right).
\end{equation*}
Thus, by Lemma \ref{app_c_exp}, we can conclude that
\begin{equation*}
    \lim_{x \to -\infty} \pi_F^\invest(x; z, x_\invest) = \sigma_F \eta + z - \frac{n-1}{n} F \left( \frac{x_\invest - z}{\sigma_F} \right).
\end{equation*}
Similarly, under $h = \notinvest$,
\begin{align*}
    \lim_{x \to \infty} \EE_{\theta \sim \Psi^\notinvest(\cdot;\,x, z)}\left[ F\left(\frac{x_\notinvest - \theta}{\sigma_F}\right)\right] & = \lim_{x \to \infty} \frac{1}{F\left( \frac{z-x}{\sigma_F} \right)} \int_{-\infty}^z F \left( \frac{x_\notinvest - \theta}{\sigma_F} \right) \frac{1}{\sigma_F} f\left( \frac{x-\theta}{\sigma_F} \right) \dd \theta \\
    & = \lim_{x \to \infty} \frac{1}{F\left( \frac{z-x}{\sigma_F} \right)} \int_{\frac{x-z}{\sigma_F}}^\infty f(u) F \left( \frac{x_\notinvest - x}{\sigma_F} + u \right) \dd u \\
    & = F \left( \frac{x_\notinvest - z}{\sigma_F} \right).
\end{align*}
Thus, 
\begin{equation*}
    \lim_{x \to \infty} \pi_F^\notinvest (x; z, x_\notinvest) = -\sigma_F \eta + z - \frac{1}{n} - \frac{n-1}{n} F \left( \frac{x_\notinvest - z}{\sigma_F} \right),
\end{equation*}
as desired.
\end{proof}

Let $(\thetalow^k, \thetaup^k, \xiup^k, \xilow^k, \xnup^k, \xnlow^k)_{k=0}^\infty$ be the six $\Delta$-rationalizable sequences, where $\thetalow^0 = \xilow^0 = \xnlow^0 = -\infty$ and $\thetaup^0 = \xiup^0 = \xnup^0 = \infty$.
Let $\thetalow$, $\thetaup$, $\xilow$, $\xiup$, $\xnlow$, $\xnup$ be their limits, respectively, as $k \to \infty$. If one sequence diverges, then its limit is either $-\infty$ or $\infty$. 

\begin{proposition} \label{app_c_rat_seq}
    If $\sigma_F \geq (n\eta)^{-1}(n-1)$, then the above sequences have the following properties: \\
    (a) $(\thetalow^k)_{k=0}^\infty$ is such that $\thetalow^k = \thetalow = 0$ for all $k \geq 1$; \\
    (b) $(\thetaup^k)_{k=0}^\infty$ is decreasing and  such that $\thetaup^k = \thetaup = 0$ for all $k \geq 2$; \\
    (c) $(\xilow^k)_{k=0}^\infty$ is such that $\xilow^k = \xilow = -\infty$ for all $k \geq 0$; \\
    (d) $(\xiup^k)_{k=0}^\infty$ is decreasing and such that $\xiup^k = \xiup -\infty$ for all $k \geq 1$; \\
    (e) $(\xnlow^k)_{k=0}^\infty$ is increasing and such that $\xnlow^k = \xnlow = \infty$ for all $k \geq 1$; \\
    (f) $(\xnup^k)_{k=0}^\infty$ is such that $\xnup^k = \xnup = \infty$ for all $k \geq 0$.
\end{proposition}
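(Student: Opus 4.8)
The plan is to mirror the proof of Lemma~\ref{lemma_rat_seq}, replacing its Gaussian-specific steps with the monotonicity and limit facts in Lemmas~\ref{app_c_fosd}--\ref{app_c_fol_payoff}. First I would fix the recursions. Since the leader still observes $\theta$ perfectly, her payoff to investing against a follower threshold $x_\invest$ is $\pi_L(\theta;x_\invest)=\theta-F\!\left((x_\invest-\theta)/\sigma_F\right)$, strictly increasing in $\theta$ and strictly decreasing in $x_\invest$; hence $\br_L(x_\invest)$, the unique zero in $\theta$, is well defined, increasing in $x_\invest$, with $\br_L(-\infty)=0$ and $\br_L(\infty)=1$. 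Defining $\br_F^h(z,x_h)$ as the unique zero of $\pi_F^h(\cdot;z,x_h)$ when it exists and as $-\infty$ (for $h=\invest$) or $\infty$ (for $h=\notinvest$) otherwise, I set up the six coupled recursions $\thetalow^k=\br_L(\xilow^{k-1})$, $\thetaup^k=\br_L(\xiup^{k-1})$, $\xilow^k=\br_F^\invest(\thetaup^k,\xilow^{k-1})$, $\xiup^k=\br_F^\invest(\thetalow^k,\xiup^{k-1})$, $\xnlow^k=\br_F^\notinvest(\thetaup^k,\xnlow^{k-1})$, $\xnup^k=\br_F^\notinvest(\thetalow^k,\xnup^{k-1})$, with the stated initial conditions, and prove (a)--(f) by a joint induction.

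For the base case at Round~1, $\xilow^0=-\infty$ and $\xiup^0=\infty$ give $\thetalow^1=\br_L(-\infty)=0$ and $\thetaup^1=\br_L(\infty)=1$. Parts (c) and (f) then follow for every $k$ and do not use the hypothesis: under $h=\invest$ the posterior is supported on $\theta>\thetaup^k\ge0$, so $\pi_F^\invest(x;\thetaup^k,-\infty)=\EE_{\theta\sim\Psi^\invest(\cdot;x,\thetaup^k)}[\theta]>0$ for all $x$, whence $\xilow^k=-\infty$; symmetrically $\pi_F^\notinvest(x;0,\infty)=\EE_{\theta\sim\Psi^\notinvest(\cdot;x,0)}[\theta]-1<0$ for all $x$, whence $\xnup^k=\infty$.

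The key step, and the only place $\sigma_F\ge (n-1)/(n\eta)$ is used, is to show $\xiup^1=-\infty$ and $\xnlow^1=\infty$. For $\xiup^1=\br_F^\invest(\thetalow^1,\xiup^0)=\br_F^\invest(0,\infty)$, Lemma~\ref{app_c_fol_payoff} gives
\[
\lim_{x\to-\infty}\pi_F^\invest(x;0,\infty)=\sigma_F\eta-\frac{n-1}{n}\ge 0
\]
under the hypothesis; since $\pi_F^\invest(\cdot;0,\infty)$ is strictly increasing in $x$, it is strictly positive at every finite $x$, so the equation has no solution and $\xiup^1=-\infty$. Symmetrically, for $\xnlow^1=\br_F^\notinvest(\thetaup^1,\xnlow^0)=\br_F^\notinvest(1,-\infty)$,
\[
\lim_{x\to\infty}\pi_F^\notinvest(x;1,-\infty)=1-\frac1n-\sigma_F\eta\le 0,
\]
and strict monotonicity in $x$ forces $\pi_F^\notinvest(\cdot;1,-\infty)<0$ everywhere, giving $\xnlow^1=\infty$. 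This is the substantive content: the exponential tail of a log-concave noise produces the ``boost'' $\sigma_F\eta$ in the worst-case limit expectation (Lemma~\ref{app_c_exp}), and the threshold $(n-1)/(n\eta)$ is exactly what makes that boost dominate the miscoordination loss $(n-1)/n$.

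The remainder is a routine induction propagating the collapse. Once $\xiup^1=\xilow^1=-\infty$, we get $\thetaup^2=\br_L(-\infty)=0=\thetalow^2$, and since $\xiup^{k-1}=\xilow^{k-1}=-\infty$ forces $\thetaup^k=\thetalow^k=0$, the leader's bounds stay at $0$ for all $k\ge2$, giving (a) and (b). Feeding $\thetalow^k=0$ and $\xiup^{k-1}=-\infty$ back in yields $\pi_F^\invest(x;0,-\infty)=\EE_{\theta\sim\Psi^\invest(\cdot;x,0)}[\theta]>0$ (its limit is $\sigma_F\eta>0$), so $\xiup^k=-\infty$ for all $k\ge1$, which is (d); and $\pi_F^\notinvest(x;0,\infty)=\EE_{\theta\sim\Psi^\notinvest(\cdot;x,0)}[\theta]-1<0$ keeps $\xnlow^k=\infty$ for $k\ge1$, which is (e). The monotonicity of $\br_F^h$ in both arguments (Lemma~\ref{app_c_fol_payoff}) underwrites each inductive comparison. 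I expect no serious obstacle beyond the Round-1 limit computation above; the only care needed is bookkeeping the coupling between the leader and follower recursions, which the collapse $\thetalow^k=0$, $\xilow^k=-\infty$ conveniently decouples.
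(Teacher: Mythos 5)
Your proposal is correct and follows essentially the same route as the paper's proof: the same six coupled best-response recursions, the same Round-1 limit computations via Lemmas \ref{app_c_exp} and \ref{app_c_fol_payoff} (with the hypothesis $\sigma_F \geq (n-1)/(n\eta)$ used exactly where the paper uses it, to force $\xiup^1 = -\infty$ and $\xnlow^1 = \infty$), followed by the Round-2 collapse $\thetalow^2 = \thetaup^2 = 0$ and a routine induction. Your only deviation---establishing parts (c) and (f) uniformly for all $k$ rather than round by round---is a harmless reorganization of the paper's argument.
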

\begin{proof}
\textit{Round $k = 1$}: First note that the best-case and worst-case payoffs to leader $\theta$ are given by
\[
\pi_L(\theta; \xilow^0) = \theta - F \left( \frac{\xilow^0 - \theta}{\sigma_F} \right) = \theta
\]
and 
\[
\pi_L(\theta; \xiup^0) = \theta - F \left( \frac{\xiup^0 - \theta}{\sigma_F} \right) = \theta -1,
\]
respectively.
This implies that $\thetalow^1 = 0$ and $\thetaup^1 = 1$. That is, it is dominant for all leader types below $\thetalow^1$ to take $a_L = \notinvest$ and dominant for all leader types above $\thetaup^1$ to choose $a_L = \invest$.

Given $\thetalow^1$ and $\thetaup^1$. By Lemma \ref{app_c_fol_payoff}, follower $x$'s best-case payoff under $h = \invest$ is
\[
\pi_F^\invest(x; \thetaup^1, \xilow^0) = \EE_{\theta \sim \Psi^\invest(\cdot; x, \thetaup^1)}\left[\theta - \frac{n-1}{n} F\left( \frac{\xilow^0 - \theta}{\sigma_F} \right) \right] = \EE_{\theta \sim \Psi^\invest(\cdot; x, \thetaup^1)} [\theta].
\]
Since $\lim_{x \to -\infty} \EE_{\theta \sim \Psi^\invest(\cdot; x, \thetaup^1)} [\theta] = \sigma_F \eta + 1 > 0$ by Lemma \ref{app_c_exp}, we have $\xilow^1 = -\infty$, meaning that there is no follower type for whom action $\invest$ is strictly dominated. Analogously, the worst-case payoff to follower $x$ yields
\begin{align*}
  \pi_F^\invest(x; \thetalow^1, \xiup^0) & = \EE_{\theta \sim \Psi^\invest(\cdot; x, \thetalow^1)}\left[\theta - \frac{n-1}{n} F\left( \frac{\xiup^0 - \theta}{\sigma_F} \right) \right] \\
  & = \EE_{\theta \sim \Psi^\invest(\cdot; x, \thetalow^1)} [\theta] - \frac{n-1}{n},
\end{align*}
which has the limit of $\sigma_F \eta - (n-1)/n$ as $x \to -\infty$. But since $\sigma_F \geq (n-1)/(n\eta)$, it follows that
\[
\lim_{x \to -\infty} \pi_F^\invest(x; \thetalow^1, \xiup^0) = \sigma_F \eta - \frac{n-1}{n} \geq 0.
\]
Thus, $\xiup^1 = -\infty$ and it is dominant for all follower types to take action $\invest$.

Under $h = \notinvest$, the worst-case payoff to follower $x$ is
\begin{align*}
    \pi_F^\notinvest(x; \thetalow^1, \xnup^0) & = \EE_{\theta \sim \Psi^\notinvest(\cdot; x, \thetalow^1)} \left[\theta - \frac{n-1}{n} F\left( \frac{\xnup^0 - \theta}{\sigma_F} \right) \right] - \frac{1}{n} \\ 
    & = \EE_{\theta \sim \Psi^\notinvest(\cdot; x, \thetalow^1)} [\theta] - 1.
\end{align*}
By Lemma \ref{app_c_exp}, we have
\[
\lim_{x \to \infty} \pi_F^\notinvest(x; \thetalow^1, \xnup^0) = - \sigma_F \eta - 1 < 0.
\]
Therefore $\xnup^1 = \infty$; i.e., there exists no follower type for whom $\invest$ is a dominant action. Follower $x$'s best-case payoff exhibits 
\begin{align*}
    \pi_F^\notinvest(x; \thetaup^1, \xnlow^0) & = \EE_{\theta \sim \Psi^\notinvest(\cdot; x, \thetaup^1)} \left[\theta - \frac{n-1}{n} F\left( \frac{\xnlow^0 - \theta}{\sigma_F} \right) \right] - \frac{1}{n} \\
    & = \EE_{\theta \sim \Psi^\notinvest(\cdot; x, \thetaup^1)} [\theta] - \frac{1}{n}
\end{align*}
and
\begin{align*}
    \lim_{x \to \infty} \pi_F^\notinvest(x; \thetaup^1, \xnlow^0) = -\sigma_F \eta + 1 - \frac{1}{n} \leq -\left(\frac{n-1}{n \eta}\right) \eta + \frac{n-1}{n} = 0. 
\end{align*}
Thus we have $\xnlow^1 = \infty$; i.e., $\notinvest$ is a dominant action for all follower types.

\textit{Round $k = 2$}: Given $\xilow^1 = \xiup^1 = -\infty$, leader $\theta$'s worst-case and best-case payoffs coincide:
\begin{equation*}
    \pi_L(\theta; \xiup^1) = \theta - F \left( \frac{\xiup^1 - \theta}{\sigma_F} \right) = \theta.
\end{equation*}
This immediately implies that $\thetalow^2 = \thetaup^2 = 0$.

Given $\thetalow^2$ and $\thetaup^2$, follower $x$'s worst-case and best-case payoffs, under history $h = \invest$, also coincide because $\thetalow^2 = \thetaup^2 = 0$ and $\xilow^1 = \xiup^1 = -\infty$. Moreover, by Lemma \ref{app_c_exp},
\begin{equation*}
    \lim_{x \to -\infty} \pi_F^\invest(x; \thetaup^2, \xilow^2) = \sigma_F \eta > 0.
\end{equation*}
Thus, $\xilow^2 = \xiup^2 = -\infty$. One can show analogously that $\xnlow^2 = \xnup^2 = \infty$ because $\thetalow^2 = \thetaup^2 = 0$ and $\xnlow^1 = \xnup^1 = \infty$.

Now we can conclude that (a)-(f) hold true via a simple induction argument.
\end{proof}

\begin{corollary}
If $\sigma_F \geq (n\eta)^{-1}(n-1)$, then the game has a unique $\Delta$-rationalizable strategy profile in which all followers imitate the leader's choice of action.
\end{corollary}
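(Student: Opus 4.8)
The plan is to read the corollary directly off Proposition \ref{app_c_rat_seq}, which has already located all six limiting dominance bounds under exactly the hypothesis $\sigma_F \geq (n\eta)^{-1}(n-1)$. The crucial feature is that under this hypothesis every one of the six sequences stabilizes after at most two rounds of elimination, so that $R_L^\infty = R_L^2$ and $R_{F,\,j}^\infty = R_{F,\,j}^2$; no limiting argument is required, and one simply identifies the type-strategy pairs that survive round two.

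First I would record the elimination structure, which is identical to the one underlying Proposition \ref{prop_unique_rat}. Because Lemmas \ref{app_c_exp} and \ref{app_c_fol_payoff} guarantee that $\pi_L$ and $\pi_F^h$ are strictly monotone and cross zero exactly once, each round of the procedure deletes precisely the type-strategy pairs lying outside the current dominance bounds, and the surviving set again takes the form $R^\infty = R^0 \setminus \overline{R}^\infty$ as in Proposition \ref{prop_unique_rat}, now with the bounds replaced by the limits of the Appendix-C sequences. Concretely, the only surviving leader pairs are $(\theta, \invest)$ for $\theta > \thetaup$ and $(\theta, \notinvest)$ for $\theta \leq \thetalow$, with both actions surviving exactly for $\theta \in (\thetalow, \thetaup]$; and for a follower, only $s_j(\invest) = \invest$ survives when $x_j > \xiup$ and only $s_j(\notinvest) = \notinvest$ when $x_j < \xnlow$, with multiplicity confined to $x_j \in (\xilow, \xiup]$ under $h = \invest$ and to $x_j \in [\xnlow, \xnup)$ under $h = \notinvest$.

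Next I would substitute the values supplied by Proposition \ref{app_c_rat_seq}, namely $\thetalow = \thetaup = 0$, $\xilow = \xiup = -\infty$, and $\xnlow = \xnup = \infty$. Since $\thetalow = \thetaup$, the interval $(\thetalow, \thetaup]$ is empty, so the leader's action is pinned down: she plays $\invest$ if and only if $\theta > 0$. Under $h = \invest$ the multiplicity region $(\xilow, \xiup] = (-\infty, -\infty]$ is empty, so every follower type plays $\invest$; under $h = \notinvest$ the region $[\xnlow, \xnup) = [\infty, \infty)$ is empty, so every follower type plays $\notinvest$. Hence each follower matches the leader's action under both histories, which is precisely the imitation profile asserted, and it is unique.

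Given that Proposition \ref{app_c_rat_seq} already carries out the quantitative work, the only point needing care---rather than a genuine obstacle---is the first step: confirming that the log-concave setting inherits verbatim the complement characterization of Proposition \ref{prop_unique_rat}. This is immediate from the strict single-crossing and monotonicity established in Lemmas \ref{app_c_exp} and \ref{app_c_fol_payoff}, which make each best response single-valued and hence each round's deletion exactly the set of pairs outside the current bounds, together with the finite-round stabilization of Proposition \ref{app_c_rat_seq}. Consequently nothing substantive remains to be shown beyond reading off the surviving strategies.
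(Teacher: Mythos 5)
Your proposal is correct and matches the paper's intended argument: the corollary is stated without a separate proof precisely because it is read off Proposition \ref{app_c_rat_seq} in the way you describe, with the collapsed bounds $\thetalow = \thetaup = 0$, $\xilow = \xiup = -\infty$, $\xnlow = \xnup = \infty$ emptying every multiplicity region and pinning down the imitation profile. Your added check that the complement characterization of Proposition \ref{prop_unique_rat} carries over to the log-concave setting via Lemmas \ref{app_c_exp} and \ref{app_c_fol_payoff} is a reasonable piece of diligence that the paper leaves implicit, but it introduces nothing beyond what the paper's own structure already supplies.
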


Proposition \ref{app_c_rat_seq} says that sequences $(\thetalow^k)_{k=0}^\infty$, $(\xilow^k)_{k=0}^\infty$, and $(\xnup^k)_{k=0}^\infty$ always converge. However, it is 
worth noting that we have provided a strong sufficient condition so that a unique $\Delta$-rationalizable behavior is achieved in Round 2 (i.e., 
the other three sequences to converge in Round 2). In fact, 
if the following condition holds:
\begin{equation*}
    \max\left\{ \iota_\invest^{k}(\sigma_F), \iota_\notinvest^{k}(\sigma_F) \right\} \leq \sigma_F < \min \left\{ \iota_\invest^{k-1}(\sigma_F), \iota_\notinvest^{k-1}(\sigma_F)  \right\},
\end{equation*}
where
\begin{equation*}
\iota_\invest^k(\sigma_F) = \frac{n-1}{n\eta} F\left( \frac{\xiup^{k-1}}{\sigma_F} \right)
\end{equation*}
and
\begin{equation*}
\iota_\notinvest^k(\sigma_F) = \frac{1}{\eta}\left[ \thetaup^k - \frac{1}{n} - \frac{n-1}{n} F \left( \frac{\xnlow^{k-1} - \thetaup^k}{\sigma_F} \right) \right],
\end{equation*}
then the six sequences will converge to the unique $\Delta$-rationalizable profile in Round $k + 1,~ k \geq 1$.
The next proposition shows that there exist multiple $\Delta$-rationalizable profiles when $\sigma_F \to 0$.

\begin{proposition}
    In the limit as $\sigma_F \to 0$, we have $\thetaup \to (n-1)/(2n)$, $\xiup \to (n-1)/(2n)$, and $\xnlow \to \infty$.
\end{proposition}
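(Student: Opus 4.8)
The plan is to imitate the proof of Proposition~\ref{prop_limit} almost line by line, replacing the Gaussian pair $(\phi,\Phi)$ by the log-concave pair $(f,F)$ and the auxiliary quantities $x+\lambda(x)$ and $\lambda(\cdot)$ by the functions $\delta$ and $\varsigma$ from the proof of Lemma~\ref{app_c_exp}, where $\delta(u)=\int_{-\infty}^{u}F(t)\,\dd t/F(u)$ and $\varsigma(u)=\int_{u}^{\infty}[1-F(t)]\,\dd t/[1-F(u)]$, with $\delta(u)\to\eta$ as $u\to-\infty$, $\delta(u)\to\infty$ as $u\to\infty$, $\varsigma(u)\to\infty$ as $u\to-\infty$, and $\varsigma(u)\to\eta$ as $u\to\infty$. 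First I would record the fixed-point equations the limits satisfy. The six monotone sequences converge (by the same best-response monotonicity as in Lemma~\ref{lemma_rat_seq}) to limits solving the analogs of \eqref{leader_upper}, \eqref{follower_invest_upper}, and \eqref{follower_notinvest_lower}, and $\xiup$ stays bounded above by its Round-$1$ value. Repeating the change of variable $\eta=F((x-\theta)/\sigma_F)$ from the derivation of \eqref{rank_belief} (which uses only that $F$ is the noise CDF) gives the rank-belief identities $R^\invest(x;z)=\tfrac12 F((x-z)/\sigma_F)$ and $R^\notinvest(x;z)=\tfrac12[1+F((x-z)/\sigma_F)]$. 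Substituting these with Lemma~\ref{app_c_exp} and $\thetalow=0$ yields
\[
\thetaup=F\!\left(\frac{\xiup-\thetaup}{\sigma_F}\right),\qquad \sigma_F\,\delta\!\left(\frac{\xiup}{\sigma_F}\right)=\frac{n-1}{2n}F\!\left(\frac{\xiup}{\sigma_F}\right),
\]
and, writing $w=(\xnlow-\thetaup)/\sigma_F$,
\[
-\sigma_F\,\varsigma(w)+\thetaup-\frac{n-1}{2n}\big[1+F(w)\big]-\frac1n=0.
\]

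For the claim $\xiup\to(n-1)/(2n)$ I would first establish, exactly as in the first part of the proof of Proposition~\ref{prop_unique_rat} (the analysis of \eqref{app_eq_fb}), that $\xiup$ is the \emph{largest} root of the second equation and is strictly positive: for small $\sigma_F$ the difference $\sigma_F\delta(x/\sigma_F)-\tfrac{n-1}{2n}F(x/\sigma_F)$ is negative at $x=0$ (since $\sigma_F\delta(0)\to0$ while $\tfrac{n-1}{2n}F(0)=\tfrac{n-1}{4n}>0$) and tends to $+\infty$ as $x/\sigma_F\to\infty$, so its largest zero lies in $(0,\infty)$. Writing $u=\xiup/\sigma_F>0$ and using $\int_{-\infty}^{u}t f(t)\,\dd t\to0$ (mean zero), so that $\sigma_F\delta(u)=\xiup-\sigma_F\!\int_{-\infty}^{u}t f(t)\,\dd t/F(u)\to\lim\xiup$ whenever $u\to\infty$, I would run the three-case argument of Step~1 of Proposition~\ref{prop_limit}: (i) $\xiup\to0,\ u\to k<\infty$ forces $0=\tfrac{n-1}{2n}F(k)\neq0$; (ii) $\xiup\to0,\ u\to\infty$ forces $0=\tfrac{n-1}{2n}$; (iii) $\xiup\to\aleph>0$ forces $\aleph=\tfrac{n-1}{2n}$. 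Only (iii) survives. Steps~2 and~3 then copy Proposition~\ref{prop_limit}: for $\thetaup$, a limit $\tau<(n-1)/(2n)$ drives the argument of $F$ to $+\infty$, giving $\tau=1$, while $\tau>(n-1)/(2n)$ drives it to $-\infty$, giving $\tau=0$, both absurd; and for $\xnlow$, I assume a finite limit $\aleph$, rearrange the displayed equation as $\xnlow-\sigma_F\!\int_{w}^{\infty}tf(t)\,\dd t/[1-F(w)]-\tfrac{n-1}{2n}F(w)=\tfrac{n+1}{2n}$, and split on $w$. When $w\to+\infty$, using $\sigma_F\!\int_w^\infty tf/[1-F]=(\xnlow-\thetaup)+\sigma_F\varsigma(w)\to\aleph-\tfrac{n-1}{2n}$, the left side collapses to $0\neq\tfrac{n+1}{2n}$; when $w\to-\infty$ the left side $\to\aleph<\tfrac{n-1}{2n}<\tfrac{n+1}{2n}$; and when $w\to k$ the left side $\to\tfrac{n-1}{2n}[1-F(k)]\le\tfrac{n-1}{2n}<\tfrac{n+1}{2n}$. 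Each is a contradiction, so $\xnlow\to\infty$.

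The main obstacle, and the only place where the argument genuinely departs from the Gaussian proof, is the generality of $F$: for the Gaussian $\eta=0$, whereas for admissible noises such as the Laplace or logistic $\eta>0$, so the limits can no longer be read off from closed forms and must be extracted from the integral representations of $\delta$ and $\varsigma$ together with the tail facts ($\int t f=0$ and light tails). The saving observation is that in every surviving limiting case the relevant argument $u$ (resp.\ $w$) tends to $+\infty$, where the correction integral $\int_{-\infty}^{u}t f/F\to0$; the value $\eta$ enters only through terms premultiplied by $\sigma_F\to0$, or in cases forced to $-\infty$ that are excluded. Consequently, establishing $\xiup>0$—i.e.\ ruling out $u\to-\infty$, where the $\xiup$-equation degenerates to the vacuous identity $0=0$—is the one extra step required beyond transcribing Proposition~\ref{prop_limit}, and it is supplied by the largest-root/sign argument above.
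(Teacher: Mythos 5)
Your proposal is correct and takes essentially the same approach as the paper, whose own proof of this proposition is the single line ``identical to that of Proposition~\ref{prop_limit}''---i.e., precisely the three-step contradiction argument you transcribe, with the pair $(\delta,\varsigma)$ standing in for the Gaussian quantities $x+\lambda(x)$ and $\lambda(\cdot)$ and with the same rank-belief identities, which indeed survive the change of variable for general symmetric $F$. The two points you single out---that $\eta=\lim_{x\to-\infty}F(x)/f(x)$ can be strictly positive (Laplace, logistic) but only ever enters multiplied by $\sigma_F\to 0$ or in excluded cases, and that $\xiup$ is finite and strictly positive for small $\sigma_F$ (your largest-root sign argument, which in the Gaussian case was imported from the proof of Proposition~\ref{prop_unique_rat} and has no stated analog in Appendix~C)---are exactly the details the paper's one-line proof leaves implicit, and you fill both in correctly.
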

\begin{proof}
    The proof is identical to that of Proposition \ref{prop_limit}.
\end{proof}

\end{document}